\newenvironment{jfnote}{ \bgroup \color{blue} }{\egroup}
\newcommand{\red}{\color[rgb]{1.0,0.2,0.2}} 
\newcommand{\oldStuff}[1]{}
\newcommand{\Coord}{{\rm Coord}}
\newcommand{\og}{{\scriptscriptstyle \le}}
\newcommand{\Bg}{{\scalebox{1.0}{$\!\scriptscriptstyle /\!B$}}}
\DeclareMathOperator{\SHom}{\mathscr{H}\text{\kern -3pt {\calligra\large om}}\,}
\DeclareMathOperator{\ViSu}{VisSub}
\newcommand{\naturals}{{\mathbb N}}
\newcommand{\Eor}{E^{\mathrm{or}}}
\newcommand{\mec}[1]{{\bf #1}}	
\newcommand{\bec}[1]{{\boldsymbol #1}}	
\newcommand{\specnew}{\Spec^{\mathrm{new}}}
\newcommand{\Edir}{E^{\mathrm{dir}}}
\theoremstyle{plain}
\newtheorem{theorem}{Theorem}[section]
\newtheorem{lemma}[theorem]{Lemma}
\newtheorem{corollary}[theorem]{Corollary}
\theoremstyle{definition}
\newtheorem{definition}[theorem]{Definition}
\newtheorem{xca}{Exercise}[section]
\newcommand{\isom}{\simeq} 
\newcommand{\ignore}[1]{}
\newcommand{\reals}{{\mathbb R}}
\newcommand{\integers}{{\mathbb Z}}
\newcommand{\complex}{{\mathbb C}}
\newcommand\EE{\mathbb{E}}
\newcommand\II{\mathbb{I}}
\DeclareMathAlphabet{\mathcal}{OMS}{cmsy}{m}{n}
\newcommand\cC{\mathcal{C}}
\newcommand\cP{\mathcal{P}}
\newcommand\cR{\mathcal{R}}
\newcommand\cS{\mathcal{S}}
\newcommand\cT{\mathcal{T}}
\DeclareMathOperator{\Prob}{Prob}
\DeclareMathOperator{\VLG}{VLG}
\DeclareMathOperator{\Line}{Line}
\DeclareMathOperator{\SNBC}{SNBC}
\DeclareMathOperator{\snbc}{snbc}
\def\from{\colon}
\def\isom{\simeq}
\def\eqdef{\overset{\text{def}}{=}}
\DeclareMathOperator{\ord}{ord}
\DeclareMathOperator{\Spec}{Spec}
\DeclareRobustCommand
\p@\hbox{.}\mkern2mu\raise7\p@\hbox{.}\mkern1mu}}
\newcommand\xhookrightarrow[2][]{\ext@arrow 0062{\hookrightarrowfill@}{#1}{#2}}
\def\hookrightarrowfill@{\arrowfill@\lhook\relbar\rightarrow}
\newcommand{\myDeleteNote}[1]{{\red #1}}
\begin{document}

\title[Relativized Alon Conjecture VI] 
{On the Relativized Alon Second Eigenvalue
Conjecture VI: Sharp Bounds for Ramanujan Base Graphs}

\author{Joel Friedman}
\address{Department of Computer Science,
        University of British Columbia, Vancouver, BC\ \ V6T 1Z4, CANADA}
\curraddr{}
\email{{\tt jf@cs.ubc.ca}}
\thanks{Research supported in part by an NSERC grant.}

\author{David Kohler}
\address{Department of Mathematics,
        University of British Columbia, Vancouver, BC\ \ V6T 1Z2, CANADA}
\curraddr{422 Richards St, Suite 170, Vancouver BC\ \  V6B 2Z4, CANADA}
\email{{David.kohler@a3.epfl.ch}}
\thanks{Research supported in part by an NSERC grant.}

%
\date{\today}

\subjclass[2010]{Primary 68R10}

\keywords{}

\begin{abstract}

This is the sixth in a series of articles devoted to showing that a typical
covering map of large degree to a fixed, regular graph has its new adjacency
eigenvalues within the bound conjectured by Alon for random regular graphs.

In this article we show that if the fixed graph is regular Ramanujan,
then the {\em algebraic power} of the model of random
covering graphs is $+\infty$.  This
implies a number of
interesting results, such as (1) one obtains the upper
and lower bounds---matching to within a multiplicative constant---for
the probability that a random covering
map has some new adjacency eigenvalue outside the Alon bound, and
(2) with probability smaller than any negative power of the
degree of the covering map, 
some new eigenvalue fails to be within the Alon bound
without the covering map containing one of finitely many ``tangles''
as a subgraph (and this tangle containment event has low probability).

\end{abstract}

\maketitle
\setcounter{tocdepth}{3}
\tableofcontents

\newcommand{\sePrelimProofs}{17}


\section{Introduction}

This paper is the sixth in a series of six articles whose main
results are to prove a relativized version of Alon's Second
Eigenvalue Conjecture,
conjectured in \cite{friedman_relative}, in the case where
the base graph is regular.

The relativized Alon conjecture for regular base graphs was proven
in Article~V (i.e., the fifth
article in this series).
In this article we give a sharper version of the
relativized Alon conjecture
that holds for all of our 
{\em basic models} of a random covering map of degree $n$ to a fixed
base graph, $B$, provided that $B$ is $d$-regular and Ramanujan.
Roughly speaking, for a fixed such $B$,
and for a random covering map $G\to B$ of degree $n$,
for $n$ large
we determine---to within a constant
factor independent of $n$---the probability that this map fails to be
a relative expander, in the sense that its new spectral radius 
is larger than the
bound conjectured by Alon for random $d$-regular graphs;
this probability is proportional to a negative power of $n$
which we call the {\em tangle power} of the model.

Curiously, in \cite{friedman_alon} such upper and lower bounds
were established
for random $d$-regular graphs formed from $d/2$ permutations
(for $d$ even) for all $d$ except those that 
are one more than a perfect odd square (e.g., $10,26,50,\ldots$).
However, the upper and lower bounds for these exceptional $d$ differed
by a factor of $n$ in \cite{friedman_alon}, 
and the results in this article close this bound
(since such random graphs are included in our
{\em basic models}, where the base graph, $B$, is a bouquet of whole-loops and
easily seen to be Ramanujan).

In Article~V we proved that the probability that a random
covering graph has a new eigenvalue outside the Alon bound
is bounded above proportional to
$n^{-\tau_1}$ and below proportional to $n^{-\tau_2}$, where
$$
\tau_1 = \min(\tau_{\rm tang},\tau_{\rm alg}), \quad
\tau_2 = \min(\tau_{\rm tang},\tau_{\rm alg}+1),
$$
where $\tau_{\rm tang}$ is a positive integer 
and $\tau_{\rm alg}$ is either a positive integer or $+\infty$
(both depending on the base graph, $B$, and the model
of random covering map).
However,
the integer $\tau_{\rm alg}$ appears to be very difficult to compute
directly:
there is---in principle---a finite algorithm to determine if
$\tau_{\rm alg}$ is larger than any given integer,
but (1) we know of no finite algorithm to check that $\tau_{\rm alg}=+\infty$,
and (2) when $\tau_{\rm alg}$ is larger than $1$ or $2$ the
direct computation of $\tau_{\rm alg}$ seems quite laborious.
On the other hand, the integer $\tau_{\rm tang}$ has a simple
meaning and is much easier to compute in practice.

In this article we show that for all of our {\em basic models}
of random covering maps of a $d$-regular Ramanujan
graph, $B$, $\tau_{\rm alg}=+\infty$;
the method of this proof goes back to \cite{friedman_random_graphs},
which uses the fact that $\tau_{\rm alg}$---at least when $B$ is
Ramanujan---is the order of the first
coefficient of an asymptotic expansion involving traces that 
grows as an exponential function with base $(d-1)^{1/2}$.
So rather than compute these asymptotic expansions directly, we use
the existence of these coefficients and apply other facts about
random graphs---namely Alon's notion of {\em magnification}---and
standard counting arguments to infer that the 
growth rates
of these asymptotic expansion coefficients
are strictly less than $(d-1)^{1/2}$.
As a consequence, we prove that 
$\tau_{\rm alg}=+\infty$ (without directly computing
asymptotic expansion coefficients); hence to determine
$\tau_1$ and $\tau_2$ above we need compute only $\tau_{\rm tang}$.

Once we formally define $\tau_{\rm alg}$, it becomes clear that 
$\tau_{\rm alg}=+\infty$ implies something quite strong for 
a $d$-regular $B$: namely, the
probability of having a new eigenvalue outside the Alon bound---namely,
larger
than $2(d-1)^{1/2}+\epsilon$ in absolute value
for any fixed $\epsilon>0$---can
be made smaller than any positive power of $n$, provided that
we discard the covering maps that contain certain {\em tangles}
(which are graph theoretically
local events that occur with probability 
proportional to $n^{-\tau_{\rm tang}}$).

Beyond our theorems in this article,
we conjecture that for our ``basic models'' of covering maps to
a fixed graph $B$ (regular or not), $\tau_{\rm alg}=+\infty$
($\tau_{\rm alg}$ and $\tau_{\rm tang}$ are defined for any
$B$, regular or not).

The rest of this article is organized as follows.
In Section~\ref{se_defs_review}, we review the definitions we will
need in this article; for more details, see Article~I in this series.
In Section~\ref{se_main_Ramanujan} we state the main theorems in this
article, and quote the results we will need from Article~V.
In Section~\ref{se_new_magnifiers} we review Alon's notion of 
{\em magnification} and introduce a variant of this notion,
{\em pseudo-magnification}, that will be useful to us.
In Section~\ref{se_pseudomag_nohalf} we prove that our basic
models of random covering maps to a base graph $B$ are
pseudo-magnifying in the case where $B$ has no half-loops;
this proof is computationally simpler that the general case,
although it illustrates all the main ideas.
In Section~\ref{se_pseudomag_with_half} we prove pseudo-magnification
for our basic models over general $B$.
In Section~\ref{se_proof_Ram} we use the pseudo-magnification
results to prove our main theorem, that $\tau_{\rm alg}=+\infty$
if $B$ is regular and Ramanujan.
In this case the probability of a cover having new adjacency eigenvalues
of absolute value outside the Alon bound is controlled by
$\tau_{\rm tang}$; we devote
Section~\ref{se_tau_tangle} to proving estimates on
$\tau_{\rm tang}$ for our basic models when $B$ is $d$-regular.

\section{Review of the Main Definitions}
\label{se_defs_review}

We refer the reader to Article~I for the definitions used in this article,
the motivation of such definitions, and an appendix there that lists all the
definitions and notation.
In this section we briefly review these definitions and notation. 

\subsection{Basic Notation and Conventions}
\label{su_very_basic}

We use $\reals,\complex,\integers,\naturals$
to denote, respectively, the
the real numbers, the complex numbers, the integers, and positive
integers or
natural numbers;
we use $\integers_{\ge 0}$ ($\reals_{>0}$, etc.)
to denote the set of non-negative
integers (of positive real numbers, etc.).
We denote $\{1,\ldots,n\}$ by $[n]$.

If $A$ is a set, we use $\naturals^A$ to denote the set of
maps $A \to \naturals$; we will refers to its elements as
{\em vectors}, denoted in bold face letters, e.g., $\mec k\in \naturals^A$
or $\mec k\from A\to\naturals$; we denote its {\em component}
in the regular face equivalents, i.e., for $a\in A$,
we use $k(a)\in\naturals$ to denote
the $a$-component of $\mec k$.
As usual, $\naturals^n$ denotes $\naturals^{[n]}=\naturals^{\{1,\ldots,n\}}$.
We use similar conventions for $\naturals$ replaced by $\reals$,
$\complex$, etc.

If $A$ is a set, then $\# A$ denotes the cardinality of $A$.
We often denote a set with all capital letters, and its cardinality
in lower case letters; for example,
when we define
$\SNBC(G,k)$, we will write
$\snbc(G,k)$ for $\#\SNBC(G,k)$.

If $A'\subset A$ are sets, then $\II_{A'}\from A\to\{0,1\}$ (with $A$
understood) denotes
the characteristic function of $A'$, i.e., $\II_{A'}(a)$ is $1$ if
$a\in A'$ and otherwise is $0$;
we also write $\II_{A'}$ (with $A$ understood) to mean $\II_{A'\cap A}$
when $A'$ is not necessarily a subset of $A$.

All probability spaces are finite; hence a probability space
is a pair $\cP=(\Omega,P)$ where $\Omega$ is a finite set and
$P\from \Omega\to\reals_{>0}$ with $\sum_{\omega\in\Omega}P(\omega)=1$;
hence an {\em event} is any subset of $\Omega$.
We emphasize that $\omega\in\Omega$ implies that $P(\omega)>0$ with
strict inequality; we refer to the elements of $\Omega$ as
the atoms of the probability space.
We use $\cP$ and $\Omega$ interchangeably when $P$ is
understood and confusion is unlikely.

A {\em complex-valued random variable} on $\cP$ or $\Omega$
is a function $f\from\Omega\to\complex$, and similarly for real-,
integer-, and natural-valued random variable; we denote its
$\cP$-expected value by
$$
\EE_{\omega\in\Omega}[f(\omega)]=\sum_{\omega\in\Omega}f(\omega)P(\omega).
$$
If $\Omega'\subset\Omega$ we denote the probability of $\Omega'$ by
$$
\Prob_{\cP}[\Omega']=\sum_{\omega\in\Omega'}P(\omega')
=
\EE_{\omega\in\Omega}[\II_{\Omega'}(\omega)].
$$
At times we write $\Prob_{\cP}[\Omega']$ where $\Omega'$ is
not a subset of $\Omega$, by which we mean
$\Prob_{\cP}[\Omega'\cap\Omega]$.

\subsection{Graphs, Our Basic Models, Walks}

A {\em directed graph},
or simply a {\em digraph},
is a tuple $G=(V_G,\Edir_G,h_G,t_G)$ consisting of sets
$V_G$ and $\Edir_G$ (of {\em vertices} and {\em directed edges}) and maps
$h_G,t_G$ ({\em heads}
and {\em tails}) $\Edir_G\to V_G$.
Therefore our digraphs can have multiple edges and
self-loops (i.e., $e\in\Edir_G$ with $h_G(e)=t_G(e)$).
A {\em graph} is a tuple $G=(V_G,\Edir_G,h_G,t_G,\iota_G)$
where $(V_G,\Edir_G,h_G,t_G)$ is a digraph and
$\iota_G\from \Edir_G\to \Edir_G$ is an involution with
$t_G\iota_G=h_G$;
the {\em edge set} of $G$, denoted $E_G$, is the
set of orbits of $\iota_G$, which (notation aside)
can be identified with $\Edir_G/\iota_G$,
the set of equivalence classes of
$\Edir_G$ modulo $\iota_G$;
if $\{e\}\in E_G$ is a singleton, then necessarily $e$ is a self-loop
with $\iota_G e =e $, and
we call $e$ a {\em half-loop}; other elements of $E_G$ are sets
$\{e,\iota_G e\}$ of size two, i.e., with $e\ne\iota_G e$, and for such $e$
we say that $e$ (or, at times, $\{e,\iota_G e\}$)
is a {\em whole-loop} if
$h_G e=t_G e$ (otherwise $e$ has distinct endpoints).

Hence these definitions allow our graphs to have multiple edges and 
two types of self-loops---whole-loops
and half-loops---as in
\cite{friedman_geometric_aspects,friedman_alon}.
The {\em indegree} and {\em outdegree} of a vertex in a digraph is
the number of edges whose tail, respectively whose head, is the vertex;
the {\em degree} of a vertex in a graph is its indegree (which equals
its outdegree) in the underlying digraph; 
therefore a whole-loop about a vertex contributes $2$
to its degree, whereas a half-loop contributes $1$.

An {\em orientation} of a graph, $G$, is a choice $\Eor_G\subset\Edir_G$
of $\iota_G$ representatives; i.e., $\Eor_G$ contains every half-loop, $e$,
and one element of each two-element set $\{e,\iota_G e\}$.

A {\em morphism $\pi\from G\to H$} of directed graphs is a pair
$\pi=(\pi_V,\pi_E)$ where $\pi_V\from V_G\to V_H$ and
$\pi_E\from \Edir_G\to\Edir_H$ are maps that intertwine the heads maps
and the tails maps of $G,H$ in the evident fashion;
such a morphism is {\em covering} (respectively, {\em \'etale},
elsewhere called an {\em immersion}) if for each $v\in V_G$,
$\pi_E$ maps those directed edges whose head is $v$ bijectively
(respectively, injectively) to those whose head is $\pi_V(v)$,
and the same with tail replacing head.
If $G,H$ are graphs, then a morphism $\pi\from G\to H$ is a morphism
of underlying directed graphs where $\pi_E\iota_G=\iota_H\pi_E$;
$\pi$ is called {\em covering} or {\em \'etale} if it is so as a morphism
of underlying directed graphs.
We use the words {\em morphism} and {\em map} interchangeably.

A walk in a graph or digraph, $G$, is an alternating sequence
$w=(v_0,e_1,\ldots,e_k,v_k)$ of vertices and directed edges
with $t_Ge_i=v_{i-1}$ and $h_Ge_i=v_i$ for $i\in[k]$;
$w$ is {\em closed} if $v_k=v_0$;
if $G$ is a graph,
$w$ is {\em non-backtracking}, or simply {\em NB},
if $\iota_Ge_i\ne e_{i+1}$
for $i\in[k-1]$, and {\em strictly 
non-backtracking closed}, or simply {\em SNBC},
if it is closed, non-backtracking, and 
$\iota_G e_k\ne e_1$.
The {\em visited subgraph} of a walk, $w$, in a graph $G$, denoted
$\ViSu_G(w)$ or simply
$\ViSu(w)$, is the smallest subgraph of $G$ containing all the vertices
and directed edges of $w$;
$\ViSu_G(w)$ generally depends on $G$, i.e., $\ViSu_G(w)$ cannot be inferred
from the sequence $v_0,e_1,\ldots,e_k,v_k$ alone without knowing
$\iota_G$.

The adjacency matrix, $A_G$,
of a graph or digraph, $G$, is defined as usual (its $(v_1,v_2)$-entry
is the number of directed edges from $v_1$ to $v_2$);
if $G$ is a graph on $n$ vertices, 
then $A_G$ is symmetric and we order its eigenvalues (counted with
multiplicities) and denote them
$$
\lambda_1(G)\ge \cdots \ge \lambda_n(G).
$$
If $G$ is a graph, its
Hashimoto matrix (also called the non-backtracking matrix), $H_G$,
is the adjacency matrix of the {\em oriented line graph} of $G$,
$\Line(G)$,
whose vertices are $\Edir_G$ and whose directed edges
are the subset of $\Edir_G\times\Edir_G$ consisting of pairs $(e_1,e_2)$
such that $e_1,e_2$ form the
directed edges of a non-backtracking walk (of length two) in $G$
(the tail of $(e_1,e_2)$ is $e_1$, and its head $e_2$);
therefore $H_G$
is the square matrix indexed on $\Edir_G$, whose $(e_1,e_2)$ entry
is $1$ or $0$ according to, respectively, whether or not
$e_1,e_2$ form a non-backtracking walk
(i.e., $h_G e_1=t_G e_2$ and $\iota_G e_1\ne e_2$).
We use $\mu_1(G)$ to denote the Perron-Frobenius eigenvalue of 
$H_G$, and use $\mu_i(G)$ with $1<i\le \#\Edir_G$ to denote the
other eigenvalues of $H_G$ (which are generally complex-valued)
in any order.

If $B,G$ are both digraphs,
we say that $G$ is a {\em coordinatized graph over $B$
of degree $n$}
if
\begin{equation}\label{eq_coord_def}
V_G=V_B\times [n], \quad\Edir_G=\Edir_B\times[n], \quad
t_G(e,i)=(t_B e,i),\quad
h_G(e,i)=(h_Be,\sigma(e)i)
\end{equation} 
for some map
$\sigma\from\Edir_B\to\cS_n$, where $\cS_n$ is the group
of permutations on $[n]$; we call $\sigma$ (which is uniquely determined by
\eqref{eq_coord_def}) {\em the permutation assignment
associated to $G$}.
[Any such $G$ comes with a map $G\to B$ given by 
``projection to the first component of
the pair,'' and this map is a covering map of degree $n$.]
If $B,G$ are graphs, we say that a graph $G$ is a 
{\em coordinatized graph over $B$
of degree $n$} if \eqref{eq_coord_def} holds and also
\begin{equation}\label{eq_coord_def_graph}
\iota_G(e,i) = \bigl( \iota_B e,\sigma(e)i \bigr) ,
\end{equation} 
which implies that 
\begin{equation}\label{eq_sigma_iota_B}
(e,i)=\iota_G\iota_G(e,i) = \bigl( e, \sigma(\iota_B e)\sigma(e)i \bigr)
\quad\forall e\in\Edir_B,\ i\in[n],
\end{equation}
and hence $\sigma(\iota_B e)=\sigma(e)^{-1}$;
we use $\Coord_n(B)$ to denote the set of all coordinatized covers
of a graph, $B$, of degree $n$.

The {\em order} of a graph, $G$, is $\ord(G)\eqdef (\#E_G)-(\#V_G)$.
Note that a half-loop and a whole-loop each contribute $1$ to 
$\#E_G$ and to the order of $G$.
The {\em Euler characteristic} of a graph, $G$, is
$\chi(G)\eqdef (\# V_G) - (\#\Edir_G)/2$.
Hence $\ord(G)\ge -\chi(G)$, with equality iff $G$ has no half-loops.

If $w$ is a walk in any $G\in\Coord_n(B)$, then one easily
sees that $\ViSu_G(w)$ can be inferred
from $B$ and $w$ alone.

If $B$ is a graph without half-loops, then the {\em permutation model over
$B$} refers to the probability spaces $\{\cC_n(B)\}_{n\in\naturals}$ where
the atoms of $\cC_n(B)$ are coordinatized coverings of degree $n$
over $B$ chosen with the uniform distribution.
More generally, a {\em model} over a graph, $B$, is a collection of
probability spaces, $\{\cC_n(B)\}_{n\in N}$, 
defined for $n\in N$ where $N\subset\naturals$ is an
infinite subset, and where the atoms of each $\cC_n(B)$ are elements
of $\Coord_n(B)$.
There are a number of models related to the permutation model,
which are generalizations of the models of \cite{friedman_alon},
that we call {\em our basic models} and are defined in Article~I;
let us give a rough description.

All of {\em our basic models} are {\em edge independent}, meaning that
for any orientation $\Eor_B\subset\Edir_B$, the values of 
the permutation assignment, $\sigma$, on $\Eor_B$ are independent
of one another (of course, $\sigma(\iota_G e)=(\sigma(e))^{-1}$,
so $\sigma$ is determined by its values on any orientation
$\Eor_B$); for edge independent models, it suffices to specify
the ($\cS_n$-valued)
random variable $\sigma(e)$ for each $e$ in $\Eor_B$ or $\Edir_B$.
The permutation model can be alternatively described as the 
edge independent model that assigns a uniformly chosen permutation
to each $e\in\Edir_B$ (which requires $B$ to have no half-loops);
the {\em full cycle} (or simply {\em cyclic}) model is the same, except
that if $e$ is a whole-loop then $\sigma(e)$ is chosen uniformly
among all permutations whose cyclic structure consists of a single
$n$-cycle.
If $B$ has half-loops, then we restrict $\cC_n(B)$ either to $n$ even
or $n$ odd and for each half-loop $e\in\Edir_B$ we
choose $\sigma(e)$ as follows: if $n$ is even we choose 
$\sigma(e)$ uniformly among all perfect matchings,
i.e., involutions (maps equal to their inverse) with no fixed points;
if $n$ is odd then we choose $\sigma(e)$ uniformly among
all {\em nearly perfect matchings}, meaning involutions with one
fixed point.
We combine terms when $B$ has half-loops: for example,
the term {\em full cycle-involution} (or simply {\em cyclic-involution})
{\em model of odd degree over $B$} refers
to the model where the degree, $n$, is odd,
where $\sigma(e)$ follows the full cycle rule when $e$ is
not a half-loop, and where $\sigma(e)$ is a near perfect matching
when $e$ is a half-loop;
similarly for the {\em full cycle-involution} (or simply 
{\em cyclic-involution})
{\em model of even degree}
and the {\em permutation-involution model of even degree}
or {\em of odd degree}.

If $B$ is a graph, then a model, $\{\cC_n(B)\}_{n\in N}$, over $B$
may well have $N\ne \naturals$ (e.g., our basic models above when
$B$ has half-loops); in this case many formulas involving
the variable $n$ are only defined for $n\in N$.  For brevity, we
often do not explicitly write $n\in N$ in such formulas; 
for example we usually write
$$
\lim_{n\to\infty} \quad\mbox{to abbreviate}\quad
\lim_{n\in N,\ n\to\infty} \ .
$$
Also we often write simply $\cC_n(B)$ or $\{\cC_n(B)\}$ for
$\{\cC_n(B)\}_{n\in N}$ if confusion is unlikely to occur.

A graph is {\em pruned} if all its vertices are of degree at least
two (this differs from the more standard definition of {\em pruned} 
meaning that there are
no leaves).  If $w$ is any SNBC walk in a graph, $G$, then
we easily see that
$\ViSu_G(w)$ is necessarily pruned: i.e., any of its vertices must be
incident upon a whole-loop or two distinct edges
[note that a walk of length $k=1$ about a half-loop, $(v_0,e_1,v_1)$, by
definition, is not SNBC since $\iota_G e_k=e_1$].
It easily follows that $\ViSu_G(w)$ is contained in the graph
obtained from $G$ by repeatedly ``pruning any leaves''
(i.e., discarding any vertex of degree one and its incident edge)
from $G$.
Since our trace methods only concern (Hashimoto matrices and)
SNBC walks, it suffices to work with models $\cC_n(B)$ where
$B$ is pruned.
It is not hard to see that if $B$ is pruned and connected,
then $\ord(B)=0$ iff $B$ is a cycle,
and $\mu_1(B)>1$ iff $\chi(B)<0$;
this is formally proven in Article~III (Lemma~6.4).
Our theorems are not usually interesting unless $\mu_1(B)>\mu_1^{1/2}(B)$,
so we tend to restrict our main theorems
to the case $\mu_1(B)>1$ or, equivalently,
$\chi(B)<0$; some of our techniques work without these restrictions.

\subsection{Asymptotic Expansions}
\label{su_asymptotic_expansions}


A function $f\from\naturals\to\complex$ is a {\em polyexponential} if
it is a sum of functions $p(k)\mu^k$, where $p$ is a polynomial
and $\mu\in\complex$, with the convention
that for $\mu=0$ we understand $p(k)\mu^k$ to mean
any function that vanishes for sufficiently large $k$\footnote{
  This convention is used because then for any fixed matrix, $M$,
  any entry of $M^k$, as a function of $k$, is a polyexponential
  function of $k$; more specifically, the $\mu=0$ convention
  is due to the fact that a Jordan block of eigenvalue $0$ is
  nilpotent.
  }; we refer to the $\mu$
needed to express $f$ as the {\em exponents} or {\em bases} of $f$.
A function $f\from\naturals\to\complex$ is {\em of growth $\rho$}
for a $\rho\in\reals$ if $|f(k)|=o(1)(\rho+\epsilon)^k$ for any $\epsilon>0$.
A function $f\from\naturals\to\complex$ is $(B,\nu)$-bounded if it
is the sum of a function of growth $\nu$ plus a polyexponential function
whose bases are bounded by $\mu_1(B)$ (the Perron-Frobenius eigenvalue
of $H_B$); the {\em larger bases} of $f$ (with respect to $\nu$) are
those bases of the polyexponential function that are larger in
absolute value than $\nu$.
Moreover, such an $f$ is called {\em $(B,\nu)$-Ramanujan} if its
larger bases are all eigenvalues of $H_B$.

We say that a function $f=f(k,n)$ taking some subset of $\naturals^2$ to
$\complex$ has a 
{\em $(B,\nu)$-bounded expansion of order $r$} if for some
constant $C$ we have
\begin{equation}\label{eq_B_nu_defs_summ}
f(k,n) = c_0(k)+\cdots+c_{r-1}(k)+ O(1) c_r(k)/n^r,
\end{equation} 
whenever $f(k,n)$ is defined and $1\le k\le n^{1/2}/C$, where
for $0\le i\le r-1$, the $c_i(k)$ are $(B,\nu)$-bounded and $c_r(k)$
is of growth $\mu_1(B)$.
Furthermore, such an expansion is called {\em $(B,\nu)$-Ramanujan}
if for $0\le i\le r-1$, the $c_i(k)$ are {\em $(B,\nu)$-Ramanujan}.

Typically our functions $f(k,n)$ as in
\eqref{eq_B_nu_defs_summ} are defined for all $k\in\naturals$
and $n\in N$ for an infinite set $N\subset\naturals$ representing
the possible degrees of our random covering maps in the model
$\{\cC_n(B)\}_{n\in N}$ at hand.

\subsection{Tangles}
\label{su_tangles}

A {\em $(\ge\nu)$-tangle} is any 
connected graph, $\psi$, with $\mu_1(\psi)\ge\nu$,
where $\mu_1(\psi)$ denotes the Perron-Frobenius eigenvalue of $H_B$;
a {\em $(\ge\nu,<r)$-tangle} is any $(\ge\nu)$-tangle of order less than
$r$;
similarly for $(>\nu)$-tangles, i.e.,
$\psi$ satisfying the weak inequality $\mu_1(\psi)>\nu$,
and for $(>\nu,r)$-tangles.
We use ${\rm TangleFree}(\ge\nu,<r)$ to denote those graphs that don't
contain a subgraph that is $(\ge\nu,<r)$-tangle, and
${\rm HasTangles}(\ge\nu,<r)$ for those that do; we
never use $(>\nu)$-tangles in defining TangleFree and HasTangles,
for the technical reason
(see Article~III or Lemma~9.2 of \cite{friedman_alon}) that
for $\nu>1$ and any $r\in\naturals$
that there are only finitely many 
$(\ge\nu,<r)$-tangles, up to isomorphism, that are minimal
with respect to inclusion\footnote{
  By contrast, there are infinitely many minimal $(>\nu,<r)$-tangles
  for some values of $\nu>1$ and $r$: indeed, consider any connected pruned
  graph $\psi$, and set $r=\ord(\psi)+2$, $\nu=\mu_1(\psi)$.  Then if
  we fix two vertices in $\psi$ and let $\psi_s$ be the graph that is
  $\psi$ with an additional edge of length $s$ between these two 
  vertices, then $\psi_s$ is an $(>\nu,<r)$-tangle.  However, if
  $\psi'$ is $\psi$ with any single edge deleted, and $\psi'_s$ is 
  $\psi_s$ with this edge deleted, then one can show that
  $\mu_1(\psi'_s)<\nu$ for $s$ sufficiently large.  It follows that
  for $s$ sufficiently large, $\psi_s$ are minimal $(>\nu,<r)$-tangles.
}.

\subsection{$B$-Graphs, Ordered Graphs, and Strongly Algebraic Models}
\label{su_B_ordered_strongly_alg}

An {\em ordered graph}, $G^\og$, is a graph, $G$, endowed with an
{\em ordering}, meaning
an orientation (i.e., $\iota_G$-orbit representatives), 
$\Eor_G\subset\Edir_G$, 
and total orderings of $V_G$ and $E_G$;
a walk, $w=(v_0,\ldots,e_k,v_k)$ in a graph endows $\ViSu(w)$ with a
{\em first-encountered} ordering:
namely, $v\le v'$ if the first occurrence of $v$ comes before that
of $v'$ in the sequence $v_0,v_1,\ldots,v_k$,
similarly for $e\le e'$, and we orient each edge in the
order in which it is first traversed (some edges may be traversed
in only one direction).
We use $\ViSu^\og(w)$ to refer to $\ViSu(w)$ with this ordering.

A {\em morphism} $G^\og\to H^\og$ of ordered graphs is a morphism
$G\to H$ that respects the ordering in the evident fashion.
We are mostly interested in {\em isomorphisms} of ordered graphs;
we easily see that any isomorphism $G^\og\to G^\og$ must be the
identity morphism; it follows that if $G^\og$ and $H^\og$ are
isomorphic, then there is a unique isomorphism $G^\og\to H^\og$.

If $B$ is a graph, then a $B$-graph, $G_\Bg$, is a graph $G$ endowed 
with a map $G\to B$ (its {\em $B$-graph} structure).
A {\em morphism} $G_\Bg\to H_\Bg$ of $B$-graphs is a morphism
$G\to H$ that respects the $B$-structures in the evident sense.
An {\em ordered $B$-graph}, $G^\og_\Bg$, is a graph endowed with
both an ordering and a $B$-graph structure; a morphism of
ordered $B$-graphs is a morphism of the underlying graphs that
respects both the ordering and $B$-graph structures.
If $w$ is a walk in a $B$-graph, $G_\Bg$, we use $\ViSu_\Bg(w)$ to denote
$\ViSu(w)$ with the $B$-graph structure it inherits from $G$ in
the evident sense; we use $\ViSu_\Bg^\og(w)$ to denote
$\ViSu_\Bg(w)$ with its first-encountered ordering.

At times we drop the superscript $\,^\og$ and the subscript $\,_\Bg$;
for example, we write $G\in\Coord_n(B)$ instead of $G_\Bg\in\cC_n(B)$
(despite the fact that we constantly utilize
the $B$-graph structure on elements of
$\Coord_n(B)$).

A $B$-graph $G_\Bg$ is {\em covering} or {\'etale} if its structure
map $G\to B$ is.

If $\pi\from S\to B$ is a $B$-graph, we use
$\mec a=\mec a_{S_\Bg}$ to denote the vector
$\Edir_B\to\integers_{\ge 0}$ given by
$a_{S_\Bg}(e) = \# \pi^{-1}(e)$;
since $a_{S_\Bg}(\iota_B e) = a_{S_\Bg}(e)$ for all $e\in\Edir_B$,
we sometimes view $\mec a$ as a function $E_B\to\integers_{\ge 0}$, i.e.,
as the function taking $\{e,\iota_B e\}$ to 
$a_{S_\Bg}(e)=a_{S_\Bg}(\iota_B e)$.
We similarly define $\mec b_{S_\Bg}\from V_B\to\integers_{\ge 0}$ by
setting $b_{S_\Bg}(v) = \#\pi^{-1}(v)$.
If $w$ is a walk in a $B$-graph, we set $\mec a_w$ to be
$\mec a_{S_\Bg}$ where $S_\Bg=\ViSu_\Bg(w)$, and similarly for $\mec b_w$.
We refer to $\mec a,\mec b$ (in either context) as
{\em $B$-fibre counting functions}.

If $S_\Bg^\og$ is an ordered $B$-graph and $G_\Bg$ is a $B$-graph, we 
use $[S_\Bg^\og]\cap G_\Bg$ to denote the set of ordered graphs ${G'}_\Bg^\og$
such that $G'_\Bg\subset G_\Bg$ and ${G'}_\Bg^\og\isom S_\Bg^\og$
(as ordered $B$-graphs); this set is naturally identified with the
set of injective morphisms $S_\Bg\to G_\Bg$, and the cardinality of these
sets is independent of the ordering on $S_\Bg^\og$.


A $B$-graph, $S_\Bg$, or an ordered $B$-graph, $S_\Bg^\og$,
{\em occurs in a model $\{\cC_n(B)\}_{n\in N}$}
if for all sufficiently large
$n\in N$, $S_\Bg$ is isomorphic to a $B$-subgraph of some element
of $\cC_n(B)$; similary a graph, $S$, {\em occurs in 
$\{\cC_n(B)\}_{n\in N}$} if it can be endowed with a $B$-graph
structure, $S_\Bg$, that occurs in 
$\{\cC_n(B)\}_{n\in N}$.

A model $\{\cC_n(B)\}_{n\in N}$ of coverings of $B$ is {\em strongly
algebraic} if
\begin{enumerate}
\item for each $r\in\naturals$
there is a function, $g=g(k)$, of growth $\mu_1(B)$
such that if $k\le n/4$ we have
\begin{equation}\label{eq_algebraic_order_bound}
\EE_{G\in\cC_n(B)}[ \snbc_{\ge r}(G,k)] \le
g(k)/n^r
\end{equation}
where $\snbc_{\ge r}(G,k)$ is the number of SNBC walks of length
$k$ in $G$ whose visited subgraph is of order at least $r$;
\item
for any $r$ there exists
a function $g$ of growth $1$ and real $C>0$ such that the following
holds:
for any ordered $B$-graph, $S_\Bg^\og$, that is pruned and of
order less than $r$,
\begin{enumerate}
\item
if $S_\Bg$ occurs in $\cC_n(B)$, then for
$1\le \#\Edir_S\le n^{1/2}/C$,
\begin{equation}\label{eq_expansion_S}
\EE_{G\in\cC_n(B)}\Bigl[ \#\bigl([S_\Bg^\og]\cap G\bigr) \Bigr]
=
c_0 + \cdots + c_{r-1}/n^{r-1}
+ O(1) g(\# E_S) /n^r
\end{equation} 
where the $O(1)$ term is bounded in absolute value by $C$
(and therefore independent of $n$ and $S_\Bg$), and
where $c_i=c_i(S_\Bg)\in\reals$ such that
$c_i$ is $0$ if $i<\ord(S)$ and $c_i>0$ for $i=\ord(S)$;
and
\item
if $S_\Bg$ does not occur in $\cC_n(B)$, then for any
$n$ with $\#\Edir_S\le n^{1/2}/C$,
\begin{equation}\label{eq_zero_S_in_G}
\EE_{G\in\cC_n(B)}\Bigl[ \#\bigl([S_\Bg^\og]\cap G\bigr) \Bigr]
= 0 
\end{equation} 
(or, equivalently, no graph in $\cC_n(B)$ has a $B$-subgraph isomorphic to
$S_\Bg^\og$);
\end{enumerate}
\item
$c_0=c_0(S_\Bg)$ equals $1$ if $S$ is a cycle (i.e., $\ord(S)=0$ and
$S$ is connected) that occurs in $\cC_n(B)$;
\item
$S_\Bg$ occurs in $\cC_n(B)$ iff $S_\Bg$ is an \'etale $B$-graph
and $S$ has no half-loops; and
\item
there exist
polynomials $p_i=p_i(\mec a,\mec b)$ such that $p_0=1$
(i.e., identically 1), and for every
\'etale $B$-graph, $S_\Bg^\og$, we have that
\begin{equation}\label{eq_strongly_algebraic}
c_{\ord(S)+i}(S_\Bg) = p_i(\mec a_{S_\Bg},\mec b_{S_\Bg}) \ .
\end{equation}
\end{enumerate}
Notice that condition~(3), regarding $S$ that are cycles, is implied
by conditions~(4) and~(5); we leave in condition~(3) since this makes the
definition of {\em algebraic} (below) simpler.
Notice that \eqref{eq_expansion_S} and \eqref{eq_strongly_algebraic}
are the main reasons that we work with
ordered $B$-graphs: indeed, the coefficients depend only on
the $B$-fibre counting function $\mec a,\mec b$, which 
depend on the structure of
$S_\Bg^\og$ as a $B$-graph; this is not true if we don't work with
ordered graphs: i.e.,
\eqref{eq_expansion_S} fails to
hold if we replace $[S_\Bg^\og]$
with $[S_\Bg]$ (when $S_\Bg$ has nontrivial automorphisms), where
$[S_\Bg]\cap G$ refers to the number of $B$-subgraphs of $G$ isomorphic
to $S_\Bg$; the reason is that
$$
\#[S_\Bg^\og]\cap G_\Bg = \bigl( \#{\rm Aut}(S_\Bg)\bigr)
\bigl( \#[S_\Bg]\cap G_\Bg \bigr)
$$
where ${\rm Aut}(S_\Bg)$ is the group of automorphisms of $S_\Bg$, 
and it is $[S_\Bg^\og]\cap G_\Bg$ rather than $[S_\Bg]\cap G_\Bg$
that turns out to have the ``better'' properties;
see Section~6 of Article~I for examples.
Ordered graphs are convenient to use for a number of other reasons.

\ignore{
\myDeleteNote{Stuff deleted here and below on September 13, 2018.}
}

\subsection{Homotopy Type}

The homotopy type of a walk and of an ordered subgraph are defined
by {\em suppressing} its ``uninteresting'' vertices of degree two;
examples are given in Section~6 of Article~I.
Here is how we make this precise.

A {\em bead} in a graph is a vertex of degree two that is not
incident upon a self-loop.
Let $S$ be a graph and $V'\subset V_S$ be a {\em proper bead subset} of 
$V_S$,
meaning that $V'$ consists only of beads of $V$,
and that no connected component of $S$ has all its vertices in
$V'$ (this can only happen for connected components of $S$ that
are cycles);
we define the {\em bead suppression} $S/V'$ to be the following
graph: (1) its
vertex set $V_{S/V'}$
is $V''=V_S\setminus V'$, (2) its directed edges, $\Edir_{S/V'}$ consist
of
the {\em $V$'-beaded paths}, i.e., non-backtracking walks
in $S$ between elements of $V''$ whose intermediate vertices lie in $V'$,
(3) $t_{S/V'}$ and $h_{S/V'}$ give the first and last vertex of
the beaded path, and (4) $\iota_{S/V'}$ takes a beaded path
to its reverse walk
(i.e., takes $(v_0,e_1,\ldots,v_k)$ to
$(v_k,\iota_S e_k,\ldots,\iota_S e_1,v_0)$).
One can recover $S$ from the suppression $S/V'$ for pedantic reasons,
since we have defined its directed edges to be beaded paths of $S$.
If $S^\og=\ViSu^\og(w)$ where $w$ is a non-backtracking walk,
then the ordering of $S$ can be inferred by the naturally
corresponding order on $S/V'$, and we use $S^\og/V'$ to denote
$S/V'$ with this ordering.

Let $w$ be a non-backtracking walk in a graph, and 
$S^\og=\ViSu^\og(w)$ its visited
subgraph; the {\em reduction} of $w$ is the ordered graph,
$R^\og$, denoted $S^\og/V'$,
whose underlying graph is
$S/V'$ where $V'$ is the set of beads of $S$ except
the first and last vertices of $w$ (if one or both are beads),
and whose ordering is naturally arises from that on $S^\og$;
the {\em edge lengths} of $w$ is the function $E_{S/V'}\to\naturals$
taking an edge of $S/V'$ to the length of the beaded path it represents
in $S$;
we say that $w$ is {\em of homotopy type} $T^\og$ for any ordered
graph $T^\og$ that is isomorphic to $S^\og/V'$; in this case
the lengths of $S^\og/V'$ naturally give lengths $E_T\to\naturals$
by the unique isomorphism from $T^\og$ to $S^\og/V'$.
If $S^\og$ is the visited subgraph of a non-backtracking walk,
we define the reduction, homotopy type, and edge-lengths of $S^\og$ to
be that of the walk, since these notions depend only on $S^\og$ and
not the particular walk.

If $T$ is a graph and $\mec k\from E_T\to\naturals$ a function, then
we use $\VLG(T,\mec k)$ (for {\em variable-length graph}) to denote
any graph obtained from $T$ by gluing in a path of length $k(e)$
for each $e\in E_T$.  If $S^\og$ is of homotopy type $T^\og$
and $\mec k\from E_T\to \naturals$ its edge lengths,
then $\VLG(T,\mec k)$ is isomorphic to $S$ (as a graph).
Hence the construction of variable-length graphs is a sort of
inverse to bead suppression.

If $T^\og$ is an ordering on $T$ that arises as the first encountered
ordering of a non-backtracking walk on $T$ (whose visited subgraph
is all of $T$), then this ordering gives rise to a natural
ordering on $\VLG(T,\mec k)$ that we denote $\VLG^\og(T^\og,\mec k)$.
Again, this ordering on the variable-length graph is a sort of
inverse to bead suppression on ordered graphs.

\subsection{$B$-graphs and Wordings}

If $w_B=(v_0,e_1,\ldots,e_k,v_k)$ with $k\ge 1$ is a walk in a graph
$B$, then we can identify
$w_B$ with the string $e_1,e_2,\ldots,e_k$ over the alphabet
$\Edir_B$.
For technical reasons, the definitions below of
a {\em $B$-wording} and 
the {\em induced wording}, are given as strings over $\Edir_B$ rather
than the full alternating string of vertices and directed edges.
The reason is that 
doing this gives the correct notion of the {\em eigenvalues} of
an algebraic model (defined below).

Let $w$ be a non-backtracking walk in a $B$-graph, whose reduction
is $S^\og/V'$, and let
$S_\Bg^\og=\ViSu_\Bg^\og$.
Then the {\em wording induced by $w$} on $S^\og/V'$ is
the map $W$ from $\Edir_{S/V'}$ to strings in $\Edir_B$
of positive length, 
taking a
directed edge $e\in\Edir_{S/V'}$ to the string of $\Edir_B$ edges
in the non-backtracking walk in $B$
that lies under the walk in $S$ that it represents.
Abstractly, we say that a {\em $B$-wording} of a graph $T$
is a map $W$ from $\Edir_T$ to words over the alphabet
$\Edir_B$ that represent (the directed edges of)
non-backtracking walks in $B$ such that
(1) $W(\iota_T e)$ is the reverse word (corresponding to
the reverse walk) in $B$ of $W(e)$, 
(2) if $e\in\Edir_T$ is a half-loop, then $W(e)$ is of length one
whose single letter is a half-loop, and
(3) the tail of the first directed edge in $W(e)$ 
(corresponding to the first vertex in the associated walk in $B$)
depends only on $t_T e$;
the {\em edge-lengths} of $W$ is the function $E_T\to\naturals$
taking $e$ to the length of $W(e)$.
[Hence the wording induced by $w$ above is, indeed, a $B$-wording.]

Given a graph, $T$, and a $B$-wording $W$, there is a $B$-graph,
unique up to isomorphism, whose underlying graph is $\VLG(T,\mec k)$
where $\mec k$ is the edge-lengths of $W$, and where the $B$-graph
structure maps the non-backtracking walk in $\VLG(T,\mec k)$
corresponding to an $e\in\Edir_T$ to the non-backtracking walk in $B$
given by $W(e)$.
We denote any such $B$-graph by $\VLG(T,W)$; again this is
a sort of inverse to starting with a non-backtracking walk
and producing the wording it induces on its visited subgraph.

Notice that if $S_\Bg^\og=\VLG(T^\og,W)$ for a $B$-wording, $W$,
then the $B$-fibre counting functions
$\mec a_{S_\Bg}$ and $\mec b_{S_\Bg}$ can be
inferred from $W$, and we may therefore write $\mec a_W$ and
$\mec b_W$.

\subsection{Algebraic Models}

By a $B$-type we mean a pair $T^{\rm type}=(T,\cR)$ consisting
of a graph, $T$, and a map from $\Edir_T$ to the set
of regular languages over the alphabet $\Edir_B$ (in the sense of regular
language theory) such that
(1) all words in $\cR(e)$ are positive length strings corresponding to
non-backtracking walks in $B$, 
(2) if for $e\in\Edir_T$ we have $w=e_1\ldots e_k\in\cR(e)$,
then $w^R\eqdef \iota_B e_k\ldots\iota_B e_1$ lies in $\cR(\iota_T e)$,
and (3) if $W\from \Edir_T\to(\Edir_B)^*$ (where $(\Edir_B)^*$ is
the set of strings over $\Edir_B$) satisfies
$W(e)\in\cR(e)$ and $W(\iota_T e)=W(e)^R$ for all $e\in \Edir_T$,
then $W$ is a $B$-wording.
A $B$-wording $W$ of $T$ is {\em of type $T^{\rm type}$} if
$W(e)\in\cR(e)$ for each $e\in\Edir_T$.

Let $\cC_n(B)$ be a model that satisfies (1)--(3) of the definition
of strongly algebraic.
If $\cT$ a subset of $B$-graphs,
we say that the model is {\em algebraic restricted to $\cT$}
if 
either all $S_\Bg\in\cT$ occur in $\cC_n(B)$ or they all do not,
and if so
there are polynomials $p_0,p_1,\ldots$ such that
$c_i(S_\Bg)=p_i(S_\Bg)$ for any $S_\Bg\in\cT$. 
We say that $\cC_n(B)$ is {\em algebraic} if 
\begin{enumerate}
\item
setting $h(k)$ to be
the number of $B$-graph isomorphism classes of \'etale $B$-graphs
$S_\Bg$ such that $S$ is a cycle of length $k$ and $S$ does
not occur in $\cC_n(B)$, we have that 
$h$ is a function of growth $(d-1)^{1/2}$; and
\item
for any
pruned, ordered graph, $T^\og$, there is a finite number of
$B$-types, $T_j^{\rm type}=(T^\og,\cR_j)$, $j=1,\ldots,s$, 
such that (1) any $B$-wording, $W$, of $T$ belongs to exactly one
$\cR_j$, and
(2) $\cC_n(B)$ is algebraic when restricted to $T_j^{\rm type}$.
\end{enumerate}

[In Article~I we show that
if instead each $B$-wording belong to 
{\em at least one} $B$-type $T_j^{\rm type}$, then one can choose a
another set of
$B$-types that satisfy (2) and where each $B$-wording belongs
to {\em a unique} $B$-type;
however, the uniqueness
is ultimately needed in our proofs,
so we use uniqueness in our definition of algebraic.]

We remark that one can say that a walk, $w$, in a $B$-graph,
or an ordered $B$-graphs, $S_\Bg^\og$, is of {\em homotopy type $T^\og$},
but when $T$ has non-trivial automorphism one {\em cannot} say
that is of $B$-type $(T,\cR)$ unless---for example---one orders
$T$ and speaks of an {\em ordered $B$-type}, $(T^\og,\cR)$.
[This will be of concern only in Article~II.]

We define the {\em eigenvalues} of a regular language, $R$, to be the minimal
set $\mu_1,\ldots,\mu_m$ such that for any $k\ge 1$,
the number of words of length $k$ in the language
is given as
$$
\sum_{i=1}^m p_i(k)\mu_i^k
$$
for some polynomials $p_i=p_i(k)$, with the convention that
if $\mu_i=0$ then $p_i(k)\mu_i^k$ refers to any function that 
vanishes for $k$ sufficiently large (the reason for this is that
a Jordan block of eigenvalue $0$ is a nilpotent matrix).
Similarly, we define the eigenvalues of a $B$-type $T^{\rm type}=(T,\cR)$
as the union of all the eigenvalues of the $\cR(e)$.
Similarly a {\em set of eigenvalues} of a graph, $T$
(respectively, an algebraic model, $\cC_n(B)$)
is
any set containing the eigenvalues containing the eigenvalues
of some choice of $B$-types used in the definition of algebraic
for $T$-wordings (respectively, for $T$-wordings for all $T$).

[In Article~V we prove that all of our basic models are algebraic;
some of our basic models, such as the
permutation-involution model and the cyclic models, are not
strongly algebraic.]

We remark that a homotopy type, $T^\og$,
of a non-backtracking walk, can only have beads as its first or last 
vertices; however, in the definition of algebraic we require
a condition on {\em all pruned graphs}, $T$, 
which includes $T$ that may have many beads and may not be connected;
this is needed
when we define homotopy types of pairs in Article~II.

\subsection{SNBC Counting Functions}

If $T^\og$ is an ordered graph and $\mec k\from E_T\to\naturals$, 
we use $\SNBC(T^\og,\mec k;G,k)$ to denote the set of SNBC walks in $G$
of length $k$ and of homotopy type $T^\og$ and edge lengths $\mec k$.
We similarly define
$$
\SNBC(T^\og,\ge\bec\xi;G,k) \eqdef 
\bigcup_{\mec k\ge\bec\xi} \SNBC(T^\og,\mec k;G,k)
$$
where $\mec k\ge\bec\xi$ means that $k(e)\ge\xi(e)$ for all $e\in E_T$.
We denote the cardinality of these sets by replacing $\SNBC$ with
$\snbc$;
we call $\snbc(T^\og,\ge\bec\xi;G,k)$ the set of 
{\em $\bec\xi$-certified
traces of homotopy type $T^\og$ of length $k$ in $G$};
in Article~III we will refer to certain $\bec\xi$ as {\em certificates}.

\section{The Main Theorems in this Article}
\label{se_main_Ramanujan}

In this section we formally state the main theorems in this
article.  We first review some definitions and results 
of Article~V.

\subsection{Results from Article~V}

If $B$ is a graph, $\|A_{\widehat B}\|_2$ denotes
the $L^2$ norm of the adjacency operator
on a universal cover, $\widehat B$, of $B$; it is well-known that
if $B$ is $d$-regular,
then $\|A_{\widehat B}\|_2=2\sqrt{d-1}$
\cite{mohar_woess}.
If $\pi\from G\to B$ is a covering map graphs,
and $\epsilon>0$, the
{\em $\epsilon$-non-Alon multiplicity of $G$
relative to $B$} is
$$
{\rm NonAlon}_B(G;\epsilon) \eqdef
\# \bigl\{\lambda\in\specnew_B(A_G)\ \bigm|\
|\lambda|>
\|A_{\widehat B}\|_2 +\epsilon\bigr\} ,
$$
where
the above $\lambda$ are counted with their multiplicity in
$\specnew_B(A_G)$.

In Article~V the Relativized Alon Conjecture was proven when
$B$ is $d$-regular.  The statement regards
any 
algebraic model $\{\cC_n(B)\}_{n\in N}$ an algebraic
model over a $d$-regular graph $B$;
it says that for
$\epsilon>0$ there is a constant $C=C(\epsilon)$ for which
$$
\Prob_{G\in\cC_n(B)}[ {\rm NonAlon}_B(G;\epsilon)>0 ]
\le  C(\epsilon)/n \ .
$$
The point of this article is to give matching upper and lower bounds
for this probability when $B$ is, furthermore, a
{\em Ramanujan graph} in the following sense.

\begin{definition}\label{de_Ramanujan}
We say that a $d$-regular graph $B$ is {\em Ramanujan} if all eigenvalues
of $A_B$ lie in
$$
\{d,-d\} \cup \Bigl[ -2\sqrt{d-1}, 2\sqrt{d-1} \Bigr] .
$$
\end{definition}

We now give the more precise form of the Relativized Alon
Conjecture proven in Article~V.

\begin{definition}\label{de_tau_tang}
Let $\{\cC_n(B)\}_{n\in N}$ be a model over a graph, $B$.
By the {\em tangle power of $\{\cC_n(B)\}$}, denoted $\tau_{\rm tang}$,
we mean the smallest order, $\ord(S)$, of any graph, $S$, that
occurs in $\{\cC_n(B)\}$ and satisfies
$\mu_1(S)>\mu_1^{1/2}(B)$.
\end{definition}

In this article we prove some results regarding $\tau_{\rm tang}$;
for example, 
the results of Section~6.3 of \cite{friedman_alon}
show that for any algebraic model over a $d$-regular graph, $B$,
$$
\tau_{\rm tang} \ge  m=m(d)
$$
where
$$
m(d) =
\Bigl\lfloor \bigl( (d-1)^{1/2} - 1 \bigr)/2  \Bigr\rfloor +1 
$$
(and for any $d\ge 3$ there is a $d$-regular $B$ where
equality holds).

The most difficult theorem in this series of articles, to which most
of Articles~II-V are devoted, is the following result.

\begin{theorem}\label{th_rel_Alon_regular2}
Let $\cC_n(B)$ be an
algebraic model
over a $d$-regular graph $B$.
For any $\nu$ with
$(d-1)^{1/2}<\nu<d-1$, let $\epsilon'>0$ be given by
$$
2(d-1)^{1/2}+ \epsilon' = \nu + \frac{d-1}{\nu}.
$$
Then
\begin{enumerate}
\item
there is an
integer $\tau=\tau_{\rm alg}(\nu,r)\ge 1$ such that
for any sufficiently small $\epsilon>0$ there are constants
$C=C(\epsilon),C'>0$ such that for sufficiently large $n$ we have
\begin{equation}\label{eq_rel_alon_expect_lower_and_upper}
n^{-\tau} C'
\le
\EE_{G\in\cC_n(B)}[ \II_{{\rm TangleFree}(\ge\nu,<r)}(G)
{\rm NonAlon}_d(G;\epsilon'+\epsilon) ]
\le
n^{-\tau}   C(\epsilon) ,
\end{equation}
or
\item
for all $j\in\naturals$ and $\epsilon>0$ we have
\begin{equation} 
\label{eq_rel_alon_expect_upper_infinite}
\EE_{G\in\cC_n(B)}[ \II_{{\rm TangleFree}(\ge\nu,<r)}(G)
{\rm NonAlon}_d(G;\epsilon'+\epsilon) ] \le O(n^{-j})
\end{equation} 
in which case we use the notation $\tau_{\rm alg}(\nu,r)=+\infty$.
\end{enumerate}
Moreover, if $\tau=\tau_{\rm alg}(\nu,r)$ is finite, then for
some eigenvalue, $\ell\in\reals$, of the model with $|\ell|>\nu$,
there is a real $C_\ell>0$
such that for sufficiently small $\theta>0$
\begin{equation}\label{eq_new_eigenvalues_near_ell}
\lim_{n\to\infty} 
\EE_{G\in\cC_n(B)}\bigl[\#
\bigl(\specnew_B(H_G)\cap B_{n^{-\theta}}(\ell) \bigr) 
\II_{{\rm TangleFree}(\ge\nu,<r)}(G)
\bigr]
= C_\ell n^{-\tau} + o(n^{-\tau}) .
\end{equation} 
\end{theorem}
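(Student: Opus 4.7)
The plan is to apply the trace method to the Hashimoto matrix $H_G$. Using $\tr(H_G^k) = \snbc(G,k)$ together with the algebraic model's expansion
\begin{equation*}
\EE_{G\in\cC_n(B)}[\snbc(G,k)] = c_0(k) + c_1(k)/n + \cdots + c_{r-1}(k)/n^{r-1} + O(g(k)/n^r),
\end{equation*}
valid for $k \le n^{1/2}/C$, each $c_i(k)$ is a polyexponential in $k$ whose bases lie among the eigenvalues of the model; the bases of $c_0(k)$ are exactly the Hashimoto eigenvalues of $B$, which capture the old spectrum. By the Ihara--Bass correspondence, the Alon bound $2(d-1)^{1/2}+\epsilon'$ on adjacency eigenvalues of $G$ corresponds exactly to Hashimoto eigenvalues of modulus at most $\nu$ (via the map $\mu \mapsto \mu + (d-1)/\mu$), so ${\rm NonAlon}_B(G;\epsilon'+\epsilon)$ counts new Hashimoto eigenvalues of modulus at least $\nu + \delta$ for some $\delta = \delta(\epsilon) > 0$.

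Next, I define $\tau = \tau_{\rm alg}(\nu,r)$ to be the least $i \ge 1$ for which the tangle-free subsum of $c_i(k)$ has a polyexponential base of modulus strictly greater than $\nu$, with $\tau = +\infty$ if no such $i$ exists. Tangle-freeness is enforced precisely to exclude the finitely many minimal $(\ge\nu,<r)$-tangles (see Section~\ref{su_tangles}), which would otherwise create spurious large eigenvalues driven by local subgraph events rather than by the global spectral structure. For the upper bound, choose $k \sim (\log n)/\log(\nu+\delta/2)$; the tangle-free trace expansion then gives
\begin{equation*}
\EE\bigl[\snbc(G,k)\,\II_{{\rm TangleFree}(\ge\nu,<r)}(G)\bigr] - c_0(k) \;\le\; n^{-\tau} \cdot O\bigl((\nu+\delta/2)^k\bigr),
\end{equation*}
while any $G$ with ${\rm NonAlon}_B(G;\epsilon'+\epsilon) \ge 1$ contributes at least $(\nu+\delta)^k$ to this modified trace, so a Markov step yields the upper bound $n^{-\tau}C(\epsilon)$ in \eqref{eq_rel_alon_expect_lower_and_upper}. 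When $\tau = +\infty$, taking $r$ arbitrarily large drives the bound past any fixed $n^{-j}$, giving \eqref{eq_rel_alon_expect_upper_infinite}.

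For the matching lower bound and the localized statement \eqref{eq_new_eigenvalues_near_ell}, take an eigenvalue $\ell$ of the model with $|\ell| > \nu$ that appears as a base of $c_\tau(k)$, and apply a smooth test function $\phi_n$ supported in $B_{n^{-\theta}}(\ell)$ to the new Hashimoto spectrum. Expressing $\sum_{\mu \in \specnew_B(H_G)} \phi_n(\mu)$ as a Cauchy integral against the resolvent $(zI - H_G)^{-1}$, expanding in a Neumann series to convert it to SNBC traces, and matching against the expansion of $\EE[\snbc(G,k)]$ produces the leading contribution $C_\ell n^{-\tau}\phi_n(\ell) + o(n^{-\tau})$. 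The Ramanujan hypothesis on $B$ enters critically here: it forces the old Hashimoto spectrum into $\{|\mu| = (d-1)^{1/2}\}$ plus the trivial eigenvalues $\pm 1, \pm(d-1)$, all cleanly separated from $\ell$, which prevents predicted new eigenvalues from colliding with old spectrum inside the contour.

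The main obstacle will be making sense of the tangle-free restriction inside the trace expansion with sufficiently uniform error control in both $k$ and $r$: this requires an inclusion--exclusion over the (finitely many) minimal $(\ge\nu,<r)$-tangles, together with a careful analysis of how removing each tangle reshapes the coefficients $c_i(k)$ and their polyexponential bases. This is the technical core developed across Articles~II--IV; Article~VI's contribution is then to use the magnification/pseudo-magnification arguments of Sections~\ref{se_new_magnifiers}--\ref{se_pseudomag_with_half} to show that in the Ramanujan case no $c_i(k)$ has a base of modulus $> \nu$, so that $\tau_{\rm alg} = +\infty$ always holds and alternative~(2) is the one realized.
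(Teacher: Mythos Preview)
First, a framing point: Theorem~\ref{th_rel_Alon_regular2} is \emph{quoted} in this article, not proven here. The paper explicitly says it is ``the most difficult theorem in this series of articles, to which most of Articles~II--V are devoted,'' and Article~VI uses it as a black box. So there is no proof in this paper to compare against; your sketch is really a sketch of what Articles~II--V do.

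That said, there is a genuine conceptual error in your sketch. You write that ``the Ramanujan hypothesis on $B$ enters critically here'' in establishing \eqref{eq_new_eigenvalues_near_ell}. But Theorem~\ref{th_rel_Alon_regular2} has \emph{no Ramanujan hypothesis}: it holds for any algebraic model over any $d$-regular $B$. The Ramanujan assumption appears only in Theorem~\ref{th_second_main_theorem}, where it is used (together with pseudo-magnification) to rule out the case $\ell=d-1$ in \eqref{eq_new_eigenvalues_near_ell} and hence conclude $\tau_{\rm alg}=+\infty$. You have folded the application back into the theorem it is applied to. In particular, the separation of old spectrum from $\ell$ in your contour argument cannot rely on $B$ being Ramanujan; for general $B$ the old Hashimoto spectrum may sit anywhere, and the statement \eqref{eq_new_eigenvalues_near_ell} is about \emph{new} eigenvalues regardless.

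Your broad outline---trace method on $H_G$, asymptotic expansion of the expected SNBC count, defining $\tau$ as the first coefficient with a base exceeding $\nu$, Markov for the upper bound---is in the right spirit. But the step where you write $\EE[\snbc(G,k)\,\II_{{\rm TangleFree}}(G)] - c_0(k)$ and then feed it into Markov hides exactly the difficulty: multiplying by the tangle-free indicator destroys the clean expansion \eqref{eq_expansion_S}, and one cannot simply subtract $c_0(k)$ from the restricted expectation. The machinery the series builds (certified traces, the $B$-type decomposition, the side-stepping arguments of Articles~II--IV) exists precisely to recover a $(B,\nu)$-bounded expansion \emph{after} the tangle-free restriction. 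You acknowledge this is ``the technical core,'' which is honest, but the sentence-level argument you wrote for the upper bound would not go through as stated.
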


%
Notice if $\nu_1\le\nu_2$ and $r_1\ge r_2$ then
$$
\II_{{\rm TangleFree}(\ge\nu_2,<r_2)}(G)  
\le \II_{{\rm TangleFree}(\ge\nu_1,<r_1)}(G) ,
$$
for the simple reason that 
$\II_{{\rm TangleFree}(\ge\nu_2,<r_2)}(G)=1$ implies that
$G$ has no $(\ge\nu_2,<r_2)$-tangles, and hence no
$(\ge\nu_1,<r_1)$-tangles;
then \eqref{eq_rel_alon_expect_lower_and_upper} and
\eqref{eq_rel_alon_expect_upper_infinite} imply that
\begin{equation}\label{eq_tau_alg_nu_r_compare}
\tau_{\rm alg}(\nu_1,r_1) \le
\tau_{\rm alg}(\nu_2,r_2).
\end{equation}

\begin{definition}\label{de_algebraic_power}
Let $\{\cC_n(B)\}_{n\in N}$ be an algebraic model over a $d$-regular
graph $B$.
For each $r\in\naturals$ and $\nu$ with $(d-1)^{1/2}<\nu<d-1$, let
$\tau(\nu,r)$ be as in Theorem~\ref{th_rel_Alon_regular2}.
We define the {\em algebraic power} of the model $\cC_n(B)$ to be
$$
\tau_{\rm alg} =
\max_{\nu>(d-1)^{1/2},r} \tau(\nu,r) =
\limsup_{r\to\infty,\ \nu\to(d-1)^{1/2}}
\tau(\nu,r)
$$
where $\nu$ tends to $(d-1)^{1/2}$ from above
(and we allow $\tau_{\rm alg}=+\infty$ when this maximum is
unbounded or if $\tau(\nu,r)=\infty$ for some $r$ and $\nu>(d-1)^{1/2}$).
\end{definition}
Of course, according to Theorem~\ref{th_rel_Alon_regular2},
$\tau(\nu,r)\ge 1$ for all $r$ and all relevant $\nu$, and hence
$\tau_{\rm alg}\ge 1$.

Here is the more precise form of the Relativized Alon
Conjecture proven in Article~V.

\begin{theorem}\label{th_rel_Alon_regular}
Let $B$ be a $d$-regular graph, and let
$\cC_n(B)$ be an
algebraic model of tangle power $\tau_{\rm tang}$
and algebraic power $\tau_{\rm alg}$ (both of which are at least $1$).
Let
$$
\tau_1 = \min(\tau_{\rm tang},\tau_{\rm alg}), \quad
\tau_2 = \min(\tau_{\rm tang},\tau_{\rm alg}+1).
$$
Then $\tau_2\ge \tau_1\ge 1$, and
for $\epsilon>0$ sufficiently small there are
$C,C'$ such that for sufficiently large $n$ we have
\begin{equation}\label{eq_first_main}
C' n^{-\tau_2}
\le
\Prob_{G\in\cC_n(B)}\bigl[
{\rm NonAlon}_d(G;\epsilon)>0
\bigr]
\le
C n^{-\tau_1}.
\end{equation}
\end{theorem}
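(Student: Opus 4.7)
The plan is to combine Theorem~\ref{th_rel_Alon_regular2} (which bounds the expected ${\rm NonAlon}_d$ count on the tangle-free event) with two simple facts: ${\rm NonAlon}_d(G;\epsilon)$ is a non-negative integer bounded above by $n\,\#V_B$, and the probability of containing a minimal tangle is controlled by the leading-order term of the asymptotic expansion \eqref{eq_expansion_S}. Throughout, fix $\nu>(d-1)^{1/2}$ with $\nu-(d-1)^{1/2}$ small enough that the associated $\epsilon'$ of Theorem~\ref{th_rel_Alon_regular2} is less than the given $\epsilon$; then pick $r$ so that $\tau_{\rm alg}(\nu,r)=\tau_{\rm alg}$ when finite, or so that $\tau_{\rm alg}(\nu,r)\ge\tau_{\rm tang}$ otherwise.

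For the upper bound, I would split
\[
\Prob_{G\in\cC_n(B)}\bigl[{\rm NonAlon}_d(G;\epsilon)>0\bigr]\le \Prob\bigl[{\rm HasTangles}(\ge\nu,<r)\bigr]+\EE\bigl[\II_{{\rm TangleFree}(\ge\nu,<r)}(G)\,{\rm NonAlon}_d(G;\epsilon)\bigr].
\]
The first summand is bounded by summing, over the finitely many isomorphism classes of minimal $(\ge\nu,<r)$-tangles, the expected number of their embedded $B$-lifted copies; by \eqref{eq_expansion_S} each such expectation is $\Theta(n^{-\ord(\psi)})$ with $\ord(\psi)\ge\tau_{\rm tang}$, yielding $O(n^{-\tau_{\rm tang}})$. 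The second summand is bounded via Theorem~\ref{th_rel_Alon_regular2} by $O(n^{-\tau_{\rm alg}(\nu,r)})$ in the finite case and by \eqref{eq_rel_alon_expect_upper_infinite} otherwise; the sum is $O(n^{-\tau_1})$.

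For the lower bound, I would produce two separate lower bounds on $\Prob[{\rm NonAlon}_d>0]$ and take their maximum. First, select a minimal $(\ge\nu)$-tangle $\psi$ of order exactly $\tau_{\rm tang}$ that occurs in $\cC_n(B)$; by \eqref{eq_expansion_S} the leading coefficient of the first-moment expansion of $B$-lifted embedded copies of $\psi$ is strictly positive, so the expected number of copies is $\Theta(n^{-\tau_{\rm tang}})$, and a standard Paley--Zygmund second-moment argument (as in Section~6 of \cite{friedman_alon}) converts this to $\Prob[\psi\subset G]\ge C\,n^{-\tau_{\rm tang}}$; containment of $\psi$ forces ${\rm NonAlon}_d(G;\epsilon)>0$ (see below). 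Second, when $\tau_{\rm alg}$ is finite, the lower bound in \eqref{eq_rel_alon_expect_lower_and_upper} gives $\EE[\II_{{\rm TangleFree}}\,{\rm NonAlon}_d]\ge C'\,n^{-\tau_{\rm alg}}$; dividing by the trivial upper bound ${\rm NonAlon}_d\le n\,\#V_B$ yields $\Prob[{\rm NonAlon}_d>0]\ge C''\,n^{-\tau_{\rm alg}-1}$. The maximum is $\ge C'''\,n^{-\min(\tau_{\rm tang},\tau_{\rm alg}+1)}=C'''\,n^{-\tau_2}$; if $\tau_{\rm alg}=+\infty$ only the first bound is used, and $\tau_2=\tau_{\rm tang}$.

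The main obstacle is the first lower-bound route: establishing that containment of $\psi$ produces a \emph{new} adjacency eigenvalue outside the Alon interval, rather than merely a large total spectral radius of $H_G$ (which always equals $d-1$ for a cover of a $d$-regular $B$). One argues that a Perron eigenvector of $H_\psi$, extended by zero to $\Edir_G$, is localized on $O(1)$ directed edges whereas any $B$-lifted vector spreads uniformly over fibers of size $n$, forcing a vanishing projection onto the $B$-lifted subspace; a perturbation argument then yields a new Hashimoto eigenvalue of absolute value $\ge \mu_1(\psi)-o(1)>(d-1)^{1/2}$, which via the monotone Ihara--Bass relation $\lambda=\mu+(d-1)/\mu$ transfers to a new adjacency eigenvalue with $|\lambda|>2(d-1)^{1/2}+\epsilon$. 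These ingredients are routine in the Friedman series (cf.\ Article~III), but uniformity of constants in $n$ and in the choice of $B$-lift of $\psi$ warrants care.
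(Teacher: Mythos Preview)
This theorem is not proved in the present paper: it is quoted from Article~V (see the sentence immediately preceding its statement in Section~\ref{se_main_Ramanujan}). So there is no in-paper proof to compare against. That said, your outline is exactly the natural way to deduce \eqref{eq_first_main} from Theorem~\ref{th_rel_Alon_regular2} together with tangle-probability estimates, and it matches in spirit what Article~V does.

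There is, however, a genuine gap in your sketch of the ``main obstacle.'' Let $f$ be the Perron eigenvector of $H_\psi$ and $F$ its extension by zero to $\Edir_G$. A direct computation gives, for $e_1\in\Edir_\psi$, $(H_G F)(e_1)=(H_\psi f)(e_1)=\mu_1(\psi)f(e_1)$, but for $e_1\notin\Edir_\psi$ with $h_G(e_1)\in V_\psi$ one picks up nonzero boundary terms $\sum_{e_2\in\Edir_\psi}f(e_2)$. Since $\psi$ is fixed, both $\|F\|$ and $\|H_G F-\mu_1(\psi)F\|$ are $\Theta(1)$ as $n\to\infty$; the residual does \emph{not} vanish, so $F$ is not an approximate eigenvector and no perturbation argument applies. (Non-normality of $H_G$ is a further obstruction, but the issue is already present at this level.) The argument actually used in the series goes differently: one works either with SNBC walk counts to lower-bound the new Hashimoto spectral radius, or with a vertex-based Rayleigh quotient for $A_G$ using a test function supported on $V_\psi$ whose projection onto the $B$-lifted (old) subspace is $O(n^{-1/2})$ and whose quotient is controlled via the growth of non-backtracking walks in $\psi$. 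Either way, the conclusion $\rho^{\rm new}(H_G)\ge\mu_1(\psi)$ (hence ${\rm NonAlon}_d(G;\epsilon)>0$ for $\epsilon$ small) is established in Article~III of the series and in \cite{friedman_alon}; you should cite that rather than the perturbation sketch.

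A smaller point: your ``fix $\nu$ with $\epsilon'<\epsilon$, then pick $r$ so that $\tau_{\rm alg}(\nu,r)=\tau_{\rm alg}$'' implicitly uses a monotonicity of $\tau_{\rm alg}(\nu,r)$ in $(\nu,r)$ that lets the maximum be attained at arbitrarily small $\nu$; this is true but requires the argument surrounding \eqref{eq_tau_alg_nu_r_compare} (and note that the indicator inequality stated there in the paper is misprinted---the containment of tangle classes runs the other way).
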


The last result we need from Article~V regards a set of eigenvalues
for our basic models.

\begin{lemma}
Let $B$ be a connected, pruned graph with $\mu_1(B)>1$
(equivalently $\chi(B)<0$).  All our basic models are algebraic,
and a set of eigenvalues for each model
consist of possibly $1$ and some subset of
the eigenvalues $\mu_i(B)$ of the Hashimoto matrix $H_B$.
\end{lemma}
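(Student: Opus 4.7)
The plan is to verify, for each of our basic models $\{\cC_n(B)\}_{n\in N}$, the two clauses in the definition of \emph{algebraic}, and to simultaneously identify a set of eigenvalues. Clauses~(1)--(3) of the \emph{strongly algebraic} definition (control of $\EE[\snbc_{\ge r}]$ and the asymptotic expansion of $\EE[\#([S_\Bg^\og]\cap G)]$ with the correct $c_0$) are verified model by model in Article~V using the explicit $1/n$-expansions of the joint factorial moments of uniform random permutations, uniform random $n$-cycles, and uniform random (near) perfect matchings. I would take these as given and concentrate on the two clauses defining \emph{algebraic}.

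For clause~(1), I would show that exceptional étale cycle $B$-graphs are scarce. In the permutation and permutation-involution models, every étale $B$-graph whose underlying graph is a cycle of length $k$ occurs in $\cC_n(B)$ for all sufficiently large $n$ of the appropriate parity, since one can prescribe the values of the permutation assignment along the image walk to realize any prescribed fibre structure; hence $h(k)\equiv 0$. In the cyclic models, the single-$n$-cycle constraint on $\sigma(e)$ for whole-loops $e$ excludes only cycles winding around a whole-loop in a manner incompatible with the $n$-cycle structure; for each fixed $k$ this excludes only finitely many isomorphism classes, so $h(k)$ is bounded by a polynomial in $k$ and is of growth $1<(d-1)^{1/2}$.

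For clause~(2), fix a pruned ordered graph $T^\og$. A $B$-wording $W\from\Edir_T\to(\Edir_B)^*$ is a tuple of non-backtracking walks $(W(e))_{e\in\Edir_T}$ in $B$ subject to the reversal, half-loop, and common-tail conditions. I would stratify all such $W$ by their \emph{coincidence pattern}: the finite datum specifying, for each pair of positions across the walks $W(e)$, whether those positions carry the same element of $\Edir_B$. Each pattern yields a $B$-type $(T^\og,\cR_j)$ in which $\cR_j(e)$ is the regular language of non-backtracking walks in $B$ satisfying the prescribed local equalities and inequalities, recognizable by a finite automaton; since each $W$ has exactly one coincidence pattern, these $B$-types partition the set of $B$-wordings. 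Two wordings sharing a pattern have the same $B$-fibre counting data $(\mec a_W,\mec b_W)$, and the moment calculations behind clauses~(1)--(3) of strongly algebraic yield, within each pattern, an expansion of $\EE[\#([\VLG^\og(T^\og,W)_\Bg]\cap G)]$ in $1/n$ whose coefficients are polynomials in $\mec a_W,\mec b_W$; this is exactly \emph{algebraic} restricted to the $B$-type.

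To identify the eigenvalues, observe that the transfer matrices for the regular languages $\cR_j(e)$ are obtained from $H_B$ by passing to submatrices (restricting initial and terminal directed edges of $B$), to quotients (identifying positions forced to coincide), and to direct products (tracking several positions simultaneously); the eigenvalues appearing in the closed-form count of length-$k$ words in any such $\cR_j(e)$ are therefore drawn from $\Spec(H_B)$. The only additional eigenvalue is $1$, arising from the length-one wordings forced at half-loop edges and from finite-support corrections that must be expressed as polyexponentials. The main obstacle in implementing this plan is coefficient polynomiality for the cyclic and involution models, whose permutation assignments are \emph{not} independent across directed edges; controlling their joint $1/n$-expansions of factorial moments of fixed-point counts is precisely what prevents these models from being \emph{strongly algebraic}, and it is the technical core that keeps them algebraic nonetheless.
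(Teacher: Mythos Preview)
This lemma is not proved in the present article at all: it is stated in Section~\ref{se_main_Ramanujan} under the heading ``Results from Article~V'' and is simply quoted from Article~V of the series, with the remark ``In Article~V we prove that all of our basic models are algebraic.'' There is therefore no proof in this paper to compare your proposal against; the actual argument lives elsewhere and is substantial (it occupies a significant portion of Article~V).

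That said, your sketch has a genuine gap in the eigenvalue claim. You assert that the transfer matrices of the regular languages $\cR_j(e)$ are obtained from $H_B$ by passing to submatrices, quotients, and direct products, and conclude that their eigenvalues are drawn from $\Spec(H_B)$. But none of these operations preserves the spectrum: a principal submatrix of $H_B$ can have eigenvalues that are not eigenvalues of $H_B$, and likewise for quotients and tensor products. Getting the eigenvalues to lie in $\{1\}\cup\Spec(H_B)$ requires constructing the $B$-types so that the word-counting generating functions are controlled by $\det(I-zH_B)$ itself, not by determinants of auxiliary matrices merely built out of $H_B$. Your stratification by ``coincidence pattern'' is also too coarse as stated: two wordings with the same pattern of positional equalities need not have the same $B$-fibre counts $(\mec a_W,\mec b_W)$ unless the pattern also records which specific elements of $\Edir_B$ occur, and once you do that you must argue that only finitely many $B$-types suffice.
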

[The Ihara determinantal formula (see Articles~I or V)
easily implies that all $B$'s in the above lemma have
at least one $H_B$ eigenvalue
equal to either $\pm 1$; hence the possible addition of $1$ to
the set of eigenvalues in the lemma is 
not particularly significant.]

\subsection{Main Result of This Article}

In principle we can compute $\tau_{\rm alg}$, using
the methods of Articles~II-V, which involve analyzing the main term
of certain asymptotic expansions involving certified traces.
However this computation is difficult to carry out.
We will borrow the method of \cite{friedman_random_graphs}
that uses the existence of these asymptotic expansion 
and an indirect method to draw conclusions about the main terms
we need.

\begin{theorem}\label{th_second_main_theorem}
Let $d\ge 3$ be an integer,
and let $\{\cC_n(B)\}_{n\in N}$ be one of our basic models
over $d$-regular Ramanujan
graph, $B$.
Then $\tau_{\rm alg}=+\infty$.
\end{theorem}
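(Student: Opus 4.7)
The plan is a contradiction argument that exploits the highly restrictive spectrum of $H_B$ in the Ramanujan case, together with the pseudo-magnification estimates to be established in Sections~\ref{se_new_magnifiers}--\ref{se_pseudomag_with_half}.

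First, the Ihara determinantal formula combined with the Ramanujan hypothesis shows that the eigenvalues of $H_B$ are exactly $\pm(d-1)$, $\pm 1$, and a (possibly empty) set of complex numbers all lying on the circle $|\mu|=(d-1)^{1/2}$. By the final lemma preceding Theorem~\ref{th_second_main_theorem}, the set of eigenvalues of any of our basic models over such a $B$ is contained in $\{1\}\cup\{\mu_i(B)\}_i$. Since $d\ge 3$ gives $(d-1)^{1/2}>1$, the only real numbers in this set of absolute value strictly exceeding $(d-1)^{1/2}$ are $\pm(d-1)$.

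Suppose, for contradiction, that $\tau_{\rm alg}<+\infty$. Then by Definition~\ref{de_algebraic_power} there exist $\nu$ with $(d-1)^{1/2}<\nu<d-1$ and $r\in\naturals$ for which $\tau:=\tau(\nu,r)<+\infty$. Theorem~\ref{th_rel_Alon_regular2} then places us in case (i), and equation~\eqref{eq_new_eigenvalues_near_ell} supplies a real eigenvalue $\ell$ of the model with $|\ell|>\nu$ together with a strictly positive constant $C_\ell>0$ governing the expected number of new $H_G$-eigenvalues within $n^{-\theta}$ of $\ell$ on tangle-free covers. By the previous paragraph we must have $\ell=\pm(d-1)$; equivalently, through the Ihara transform $\lambda=\mu+(d-1)/\mu$, an order-$n^{-\tau}$ expected fraction of tangle-free random $G\in\cC_n(B)$ possess a new eigenvalue of $A_G$ within $O(n^{-\theta})$ of $\pm d$.

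The remainder of the proof is devoted to ruling out this scenario. In Section~\ref{se_new_magnifiers} I would introduce a covering-map-adapted variant of Alon's magnifier notion---\emph{pseudo-magnification}---and show by a standard Cheeger/variational argument that any pseudo-magnifying cover $G\to B$ has its new $A_G$-spectrum bounded away from $\pm d$ by a fixed constant depending only on the magnification parameter. Then in Sections~\ref{se_pseudomag_nohalf} and~\ref{se_pseudomag_with_half} I would prove, by first-moment estimates on the number of small ``almost-disconnecting'' vertex subsets of $G$, that the probability that a random $G\in\cC_n(B)$ fails to be pseudo-magnifying is $o(n^{-j})$ for every $j\in\naturals$. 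Combining these two facts, the event produced in the previous paragraph would force the probability of pseudo-magnification failure to be of order at least $n^{-\tau}$, contradicting the super-polynomial decay. Hence $\tau(\nu,r)=+\infty$ for every admissible pair, giving $\tau_{\rm alg}=+\infty$. The main obstacle lies in Section~\ref{se_pseudomag_with_half}, where $B$ has half-loops and some of the random coordinates of the permutation assignment $\sigma$ are drawn from (perfect or near-perfect) involutions rather than uniform permutations; there the first-moment counting must be carefully adapted to the matching combinatorics while remaining compatible with the edge-independence of the model on non-half-loop edges, and this is where most of the delicate combinatorial work of the article will be concentrated.
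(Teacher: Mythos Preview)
Your overall strategy matches the paper's: reduce to $\ell=\pm(d-1)$ via the Ramanujan hypothesis, then use pseudo-magnification to contradict \eqref{eq_new_eigenvalues_near_ell}.  Two points in your sketch need sharpening, however.

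First, pseudo-magnification by itself does \emph{not} bound the new $A_G$-spectrum away from $d$: an $(R,\gamma)$-pseudo-magnifier may have a connected component of size $<R$, which contributes a new adjacency eigenvalue equal to $d$.  The paper's Lemma~\ref{le_no_coeff_largest_base} closes this by observing that such a small component is itself a $(\ge d-1)$-tangle of order at most $\ord(B)\,R/\#V_B$; hence on the event ${\rm TangleFree}(\ge\nu,<r)$ with $r$ chosen large enough, a pseudo-magnifier is automatically a genuine $\gamma'$-magnifier (with $\gamma'=\min(\gamma,1/R)$), and only then does Alon's theorem apply.  So the spectral gap comes from pseudo-magnification \emph{plus} the tangle-free indicator already sitting inside the expectation, not from pseudo-magnification alone.

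Second, magnification/Cheeger controls only $\lambda_2(G)$, not $\lambda_n(G)$, so it does not directly bound the new spectrum away from $-d$.  The paper handles $\ell=-(d-1)$ differently: this value is an eigenvalue of the model only when $B$ is bipartite, and then every $G\in\Coord_n(B)$ is bipartite as well, so the new $H_G$-spectrum is symmetric under $\mu\mapsto-\mu$; consequently the expectation in \eqref{eq_new_eigenvalues_near_ell} is identical for $\ell=d-1$ and $\ell=-(d-1)$, reducing the second case to the first.  Your Cheeger/variational remark should be replaced by this symmetry argument.
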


The idea behind the proof is to show that
\eqref{eq_new_eigenvalues_near_ell}
cannot hold for any fixed value of $\tau$
with $r\to\infty$ if $\ell=d-1$, due to the fact that a new eigenvalue
of $H_G$ near $d-1$ implies that $G$ has a ``nearly disconnected
component,'' a notion which is made precise by Alon's notion
of {\em magnification}
(one could also use an analog of ``Cheeger's'' inequality for
graphs, e.g., \cite{dodziuk,jerrum1}).
To prove this one needs to prove a (fairly weak)
magnification result for most graphs in the model $\cC_n(B)$.
This result holds for all of our basic models.

If $B$ is $d$-regular Ramanujan and connected, then the larger $H_B$
eigenvalues of all our basic models are either $d-1$ or
$\pm (d-1)$ (the latter iff $B$ is bipartite), and we easily see
that if \eqref{eq_new_eigenvalues_near_ell}
for some $\ell$ then it must hold for $\ell=d-1$.
Since this is impossible, we must have
$\tau_{\rm alg}=+\infty$.

\subsection{Results on $\tau_{\rm tang}$}

Whenever $\tau_{\rm alg}=+\infty$, or merely
$\tau_{\rm alg}\ge \tau_{\rm tang}+1$,
Theorem~\ref{th_rel_Alon_regular}
determines matching upper and lower bounds on
$$
\Prob_{G\in\cC_n(B)}\bigl[
{\rm NonAlon}_d(G;\epsilon)>0
\bigr]
$$
for any fixed $\epsilon>0$ sufficiently small, both bounds being 
proportional to $n^{-\tau_{\rm tang}}$.
It therefore becomes interesting to compute $\tau_{\rm tang}$ or
to give bounds on it.

In Section~\ref{se_tau_tangle} we shall give such bounds on $\tau_{\rm tang}$.
Let us state the main bounds.

If we fix $d\ge 3$, then the lower bound we give on $\tau_{\rm tang}$
for any $d$-regular $B$ is
$$
\tau_{\rm tang} \ge \Bigl\lfloor \bigl( (d-1)^{1/2}+1 \bigr)/2 \Bigr\rfloor
$$
where $\lfloor\ \rfloor$ denotes the floor function, i.e., the largest
integer lower bound;
this bound is tight when $B$ is a bouquet of $d/2$ whole-loops
(so that $d$ is even) and $\cC_n(B)$ is the permutation model.
Furthermore, in models over $B$ in which whole-loops don't occur, we have
$$
\tau_{\rm tang} \ge \bigl\lfloor (d-1)^{1/2} \bigr\rfloor;
$$
this bound is tight in our basic model
whenever $B$ is a bouquet of $d$ half-loops,
and is also tight, except for possibly $d=4$,
for the full cycle model of $d/2$ whole-loops (hence $d$ is even).

As noted in \cite{friedman_alon}, this implies that the full cycle
model has a much lower probability of having non-Alon new eigenvalues
than does the permutation model, at least when $B$ is a bouquets of 
sufficiently many whole-loops.

We also prove that for fixed $d$, as the girth of $B$ tends to
infinity, then so does $\tau_{\rm tang}$.
Hence the lower bounds quoted above can be very
far from tight.
Our proof, however, does not give an explicit relationship between
the girth and $\tau_{\rm tang}$.

\section{Magnifiers and Tangles}
\label{se_new_magnifiers}

In this section we describe some technical results we will prove about the
relative {\em magnification} of random graphs in our basic models.
One could alternatively use a graph theoretic analog
\cite{dodziuk,jerrum1,jerrum2} of
``Cheeger's'' inequality \cite{cheeger};
in this article we will use
magnification.


\subsection{Magnifiers}

We review the results of Alon on {\em magnifiers}.

\begin{definition}
Let $G$ be a graph, and $U\subset V_G$.  We define the
{\em neighbourhood of $U$}, denoted $\Gamma_G(U)$, to be the subset
of $V_G$ consisting of those vertices joined by
an edge of $G$ to a vertex of $U$.
If $\gamma>0$ is a real number,
we say that a graph, $G$, is a {\em $\gamma$-magnifier}
if for all $U\subset V_G$ of size at most $(\#V_G)/2$ we have
$$
\#\bigl( \Gamma_G(U)\setminus U \bigr) \ge \gamma(\#U)  \ ;
$$
moreover, we say that $G$ is a {\em $\gamma$-spreader}
if for all such $U$ we have
$$
\#\bigl(\Gamma_H(U)\bigr) \ge (1+\gamma)(\# U).
$$
\end{definition}

The notion of a magnifier was introduced in \cite{alon_eigenvalues},
where Alon proved the following theorem.
\begin{theorem}[Alon, \cite{alon_eigenvalues}]
If $G$ is $d$-regular and a $\gamma$-magnifier, then for all $i>1$ we have
$$
\lambda_i(G) \le d - \frac{\gamma^2}{4+2\gamma^2}.
$$
\end{theorem}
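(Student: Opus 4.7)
The plan is to combine the variational principle with a Cheeger-style level-set argument that exploits the magnification hypothesis. Since $\lambda_i(G) \le \lambda_2(G)$ for $i > 1$, it suffices to treat $i = 2$. The min-max principle gives
$$
d - \lambda_2(G) \;=\; \min_{\substack{f \perp \mathbf{1} \\ f \ne 0}} \frac{\langle (dI - A_G) f, f\rangle}{\langle f, f\rangle} \;=\; \min_{\substack{f \perp \mathbf{1} \\ f \ne 0}} \frac{\sum_{\{u,v\} \in E_G} (f(u) - f(v))^2}{\sum_{v \in V_G} f(v)^2},
$$
so it suffices to lower-bound this quotient by $\gamma^2/(4 + 2\gamma^2)$ uniformly. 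Fix a $\lambda_2$-eigenvector $f$ orthogonal to $\mathbf{1}$, and, replacing $f$ by $-f$ if needed, assume the set $V_+ = \{v : f(v) > 0\}$ has $|V_+| \le |V_G|/2$. Setting $g = \max(f,0)$, an edge-by-edge comparison using the eigenvector equation shows that $((dI - A_G) g)(v) \le (d - \lambda_2) g(v)$ for each $v \in V_+$, because the edges from $V_+$ to $V_G \setminus V_+$ carry a non-positive value of $f$ that only strengthens the inequality. Multiplying by $g(v)$ and summing yields
$$
\sum_{\{u,v\} \in E_G} (g(u) - g(v))^2 \;\le\; (d - \lambda_2)\,\|g\|^2,
$$
reducing the problem to lower-bounding the Rayleigh quotient of a non-negative $g$ whose support has at most $|V_G|/2$ elements.

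Next, I extract an isoperimetric estimate from the magnification hypothesis via the discrete coarea formula. Order $V_+$ as $v_1, \ldots, v_k$ with $g(v_1) \ge \cdots \ge g(v_k) > 0$, set $S_i = \{v_1, \ldots, v_i\}$, and let $\beta_i$ denote the number of edges of $G$ with exactly one endpoint in $S_i$. Since $|S_i| \le |V_G|/2$, the $\gamma$-magnifier hypothesis gives $\beta_i \ge |\Gamma_G(S_i) \setminus S_i| \ge \gamma i$. A telescoping computation with the convention $g(v_{k+1}) = 0$ then yields
$$
\sum_{\{u,v\} \in E_G} \bigl|g(u)^2 - g(v)^2\bigr|
\;=\; \sum_{i=1}^k \beta_i \bigl(g(v_i)^2 - g(v_{i+1})^2\bigr)
\;\ge\; \gamma \sum_{i=1}^k i \bigl(g(v_i)^2 - g(v_{i+1})^2\bigr)
\;=\; \gamma\, \|g\|^2.
$$

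Finally, I close the loop via Cauchy--Schwarz. Factoring $|g(u)^2 - g(v)^2| = |g(u) - g(v)|(g(u) + g(v))$ produces
$$
\gamma^2 \|g\|^4 \;\le\; \Bigl(\sum_e (g(u) - g(v))^2\Bigr)\Bigl(\sum_e (g(u) + g(v))^2\Bigr),
$$
and together with the $d$-regularity identity $\sum_e (g(u)-g(v))^2 + \sum_e (g(u)+g(v))^2 = 2d\|g\|^2$ and the eigenvector bound from the first paragraph, this feeds into a quadratic inequality in $\mu = d - \lambda_2$ whose solution is $\mu \ge \gamma^2/(4 + 2\gamma^2)$. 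The main obstacle I expect is the last algebraic step: a naive Cauchy--Schwarz with the crude bound $\sum_e (g(u)+g(v))^2 \le 2d\|g\|^2$ only yields the weaker, $d$-dependent Cheeger-type bound $\mu \ge d - \sqrt{d^2 - \gamma^2}$, which can fall short of $\gamma^2/(4+2\gamma^2)$ when $\gamma$ is small relative to $d$. Following Alon's original argument, one must feed the pointwise inequality $((dI - A_G)g)(v) \le \mu\, g(v)$ back into the Cauchy--Schwarz estimate to sharpen the upper bound on $\sum_e (g(u)+g(v))^2$ and extract the $d$-independent constant $\gamma^2/(4 + 2\gamma^2)$. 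Secondary bookkeeping for half-loops and whole-loops under the paper's conventions is straightforward, since loops contribute $0$ to $\sum_e (g(u) - g(v))^2$ and cancel on both sides of the $d$-regularity identity.
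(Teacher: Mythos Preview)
The paper does not prove this theorem at all; it is stated with attribution to Alon \cite{alon_eigenvalues} and then used as a black box in the proof of Lemma~\ref{le_no_coeff_largest_base}. So there is no in-paper proof to compare your proposal against.

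As for the proposal on its own terms: the first three steps (passing to $g=\max(f,0)$ supported on at most half the vertices, the Rayleigh-quotient inequality $\sum_e(g(u)-g(v))^2\le\mu\|g\|^2$, and the coarea/level-set estimate $\sum_e|g(u)^2-g(v)^2|\ge\gamma\|g\|^2$ from magnification) are all standard and correct. The difficulty is exactly where you locate it. Combining these via Cauchy--Schwarz and the identity $A+B=2d\|g\|^2$ yields only $\mu\ge d-\sqrt{d^2-\gamma^2}\sim\gamma^2/(2d)$, which for $d\ge 3$ is genuinely weaker than Alon's $d$-independent bound $\gamma^2/(4+2\gamma^2)\sim\gamma^2/4$. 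Your final paragraph does not close this gap: saying one should ``feed the pointwise inequality back into the Cauchy--Schwarz estimate'' is a statement of intent, not an argument, and it is not clear how the inequality $\sum_{u\sim v}g(u)\ge(d-\mu)g(v)$ would be used to upper-bound $\sum_e(g(u)+g(v))^2$ in a way that eliminates the dependence on $d$. Alon's original proof is organized differently from the Cheeger template, and the $d$-independence is the nontrivial point; as written, your outline is a correct proof of the weaker Cheeger-type bound together with an unfulfilled promise to upgrade it.
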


The notion of a spreader appears in
\cite{friedman_random_graphs,friedman_alon} but in this article
we will only use the notion of magnification; the point is that
it is easier to prove that most random $d$-regular
graphs on $n$ vertices are $\gamma$-spreaders, since spreading is a less
subtle feature than magnification.
However, in this article (unlike \cite{friedman_alon})
a graph $G\in\cC_n(B)$ will never be a spreader if $B$ is a connected,
bipartite graph
(since the subset of all
vertices lying over one side of a bipartition
$B$ has all its neighbours in the other side).

\subsection{Pseudo-Magnification}

In article we will study the following variant of magnification.

\begin{definition}\label{de_pseudo_magnification}
For real $\gamma>0$ and $R\in\naturals$, we say that a graph,
$G$ is an $(R,\gamma)$-pseudo-magnifier if for each
$U\subset V_G$ with
$$
R \le \# U \le (\#V_G)/2
$$
we have
$$
\#\bigl( \Gamma_G(U)\setminus U \bigr) \ge \gamma(\#U) .
$$
\end{definition}

Our interest in this definition is evident in the following definition
and easy lemma.

\begin{definition}
Let $\cC_n(B)$ be a model over a connected graph, $B$.  We say that
the model is {\em pseudo-magnifying} if 
for every $i\in\naturals$ there is a $\gamma>0$ and
$R\in\naturals$ for which
$$
\Prob_{G\in\cC_n(B)}[\mbox{$G$ is not an
$(R,\gamma)$-pseudo-magnifier}] \le O(n^{-i}).
$$
\end{definition}

\begin{lemma}\label{le_no_coeff_largest_base}
Let $\cC_n(B)$ be a pseudo-magnifying, algebraic model over a connected 
graph, $B$.  
Then for any $i$ and $\theta>0$, there is an $r\in \naturals$ such that
for $\ell=\mu_1(B)$ and any $\nu\le \mu_1(B)$ we have
$$
\EE_{G\in\cC_n(B)}\Bigl[
\II_{{\rm TangleFree}(\ge\nu,<r)}(G)
\Bigl(\#
\bigl(\specnew_B(H_G)\cap B_{n^{-\theta}}(\ell) \bigr)\Bigr) \Bigr]
= O(n^{-i}) .
$$
\end{lemma}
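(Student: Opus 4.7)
The plan is to leverage the pseudo-magnifying hypothesis directly, combining Cheeger's inequality with the Ihara determinantal formula, thereby bypassing Theorem~\ref{th_rel_Alon_regular2}. Since $\#\bigl(\specnew_B(H_G)\bigr) \le \#\Edir_G = O(n)$, Markov's inequality reduces the task to bounding
$$
\Prob_{G \in \cC_n(B)}\Bigl[\II_{{\rm TangleFree}(\ge\nu,<r)}(G) = 1 \text{ and } \specnew_B(H_G) \cap B_{n^{-\theta}}(\mu_1(B)) \ne \emptyset\Bigr]
$$
by $O(n^{-(i+1)})$. The argument shows that any $G$ in this event must fail an arbitrarily weak magnification property, with the offending cut either large enough to be caught by pseudo-magnification or else small enough to be ruled out by tangle-freeness.

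First, I would translate the condition on $H_G$ to one on $A_G$. Since $B$ is $d$-regular, so is $G$, and each new $H_G$-eigenvalue $\mu$ with $|\mu - (d-1)| \le n^{-\theta}$ corresponds via $\mu^2 - \lambda\mu + (d-1) = 0$ to a new $A_G$-eigenvalue $\lambda$ with $|\lambda - d| = O(n^{-\theta})$, along the branch $\mu = \lambda/2 + \sqrt{\lambda^2/4 - (d-1)}$. A new eigenfunction of $A_G$ is orthogonal to the pullback of constants from $B$, hence has mean zero on $V_G$; Cheeger's inequality for $d$-regular graphs then gives $h(G) \le \sqrt{2d(d-\lambda)} = O(n^{-\theta/2})$ (with $h(G) = 0$ if $G$ is disconnected). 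Consequently, some $U \subset V_G$ with $1 \le \#U \le (\#V_G)/2$ satisfies $\#\bigl(\Gamma_G(U) \setminus U\bigr) \le \#E_G(U, V_G \setminus U) \le O(n^{-\theta/2}) \cdot \#U$.

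Next, I would perform a case analysis on $\#U$. Invoke the pseudo-magnifying hypothesis at level $i+1$ to obtain $\gamma_0 > 0$ and $R_0 \in \naturals$ such that the probability of $G$ failing to be an $(R_0, \gamma_0)$-pseudo-magnifier is $O(n^{-(i+1)})$, and fix any $r$ with $r > d R_0$. For $n$ large enough that the bound $O(n^{-\theta/2})$ drops below $\min(\gamma_0, 1/R_0)$, two cases arise. If $\#U \ge R_0$, then $G$ fails to be an $(R_0, \gamma_0)$-pseudo-magnifier, contributing the required $O(n^{-(i+1)})$. If $\#U < R_0$, then the product $O(n^{-\theta/2}) \cdot \#U < 1$ forces the integer $\#E_G(U, V_G \setminus U)$ to vanish, so $U$ is a disjoint union of connected components of $G$. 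Picking any such component $C$, we see that $C$ is a finite connected $d$-regular graph, so $\mu_1(C) = d - 1 = \mu_1(B) \ge \nu$ and $\ord(C) \le d \cdot \#C < d R_0 < r$; hence $C$ is a $(\ge\nu, <r)$-tangle of $G$, contradicting $\II_{{\rm TangleFree}(\ge\nu,<r)}(G) = 1$. Only the first case contributes, giving the desired probability bound.

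The main obstacle, though conceptually modest, lies in making the small-$U$ case yield a contradiction uniformly across all $\nu \le \mu_1(B)$ using a single $r$. The key observation is that once the edge boundary is forced to be exactly zero by the combination $\#U < R_0$ and $n$ large, the induced component is genuinely $d$-regular, so its Hashimoto spectral radius is precisely $\mu_1(B)$ and dominates every admissible $\nu$; the order is bounded purely in terms of $R_0$, independently of $n$ and of $\nu$. Combined with the Markov reduction and the trivial bound $\#\Edir_G = O(n)$, the probability bound $O(n^{-(i+1)})$ yields the claimed expectation bound $O(n^{-i})$.
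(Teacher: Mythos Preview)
Your proof is correct and follows essentially the same route as the paper's: both reduce the question to showing that a graph with a new Hashimoto eigenvalue near $d-1$ either fails pseudo-magnification or contains a small connected component that is a tangle of bounded order. The only cosmetic difference is that you invoke Cheeger's inequality directly (which the paper itself mentions as an alternative) while the paper uses Alon's magnification theorem, and you run the logic contrapositively---starting from the eigenvalue and producing a bad cut---whereas the paper starts from the magnification property and deduces the spectral gap.
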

This will give us a way to prove that $\tau_{\rm alg}=+\infty$ in our
basic models when
$B$ is Ramanujan.

\subsection{Results on Pseudo-Magnification}

Here is the main result we need.

\begin{lemma}\label{le_pseudo_magnification}
All of our basic models over a connected, pruned graph, $B$, with 
$\chi(B)>0$
are pseudo-magnifying.
\end{lemma}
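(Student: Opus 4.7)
The plan is a union-bound (first-moment) computation over candidate witnesses. Fix $i\in\naturals$; we must produce $R\in\naturals$ and $\gamma>0$ so that the failure probability is $O(n^{-i})$. If $G\in\cC_n(B)$ fails to be an $(R,\gamma)$-pseudo-magnifier, then there exist disjoint sets $U,T\subset V_G$ with $R\le\#U\le(\#V_G)/2$, $\#T<\gamma\#U$, and $\Gamma_G(U)\subset U\cup T$. Since $V_G=V_B\times[n]$, I parametrize $U,T$ by fibre-wise subsets $U_v,T_v\subset[n]$ of sizes $u_v,t_v$. The edge-independence of the basic models makes the event $\Gamma_G(U)\subset U\cup T$ factor over $\Eor_B$: for each oriented edge $e\in\Eor_B$ from $v$ to $w$, one requires $\sigma(e)(U_v)\subset U_w\cup T_w$, with the reverse inclusion being automatic from $\sigma(\iota_B e)=\sigma(e)^{-1}$.

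For each basic model I would establish a per-edge probability estimate. In the permutation model one has the exact formula
$$
\Prob\bigl[\sigma(e)(U_v)\subset U_w\cup T_w\bigr]
=\prod_{j=0}^{u_v-1}\frac{u_w+t_w-j}{n-j}
\le\biggl(\frac{u_w+t_w}{n-u_v}\biggr)^{u_v}.
$$
For the full-cycle model (when $e$ is a whole-loop) and the (near-)perfect-matching model (when $e$ is a half-loop), a direct count on the relevant conjugacy class of $\cS_n$ yields analogous bounds with at most polynomial corrections, since conditioning on an $n$-cycle or on a fixed-point-free involution has only a subexponential effect on inclusion probabilities.

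Multiplying the per-edge bounds and summing over $(U_v,T_v)$ configurations via the factor $\prod_{v\in V_B}\binom{n}{u_v}\binom{n-u_v}{t_v}$, I would apply Stirling and split the sum into two regimes. In the \emph{small} regime $\sum_v u_v\le c n$, for some small $c=c(B)>0$, the dominant contribution behaves like $(C\gamma)^{\Omega(\sum_v u_v)}$ with $C=C(B)$, which is $O(n^{-i})$ once $\gamma$ is chosen small and $R$ is chosen large as a function of $i$. In the \emph{large} regime $cn\le\sum_v u_v\le(\#V_B)n/2$, a vertex-isoperimetric argument rooted in the connectedness of $B$ forces $\#(\Gamma_G(U)\setminus U)$ to grow at least linearly in $n$ with probability $1-e^{-\Omega(n)}$, contradicting $\#T<\gamma\#U$ for any fixed $\gamma$ small enough.

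The main obstacle is that $B$ is not assumed regular, so to combine the per-edge inclusion bounds into a clean global estimate one must weight the fibre sizes $u_v$ by a suitable vector (morally a power of the Perron--Frobenius eigenvector of $A_B$) so that the edge factors balance against the binomial choices $\binom{n}{u_v}$. A second complication arises from half-loops: when $e$ is a half-loop, $\sigma(e)$ is a uniform (near-)perfect matching rather than a uniform permutation, and the self-referential constraint $\sigma(e)(U_v)\subset U_v\cup T_v$ must be bounded via matching-specific counting, where the exponent in the analog of the displayed inequality becomes roughly $u_v/2$ rather than $u_v$; this weakening is still enough to drive the global sum to $O(n^{-i})$ but requires care. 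This is why the argument naturally splits into Section~\ref{se_pseudomag_nohalf}, where $B$ has no half-loops and the main idea is cleanly isolated, and Section~\ref{se_pseudomag_with_half}, where the bookkeeping is extended to arbitrary $B$.
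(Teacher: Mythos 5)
Your overall strategy---a first-moment/union bound over witness pairs $(U,U')$, per-edge inclusion probabilities coming from edge-independence, and binomial/entropy bookkeeping---is the same as the paper's, but two of your steps contain genuine gaps. First, the claim that for an oriented edge $e$ from $v$ to $w$ the single constraint $\sigma(e)(U_v)\subset U_w\cup T_w$ makes ``the reverse inclusion automatic'' is false: the event $\Gamma_G(U)\subset U'$ imposes \emph{two} constraints per unoriented edge, namely $\sigma(e)(U_v)\subset U'_w$ \emph{and} $\sigma(e)^{-1}(U_w)\subset U'_v$, and neither implies the other. Dropping one still gives a valid upper bound, but that bound collapses when the fibre sizes are very unequal (e.g., $U_v=\emptyset$ and $U_w$ large leaves no constraint at all on edges oriented from $v$ to $w$). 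The paper handles this with two devices you are missing: Corollary~\ref{co_single_edge} keeps both constraints and symmetrizes them via the geometric mean $p_1p_2$, and---crucially---Lemma~\ref{le_almost_equal_fibre} shows \emph{deterministically}, using only the connectedness of $B$, that if $\min_v\#U_v<(1-\epsilon)\max_v\#U_v$ then $\#(\Gamma_G(U)\setminus U)\ge\nu_1\,\#U$ for every $G\in\Coord_n(B)$. This lets the whole union bound be restricted to nearly equal fibres, after which no reweighting is needed; your proposed Perron--Frobenius weighting is left unverified and is not how the imbalance is actually controlled.

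Second, neither of your two regimes is actually closed. In the small regime, a bound of the form $(C\gamma)^{\Omega(\sum_v u_v)}$ is a constant when $\#U=R$ is fixed and cannot be $O(n^{-i})$; the true source of polynomial decay is a factor of roughly $(\#U'/n)^{c\,\#U}$ with $c>0$ controlled by $-\chi(B)=\sum_v(\deg_B(v)-2)/2>0$ (this is exactly where prunedness and $\chi(B)<0$ enter---the hypothesis ``$\chi(B)>0$'' in the statement is a sign typo), and one must then take $R$ large relative to $i$. In the large regime, asserting that connectedness of $B$ forces $\#(\Gamma_G(U)\setminus U)$ to grow linearly ``with probability $1-e^{-\Omega(n)}$'' is essentially assuming the conclusion; note that for bipartite $B$ the set $U$ lying over one side of the bipartition has $\Gamma_G(U)\cap U=\emptyset$, so any isoperimetric claim here is delicate and requires proof. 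The paper instead pushes the same entropy computation all the way to $\#U=(\#V_G)/2$ via Lemma~\ref{le_binom_coeff_estimate} (concavity of $(1/C)H_2(x)-H_2(\theta x)$ together with checking the endpoints $x=S_0/n$ and $x=1/2+\theta$); that computation is the technical heart of the proof and is absent from your sketch. Your treatment of the full-cycle and matching models (a factor $n$ loss for full cycles, an exponent of roughly $s_v/2$ for involutions) does match Lemmas~\ref{le_permutation_prob} and~\ref{le_involution_prob}.
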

We remark that the proof we give can be modified to work without the
condition that $B$ be pruned, but the assumption of being pruned
simplifies the proof (and in our applications, $B$ will be $d$-regular
for $d\ge 3$, so $B$ is necessarily pruned).

We prove this with a standard type of counting argument.
The case where $B$ has no half-loops is a bit simpler and illustrates
all the main ideas; hence
we first prove Lemma~\ref{le_pseudo_magnification} in this case.

\section{Pseudo-Magnification in Base Graphs Without Half-Loops}
\label{se_pseudomag_nohalf}

The point of this section is to prove that when $B$ has no half-loops,
then our (two) basic models over $B$ are pseudo-magnifying.

\begin{lemma}\label{le_pseudo_mag_no_half_loops}
Let $B$ be a connected, pruned graph without half-loops and with
$\chi(B)<0$.  Then the permutation and
full-cycle models over $B$ are pseudo-magnifying.
\end{lemma}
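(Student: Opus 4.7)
The plan is a first-moment (union-bound) argument. We fix $i\in\naturals$ and will choose $\gamma>0$ small and $R\in\naturals$ large, both depending on $i$, $B$, and the model. For $G\in\cC_n(B)$, failure of the $(R,\gamma)$-pseudo-magnifier condition amounts to the existence of $U\subset V_G$ with $R\le\#U\le(\#V_G)/2$ and $T\subset V_G\setminus U$ with $\#T<\gamma\#U$, such that every directed edge of $G$ with tail in $U$ has head in $W\eqdef U\cup T$. Writing $U_v,T_v,W_v\subset[n]$ for the fibers over $v\in V_B$, with sizes $u_v,t_v,w_v$, this bad event is equivalent to $\sigma(e)(U_v)\subset W_{v'}$ holding for every directed edge $e:v\to v'$ in $\Edir_B$.

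For the permutation model, $\{\sigma(e)\}_{e\in\Eor_B}$ are independent uniform elements of $\cS_n$; since the constraints from $e$ and $\iota e$ both concern the same $\sigma(e)$, we upper bound their joint probability by either individual one, namely $\binom{w_{v'}}{u_v}/\binom{n}{u_v}$. For the full-cycle model, whole-loop permutations are uniform $n$-cycles, and a standard counting (cf.~\cite{friedman_alon}) yields the same bound up to a factor $n^{O(1)}$ that we absorb. By edge-independence and the union bound,
$$
\Prob[G\text{ fails}]\le n^{O(1)}\sum_{(u_v),(t_v)}\prod_{v\in V_B}\binom{n}{u_v}\binom{n-u_v}{t_v}\prod_{\substack{e\in\Eor_B\\e:v\to v'}}\frac{\binom{w_{v'}}{u_v}}{\binom{n}{u_v}},
$$
summed over $R\le s\eqdef\sum u_v\le n\#V_B/2$ and $\sum t_v<\gamma s$. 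By Stirling, each per-configuration summand is $\exp(nF(\vec\alpha,\vec\beta)+O(\log n))$ with $\alpha_v=u_v/n$, $\beta_v=w_v/n$, and
$$
F(\vec\alpha,\vec\beta)=\sum_{v\in V_B}H_3(\alpha_v,\beta_v-\alpha_v,1-\beta_v)+\sum_{\substack{e\in\Eor_B\\e:v\to v'}}\phi(\alpha_v,\beta_{v'}),
$$
where $H_3$ is the three-part (natural-log) entropy and $\phi(\alpha,\beta)=-(\beta-\alpha)\log(\beta-\alpha)+\beta\log\beta+(1-\alpha)\log(1-\alpha)$ is the asymptotic of $(1/n)\log[\binom{w}{u}/\binom{n}{u}]$.

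The main obstacle is to show that $F<0$ holds uniformly on the compact feasible region $K_\gamma$ for $\gamma$ sufficiently small, with quantitative control. By continuity of $F$ and compactness, this reduces to verifying $F<0$ at $\gamma=0$: there $\beta_v=\alpha_v$, the constraint $\sigma(e)(U_v)\subset U_{v'}$ forces $\alpha_v\le\alpha_{v'}$ on every directed edge, and connectedness of $B$ implies $\alpha_v=\alpha$ is constant. A direct computation then gives
$$
F\bigl|_{\gamma=0}=H(\alpha)\bigl(\#V_B-\#E_B\bigr)=H(\alpha)\,\chi(B),
$$
which is strictly negative on $\alpha\in(0,1)$ since $\chi(B)<0$. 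In particular, for $\alpha$ small, $F|_{\gamma=0}\sim-|\chi(B)|\,\alpha|\log\alpha|$, capturing the correct rate of decay as $\alpha\to 0$.

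Continuity of $F$ in the parameters and in $\gamma$ (where $K_\gamma$ approaches $K_0$ as $\gamma\to 0$) will then yield a uniform bound $F(\vec\alpha,\vec\beta)\le-c_1\bigl(\sum_v\alpha_v\bigr)\log\bigl(1/\sum_v\alpha_v\bigr)$ on $K_\gamma$ for $\gamma$ below some $\gamma_0=\gamma_0(B)>0$ and some $c_1=c_1(B)>0$. The per-configuration summand is then at most $\exp(-c_1 s\log(n/s)+O(\log n))$, and since the number of $(u_v,t_v)$ tuples with $\sum u_v=s$ is polynomial in $s$, the total sum over $s\ge R$ is dominated by the $s=R$ term, giving $O(n^{-c_1 R})$, which is $O(n^{-i})$ upon choosing $R\ge i/c_1$. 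The full-cycle model variant is handled identically once the aforementioned $n^{O(1)}$ factor is absorbed, completing the proof.
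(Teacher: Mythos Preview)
Your approach is essentially the same first-moment/union-bound argument the paper uses, and your key computation $F|_{\gamma=0}=H(\alpha)\chi(B)$ is the moral equivalent of the paper's observation that the combined counting-times-probability bound carries the exponent $\prod_v\binom{n}{s_v}^{(2-\deg_B(v))/2}$, which is favorable because $B$ is pruned and $\chi(B)<0$. The per-edge probability estimate and the reduction of feasibility to $\alpha_v\le\beta_{v'}$ (forcing near-equal fibres when $\gamma$ is small) also match the paper's Lemma~\ref{le_almost_equal_fibre} and Corollary~\ref{co_single_edge}.

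The genuine gap is your passage from $F|_{\gamma=0}<0$ to the uniform bound $F(\vec\alpha,\vec\beta)\le -c_1\bigl(\sum_v\alpha_v\bigr)\log\bigl(1/\sum_v\alpha_v\bigr)$ on $K_\gamma$ for small $\gamma$. Compactness and continuity of $F$ do give a uniform negative bound on the part of $K_\gamma$ where $\sum_v\alpha_v$ is bounded away from $0$; that part is fine and yields exponentially small probability. But the region $K_\gamma$ includes configurations with $\sum_v\alpha_v$ arbitrarily small (since $R$ is fixed and $n\to\infty$), and there $F\to 0$, so compactness says nothing about the \emph{rate}. You need to show that the perturbation terms coming from $\gamma>0$---namely the contributions of order $\delta\alpha\log(1/\delta)$ from the $H_3$ and $\phi$ expansions when $\beta_v-\alpha_v=\delta\alpha$---do not swamp the main term $|\chi(B)|\alpha\log(1/\alpha)$. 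This is true (for fixed small $\gamma$ the correction is $O(\gamma\log(1/\gamma)\alpha)+O(\gamma\alpha\log(1/\alpha))$, both dominated by the main term as $\alpha\to 0$), but it requires an explicit Taylor analysis of $F$ near $\vec\alpha=0$, not a continuity argument.

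The paper handles exactly this regime by abandoning the entropy formulation in favor of direct binomial inequalities: after reducing to near-equal fibres via Lemma~\ref{le_almost_equal_fibre}, it proves Lemma~\ref{le_binom_coeff_estimate}, which says that for suitable $\theta,S_0$ one has $\binom{n}{s'}\le n^{-j}\binom{n}{s}^{1/C}$ whenever $S_0\le s\le n(1/2+\theta)$ and $s'\le\theta s$. Applied with $s'=s'_{\max}-s_{\min}$ and $s=s_v$, this converts the residual factor $\binom{n}{s_v}^{-1/2}$ into the desired $n^{-i}$. Your proof becomes complete once you either carry out the explicit small-$\alpha$ expansion of $F$ sketched above, or invoke an inequality like Lemma~\ref{le_binom_coeff_estimate} directly.
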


We will address the case where $B$ has half-loops in the next section;
however, the case where $B$ has no half-loops makes the
estimates simpler, and yet gives all the main ideas we will need
for the general case.  Hence we prove this special case first.

\subsection{The Counting Argument}

We will prove Lemma~\ref{le_pseudo_mag_no_half_loops} by a counting
argument.
Let us give basic definitions we need.

Our counting argument works as follows:  if $G\in\Coord_n(B)$,
then
$$
V_G = V_B \times [n].
$$
If such a $G$ is not an $(R,\gamma)$-pseudo-magnifier, then by
definition it follows that there
are sets
\begin{equation}\label{eq_sets_U_Uprime}
U \subset U' \subset V_G = V_B \times [n]
\end{equation} 
whose sizes satisfy
\begin{equation}\label{eq_sizes_U_Uprime}
R\le \#U \le \#V_G, \quad \#(U'\setminus U) = 
\lceil \gamma (\#U) \rceil -1 
\end{equation} 
where $\lceil \cdot \rceil$ denotes the ceiling function (the smallest
integer upper bound), such that
\begin{equation}\label{eq_gamma_U_in_Uprime}
\Gamma_G(U)\subset U' \ .
\end{equation} 
Our counting argument is the simple one: the probability that
$G\in\cC_n(B)$ is not a pseudo-magnifier is bounded by
$$
\sum_{U,U'} \Prob_{G\in\cC_n(B)}[\Gamma_G(U)\subset U']
$$
where we sum over each pair $U,U'$ satisfying
\eqref{eq_sets_U_Uprime} and \eqref{eq_sizes_U_Uprime};
we will show that for each $i$ there are $R,\nu$ such that
the above sum is bounded by $O(n^{-i})$.

Now we build up the tools we need.  We begin by setting
\begin{equation}\label{eq_p_U_U_prime}
p(U,U') \eqdef \Prob_{G\in\cC_n(B)}[\Gamma_G(U)\subset U'];
\end{equation} 
we now study $p(U,U')$.

\subsection{Almost Equal Fibre Sizes}

First we prove out that $p(U,U')=0$ unless
$U\subset V_B\times [n]$ has nearly equal
``fibre sizes.''  Let us make this precise.

\begin{definition}
Let $B$ be a connected graph, and 
let $U\subset V_B\times [n]$ for some $n\in\naturals$.
By the {\em $V_B$-fibres} (or simply {\em fibres}) {\em of $U$}
we mean the family of subsets of $[n]$,
$\{U_v\}_{v\in V_B}$, indexed on $v\in V_B$, defined by
$$
U_v \eqdef \{ i\in[n] \ | \ (v,i)\in U \} \subset [n].
$$
\end{definition}

\begin{lemma}\label{le_almost_equal_fibre}
Let $B$ be a connected graph.  Then for any $\epsilon\in(0,1)$ there is
a $\nu_1=\nu_1(\epsilon)>0$ for which the following is true:
for $n\in\naturals$ sufficiently large (depending only on $B,\epsilon$),
let $U\subset V_B\times [n]$ satisfy
\begin{equation}\label{eq_max_min_epsilon}
\min_{v\in V_B} \# U_B < (1-\epsilon) \max_{v\in V_B} \# U_B.
\end{equation} 
Then for any $G\in\Coord_n(B)$, 
\begin{equation}\label{eq_mag_fibres_diff}
\# \bigl( \Gamma_G(U) \setminus U \bigr) \ge \nu_1(\# U).
\end{equation} 
\end{lemma}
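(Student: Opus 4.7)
The plan is to exploit the coordinatized structure of $G\in\Coord_n(B)$: the permutation assignment $\sigma$ relates fibres of $U$ over adjacent vertices of $B$ through bijections of $[n]$. First I would record the pointwise observation that for any $e\in\Edir_B$ with $t_Be=v$ and $h_Be=u$, the rule $h_G(e,i)=(u,\sigma(e)i)$ implies that $\sigma(e)(U_v)$ is contained in the $u$-fibre of $\Gamma_G(U)$. Since $\sigma(e)$ is a permutation of $[n]$, whenever $|U_v|>|U_u|$ at least $|U_v|-|U_u|$ elements of $\sigma(e)(U_v)$ must lie outside $U_u$, yielding a lower bound of $|U_v|-|U_u|$ on the size of the $u$-fibre of $\Gamma_G(U)\setminus U$. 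This step is pointwise in $G$ and in particular insensitive to the choice of basic model; it is the only graph-theoretic input needed.

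The second step would be to use the connectedness of $B$ to locate such an unbalanced pair of adjacent fibres. Writing $M$ and $m$ for the maximum and minimum fibre sizes, hypothesis \eqref{eq_max_min_epsilon} forces $M-m>\epsilon M$; choosing a path in $B$ from an argmax to an argmin (of length at most the diameter of $B$, which is itself bounded by $\#V_B$) and telescoping the fibre sizes along it, I conclude that some consecutive pair $(v,u)$ on the path must satisfy $|U_v|-|U_u|\ge\epsilon M/\#V_B$.

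Combining the two steps yields $\#\bigl(\Gamma_G(U)\setminus U\bigr)\ge\epsilon M/\#V_B$, which converts into \eqref{eq_mag_fibres_diff} via the trivial estimate $\#U=\sum_{w}|U_w|\le M\cdot\#V_B$; this permits taking $\nu_1(\epsilon)$ proportional to $\epsilon/(\#V_B)^2$. The hypothesis that $n$ be sufficiently large plays no substantive role here, being needed only to guarantee that \eqref{eq_max_min_epsilon} is not vacuous (one needs $M\ge 1$). I anticipate no real obstacle: the whole argument is a one-line application of connectedness together with the fact that each $\sigma(e)$ permutes $[n]$, and it yields a uniform bound over every $G\in\Coord_n(B)$ rather than merely a probabilistic statement, which is exactly what is wanted before feeding this lemma into the subsequent counting argument over the remaining, nearly balanced fibre profiles.
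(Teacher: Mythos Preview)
Your proposal is correct and follows essentially the same approach as the paper: take a path in $B$ from a fibre of maximum size to one of minimum size, locate an adjacent pair along the path where the fibre size drops significantly, and use that $\sigma(e)$ is a bijection of $[n]$ to force many new neighbours over the smaller fibre. The only difference is cosmetic: you telescope the drop $M-m>\epsilon M$ additively over the at most $\#V_B$ edges of the path to find a pair with $|U_v|-|U_u|\ge \epsilon M/\#V_B$, whereas the paper telescopes multiplicatively (introducing $\epsilon'$ with $(1-\epsilon')^{m-1}=1-\epsilon$) to find a pair with $|U_{v_{i+1}}|<(1-\epsilon')|U_{v_i}|$; your version is slightly cleaner and yields $\nu_1(\epsilon)=\epsilon/(\#V_B)^2$ directly.
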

\begin{proof}
Let $m=\#V_B$, and let $\epsilon'>0$ be such that
$$
\bigl(1-\epsilon'\bigr)^{m-1} = 1-\epsilon.
$$
Let $v_{\rm min},v_{\rm max}$ be respective vertices where
$\#U_v$ (in the above definition)
takes its minimum and maximum values.  Since $B$ is connected
there is a path from $v_{\rm min}$ to $v_{\rm max}$ consisting of
vertices
$$
v_{\rm max}=v_1,v_2,\ldots,v_k=v_{\min} 
$$
with $k\le m-1$.  If \eqref{eq_max_min_epsilon} holds, then 
$$
\#U_{v_{\min}} < (1-\epsilon')^{m-1} \bigl(\#U_{v_{\max}}\bigr)
\le (1-\epsilon')^k \bigl(\#U_{v_{\max}}\bigr)
$$
and therefore
for some
$i\in [k-1]$ we must have
$$
\bigl(1-\epsilon'\bigr) \bigl( \# U_{v_{i+1}} \bigr)
< \# U_{v_i}  ;
$$
consider the smallest value of $i$ for which the above holds.  Then
we have
$$
\# U_{v_i} \ge \bigl(1-\epsilon'\bigr) \bigl( \# U_{v_{i-1}} \bigr)
\ge \cdots
\ge \bigl(1-\epsilon'\bigr)^{i-1} \bigl( \# U_{v_1} \bigr) 
\ge \bigl(1-\epsilon\bigr)\bigl( \# U_{v_1} \bigr) .
$$
%
Since there is an edge from $v_i$ to $v_{i+1}$, we have that
$\Gamma(U)$ has a fibre of size at least $\#U_{v_i}$ over $v_{i+1}$,
and hence
$$
\# \bigl( \Gamma_G(U) \setminus U \bigr) 
\ge
(\# U_{v_i}) - (\# U_{v_{i+1}})\ge 
\epsilon' (\# U_{v_i}) \ge \epsilon'(1-\epsilon)(\# U_{v_1}).
$$
Since $\# U \le m (\#U_{v_{\max}})=m(\#U_{v_1})$, applying this to the 
rightmost term above yields
$$
\# \bigl( \Gamma_G(U) \setminus U \bigr) 
\ge
\epsilon'(1-\epsilon)(1/m) (\# U).
$$
Hence the lemma holds, i.e., \eqref{eq_max_min_epsilon} holds, with
$$
\nu_1(\epsilon)= \epsilon' (1-\epsilon)/m > 0.
$$
\end{proof}

\subsection{The Probability Bound}

Next we give a simple bound for $p(U,U')$ in
\eqref{eq_p_U_U_prime}.

\begin{lemma}\label{le_permutation_prob}
Let $n\in\naturals$, and let $W,W'\subset [n]$ be subsets with
$\#W\le \# W'$.  If $\sigma\in\cS_n$ is a random permutation, then
$$
\Prob_{\sigma}[\sigma(W)\subset W']
= \frac{\binom{\#W'}{\#W}}{\binom{n}{\#W}} ,
$$
and if $\sigma$ is a random full-cycle then the above probability is
at most $n$ times the above right-hand-side.
\end{lemma}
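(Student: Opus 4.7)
The plan is to prove the two assertions separately, starting with the uniform permutation case. I would first observe that $\cS_n$ acts transitively on the collection of size-$\#W$ subsets of $[n]$, and that the stabilizer of any such subset has the same cardinality (namely $(\#W)!(n-\#W)!$). Consequently the image $\sigma(W)$ is uniformly distributed over the size-$\#W$ subsets of $[n]$ when $\sigma$ is uniform on $\cS_n$. The probability that $\sigma(W)\subset W'$ is therefore the ratio of the number of size-$\#W$ subsets of $W'$ to the number of size-$\#W$ subsets of $[n]$, which gives the stated ratio of binomial coefficients. Alternatively, one can count directly: there are $(\#W')!/(\#W'-\#W)! = \binom{\#W'}{\#W}(\#W)!$ injections $W\to W'$, each of which extends to $(n-\#W)!$ permutations of $[n]$; dividing by $n!$ recovers the same expression.

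For the full-cycle case, I would use the trivial domination bound between the two distributions. The set of full $n$-cycles in $\cS_n$ has cardinality $(n-1)!=n!/n$, so for any event $E\subset\cS_n$,
$$
\Prob_{\sigma \text{ full cycle}}[\sigma\in E] \;=\; \frac{\#(E\cap\{\text{full cycles}\})}{(n-1)!} \;\le\; \frac{\#E}{(n-1)!} \;=\; n\cdot \Prob_{\sigma\in\cS_n}[\sigma\in E].
$$
Applying this with $E=\{\sigma : \sigma(W)\subset W'\}$ and combining with the formula from the first paragraph immediately yields the second assertion.

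There is no substantive obstacle in this argument; both parts reduce to elementary counting. The only mildly conceptual step is recognizing that the image $\sigma(W)$ is uniformly distributed over equal-sized subsets, and the only mildly lossy step is the trivial domination bound for full cycles, which introduces the factor of $n$. A sharper bound for full cycles could be obtained by directly enumerating the $n$-cycles obeying the constraint $\sigma(W)\subset W'$ (viewing a cycle as a necklace and accounting for the $a=\#W$ forbidden transitions), but such a refinement is unnecessary: the factor of $n$ is absorbed harmlessly in the union bound of the subsequent pseudo-magnification argument, since the relevant binomial ratios $\binom{\#W'}{\#W}/\binom{n}{\#W}$ already contribute the exponentially small factors that drive the estimate.
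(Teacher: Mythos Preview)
Your proposal is correct and follows essentially the same approach as the paper's proof: the paper declares the permutation formula ``immediate'' (your uniform-image argument supplies the details), and for the full-cycle case the paper makes exactly your observation that each full cycle has probability $1/(n-1)! = n\cdot(1/n!)$, i.e., $n$ times its probability under the uniform measure on $\cS_n$, which yields the factor-of-$n$ bound on any event.
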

\begin{proof}
The formula for $\sigma$ a random permutation is immediate.
Each random full-cycle occurs with probability $1/(n-1)!$, which
is exactly $n$ times its probability of occurring as a random
permutation; this implies the statement about the full-cycle case.
\end{proof}

\begin{corollary}\label{co_single_edge}
Let $B$ be a graph, and $\cC_n(B)$ be one of our basic models.
Let $n\in\naturals$, and let 
$U,U'\subset V_B\times [n]$.  
For each $v\in V_B$ let $s_v=\#U_v$ and $s'_v=\#U'_v$.
If $e\in E_B$ is not a half-loop,
and $\sigma$ is the permutation assignment $\Edir_B\to\cS_n$ associated
to a $G\in\cC_n(B)$, then the probability that 
$$
\sigma(e)U_{te}\subset U'_{he} \quad \mbox{and}\quad
\sigma(e^{-1})U_{he}\subset U'_{te}
$$
is $0$ if 
$s'_{he}<s_{te}$ or $s'_{te}<s_{he}$, and is otherwise
less than $p_1 p_2$, where 
$$
p_1 = \sqrt{  \frac{ n \binom{s'_{he}}{s_{te}}}{\binom{n}{s_{te}}} } ,
\quad
p_2 = \sqrt{  \frac{ n \binom{s'_{te}}{s_{he}}}{\binom{n}{s_{he}}} } .
$$
\end{corollary}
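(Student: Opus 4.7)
My plan is to reduce the joint event to a statement about a single random permutation and then combine the two directional bounds via a trivial Cauchy--Schwarz-type inequality. First, observe that the zero cases are immediate: since $\sigma(e)$ is a bijection of $[n]$, the image $\sigma(e)U_{te}$ has cardinality exactly $s_{te}$, so it cannot be contained in a set of size $s'_{he}<s_{te}$; the case $s'_{te}<s_{he}$ is symmetric. From now on assume $s_{te}\le s'_{he}$ and $s_{he}\le s'_{te}$.

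Because $e$ is not a half-loop, the identity $\sigma(\iota_B e)=\sigma(e)^{-1}$ from \eqref{eq_sigma_iota_B} shows that both conditions depend only on the single $\cS_n$-valued random variable $\sigma=\sigma(e)$, which in any of our basic models is either a uniformly random permutation of $[n]$ or a uniformly random full $n$-cycle (the latter being the only alternative since $e$ is not a half-loop). Set
\[
A=\bigl\{\sigma(U_{te})\subset U'_{he}\bigr\},
\qquad
B=\bigl\{\sigma^{-1}(U_{he})\subset U'_{te}\bigr\}.
\]
Then the probability in question equals $\Prob[A\cap B]$, and I would bound it using the elementary inequality $\Prob[A\cap B]\le\sqrt{\Prob[A]\Prob[B]}$, which is an instant consequence of $\Prob[A\cap B]\le\min(\Prob[A],\Prob[B])$.

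It remains to bound the two marginals. Lemma~\ref{le_permutation_prob} applied directly to $\sigma$ gives $\Prob[A]\le n\binom{s'_{he}}{s_{te}}/\binom{n}{s_{te}}$, the factor of $n$ being present in the full-cycle case and absent (hence trivially dominated) in the permutation case. For $\Prob[B]$ I would use the fact that inversion $\sigma\mapsto\sigma^{-1}$ is a bijection of $\cS_n$ which also stabilizes the set of full $n$-cycles, and in both models it preserves the uniform measure; thus $\sigma^{-1}$ has the same distribution as $\sigma$, and Lemma~\ref{le_permutation_prob} applies verbatim to yield $\Prob[B]\le n\binom{s'_{te}}{s_{he}}/\binom{n}{s_{he}}$. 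Substituting these into the Cauchy--Schwarz bound produces exactly $p_1 p_2$.

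The only non-mechanical point is the distributional invariance of the full-cycle model under inversion, which I note is immediate since the inverse of an $n$-cycle is again an $n$-cycle and inversion is an involution on the set of all full $n$-cycles; everything else is a direct application of Lemma~\ref{le_permutation_prob}, so I do not expect any genuine obstacle.
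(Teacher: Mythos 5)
Your proposal is correct and follows essentially the same route as the paper: bound the joint event by a marginal, apply Lemma~\ref{le_permutation_prob} to each direction (using that $\sigma(e)^{-1}$ has the same distribution as $\sigma(e)$), and note that the smaller of $p_1^2,p_2^2$ is at most $p_1p_2$ — your use of $\min(a,b)\le\sqrt{ab}$ is just a cosmetic rephrasing of the paper's case split on whether $p_1\le p_2$. Your explicit remark that inversion preserves the uniform measure on permutations and on full cycles is a point the paper leaves implicit in its ``similarly,'' and is a worthwhile clarification.
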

\begin{proof}
The statement about when the probability is zero is clear.
Otherwise,
if $p_1\le p_2$ then we have $\sigma(e)U_{te}\subset U'_{he}$ occurs
with probability at most $p_1^2\le p_1p_2$; similarly if
$p_1>p_2$ and for $\sigma(e^{-1})U_{he}\subset U'_{te}$.
\end{proof}

\subsection{A Binomial Coefficient Estimate}

In this section we introduce some useful formulas regarding binomial
coefficients, and prove a lemma that will be useful to us in the
counting argument we give.
This lemma, as is typical in our counting arguments, is straightforward
but involves some calculation.

First, 
Stirling's formula shows that for all integers $0\le b\le a$ we have
that
$$
C_1 2^{a H_2(b/a)} a^{-1/2}
\le
\binom{a}{b} \le C_2 2^{a H_2(b/a)}
$$
for some absolute constants $C_1,C_2$, where
$$
H_2(\mu) \eqdef -\mu\log_2\mu - (1-\mu)\log_2(1-\mu) .
$$
It follows that for $0\le b\le a$ and $a\ge 1$ we have
\begin{equation}\label{eq_H_two_approx}
\log_2\binom{a}{b} = a H_2(b/a) +O(\log_2 a)
\end{equation}
where the $O(\log_2 a)$ is bounded by an absolute constant 
(i.e., independent of $a,b$) times
$\log_2 a$.

We will also use the formula for the second derivative of $H_2(x)$
\begin{equation}\label{eq_H_second_deriv}
H_2''(x) = \frac{-\log_2 e}{x(1-x)}, \quad \forall x\in(0,1).
\end{equation}

\begin{lemma}\label{le_binom_coeff_estimate}
For any $C>0$ and $j\in\naturals$, 
for any sufficiently small $\theta>0$ the following holds:
there are natural numbers $S_0=S_0(\theta)$ and
$n_0=n_0(\theta)$ such that
for $n\ge n_0$ 
and any non-negative integers $s',s$
with $S_0\le s\le n(1/2 + \theta)$, and $s'\le \theta s$
we have
\begin{equation}\label{eq_binom_coeff_lemma}
\binom{n}{s'} \le n^{-j}\binom{n}{s}^{1/C}  .
\end{equation} 
\end{lemma}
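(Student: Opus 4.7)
The plan is to take base-2 logarithms of both sides of \eqref{eq_binom_coeff_lemma} and, using the Stirling-type estimate \eqref{eq_H_two_approx}, reduce the statement to the asymptotic inequality
$$n H_2(s'/n) \le (1/C)\, n H_2(s/n) - (j + K + K/C)\log_2 n,$$
where $K$ denotes the universal constant in \eqref{eq_H_two_approx}. Writing $\mu = s/n$ and $\mu' = s'/n$, this will follow from two ingredients: (i) the entropy ratio $H_2(\mu')/H_2(\mu)$ is at most $1/(2C)$ for $\theta$ sufficiently small, and (ii) $n H_2(\mu)$ exceeds $(2CK + 2K + 2Cj)\log_2 n$. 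Step (i) will fix a small choice of $\theta = \theta(C)$, and step (ii) will then force $S_0 = S_0(C,j,\theta)$ to be taken large.

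For step (i), I would split into two ranges. If $\mu \ge \theta^{1/2}$, then $\mu' \le \theta\mu \le \theta$, and the monotonicity of $H_2$ on $[0,1/2]$ combined with the symmetry $H_2(x) = H_2(1-x)$ (to handle $\mu \in [1/2,\,1/2+\theta]$) yields $H_2(\mu') \le H_2(\theta)$ and $H_2(\mu) \ge H_2(\theta^{1/2})$; the asymptotic $H_2(x) \sim x\log_2(1/x)$ as $x \to 0^+$ then bounds the ratio by $O(\theta^{1/2})$. If instead $\mu < \theta^{1/2}$ (with $s'\ge 1$), I would use the elementary bounds $H_2(\mu) \ge \mu\log_2(1/\mu)$ (valid on $(0,1/2]$) and $H_2(\mu') \le 2\mu'\log_2(1/\mu')$ (valid for $\mu' \le 1/e$), together with $\mu'\ge 1/n$ and $\mu\ge S_0/n$, to write
$$\frac{H_2(\mu')}{H_2(\mu)} \le 2\cdot\frac{\mu'}{\mu}\cdot\frac{\log_2(1/\mu')}{\log_2(1/\mu)} \le 2\theta \cdot \frac{\log_2 n}{\log_2(n/S_0)} \le 4\theta,$$
valid for $n \ge S_0^2$. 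Thus in either range the ratio is driven below $1/(2C)$ by taking $\theta$ small enough (in a manner depending only on $C$).

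For step (ii), the key fact is that $nH_2(s/n) \ge s\log_2(n/s)$, and on $s \in [S_0, n/2]$ this function has its maximum at $s = n/e$ and so attains its minimum at the endpoints; the smaller endpoint value is $S_0\log_2(n/S_0) \ge (1/2)S_0\log_2 n$ for $n$ large relative to $S_0$, while the value at $s = n/2$ is $n/2$, much larger still. For $s \in (n/2,\,n(1/2+\theta)]$, the symmetry $H_2(s/n) = H_2(1 - s/n)$ gives $nH_2(s/n) \ge nH_2(1/2 - \theta) = \Theta(n)$, which easily dominates $\log_2 n$. Thus a choice $S_0 \ge 4C(K+K/C+j)$ or so suffices. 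The degenerate case $s' = 0$, in which $\binom{n}{s'} = 1$, reduces to the requirement $\binom{n}{s} \ge n^{jC}$, which follows from the same lower bound on $\binom{n}{s}$ as soon as $S_0 \ge jC+1$. I do not expect any single step to be a genuine obstacle; the main work lies in the layered choice of parameters $\theta(C,j)$, then $S_0(C,j,\theta)$, then $n_0 = n_0(C,j,\theta,S_0)$, and in verifying that the $O(\log_2 n)$ error terms from the entropy approximation never overwhelm the main gap produced by step (i).
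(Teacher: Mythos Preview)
Your approach is genuinely different from the paper's and mostly sound, but one step in the case $\mu < \theta^{1/2}$ of (i) is incorrect as written.

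You bound $\log_2(1/\mu')/\log_2(1/\mu)$ by $\log_2 n / \log_2(n/S_0)$. The numerator bound $\log_2(1/\mu') \le \log_2 n$ (from $\mu' \ge 1/n$) is fine, but the implicit denominator bound $\log_2(1/\mu) \ge \log_2(n/S_0)$ is backwards: $\mu \ge S_0/n$ gives $\log_2(1/\mu) \le \log_2(n/S_0)$, not $\ge$. Indeed, in this regime $\mu$ can be as large as $\theta^{1/2}$, so $\log_2(1/\mu)$ is only bounded below by the constant $\tfrac12\log_2(1/\theta)$, and the ratio $\log_2(1/\mu')/\log_2(1/\mu)$ is \emph{not} bounded uniformly in $n$. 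The repair is to bound the product $(\mu'/\mu)\log_2(1/\mu')$ rather than each factor separately: since $x\mapsto x\log_2(1/x)$ is increasing on $(0,1/e)$ and $\mu'\le\theta\mu\le\theta^{3/2}<1/e$, one has $\mu'\log_2(1/\mu') \le \theta\mu\bigl(\log_2(1/\theta)+\log_2(1/\mu)\bigr)$, whence
\[
\frac{\mu'}{\mu}\cdot\frac{\log_2(1/\mu')}{\log_2(1/\mu)} \;\le\; \theta\Bigl(1 + \frac{\log_2(1/\theta)}{\log_2(1/\mu)}\Bigr) \;\le\; 3\theta,
\]
using $\log_2(1/\mu) > \tfrac12\log_2(1/\theta)$. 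This restores your conclusion (with $6\theta$ in place of $4\theta$), and the rest of your argument then goes through.

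For comparison, the paper avoids the case split altogether. After the same logarithmic reduction it replaces $s'/n$ by its worst case $\theta x$ (where $x=s/n$), sets $g(x)=(1/C)H_2(x)-H_2(\theta x)$, and uses the explicit formula $H_2''(x)=-(\log_2 e)/\bigl(x(1-x)\bigr)$ to show that $g$ is concave on $(0,1)$ whenever $\theta\le\min\bigl(1/(2C),1/2\bigr)$. Concavity reduces the verification of $g(x)\ge j'(\log_2 n)/n$ on $[S_0/n,\,1/2+\theta]$ to a check at the two endpoints, each of which is an easy computation. Your route is more elementary (no calculus on $H_2$) at the cost of the range-splitting and the more delicate bound above; the paper's concavity trick is slicker but less transparent about where the constraints on $\theta$ and $S_0$ actually come from.
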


\begin{proof}
Taking logs and dividing by $n$ it suffices to show that
\begin{equation}\label{eq_s_prime_j_prime}
H_2(s'/n) \le -j'\frac{\log_2 n}{n} + (1/C) H_2(s/n)
\end{equation} 
where $j'$ is $j$ plus constants to absorb the $O(\log_2 n)$ terms
in \eqref{eq_H_two_approx}
with $a=n$.
Hence it suffices to prove that all sufficiently small $\theta>0$,
there are $S_0,n_0$ such that
for all $x\in[S_0/n,1/2+\theta]$ we have
\begin{equation}\label{eq_g_versus_j_prime}
g(x) \ge j'\frac{\log_2 n}{n},
\quad\mbox{where}\quad
g(x) \eqdef  (1/C) H_2(x) - H_2(\theta x).
\end{equation} 
We shall do so by first showing that
for fixed $C>0$, for sufficiently small $\theta>0$ we have that
\begin{equation}\label{eq_g_concave_down}
g''(x) \le 0 \quad\forall x\in(0,1) .
\end{equation} 
It follows that to establish
\eqref{eq_g_versus_j_prime} it suffices to check this
at $x=S_0/n$ and $x=1/2+\theta$.

Using \eqref{eq_H_second_deriv} we have
$$
g''(x) \log_e 2
= \frac{-1}{Cx(1-x)}  + \frac{\theta^2}{\theta x (1-\theta x)}
=
\frac{-1}{Cx(1-x)}  + \frac{\theta}{x(1-\theta x)} .
$$
Note that 
$$
\frac{-1}{Cx(1-x)} \le \frac{-1}{Cx}, 
$$
and for $x\in(0,1)$ and $\theta\le 1/2$ we have $1-\theta x\ge 1/2$ and hence
$$
\frac{\theta}{x(1-\theta x)} \le \frac{2\theta}{x} .
$$
It follows that for $x\in(0,1)$ and $\theta\in(0,1/2)$ we have
$$
g''(x) \log_e 2 \le (2\theta-1/C) \frac{1}{x} \le 0
$$
provided that $2\theta \le 1/C$.
This establishes \eqref{eq_g_concave_down} for $\theta>0$ with
$\theta\le 1/(2C)$ and $\theta\le 1/2$.

So consider only those $\theta>0$ with
$$
\theta \le \min\bigl(1/(2C),1/4 \bigr).
$$
Then 
$g''(x)\le 0$ for all $x\in(0,1)$; it follows that
to prove \eqref{eq_g_versus_j_prime} for all $x\in [S_0/n,1/2+\theta]$,
it remains to show that 
$$
g(S_0/n),g(1/2+\theta) \ge j'\frac{\log_2 n}{n}
$$
for some some fixed $S_0$ and $n$ sufficiently large.

Since $\theta\le 1/4$, and since $H_2$ is monotone increasing on
$(0,1/2)$ and monotone decreasing on $(1/2,1)$, we have
$$
g(1/2+\theta) = (1/C) H_2(1/2+\theta) - H_2(\theta (1/2+\theta))
$$
\begin{equation}\label{eq_three_quarters}
\ge (1/C)H_2(3/4) - H_2(\theta(3/4))
\end{equation}
which is strictly positive for sufficiently small $\theta>0$
(since $H_2(x)\to 0$ as $x\to 0$, and $H_2(3/4)>0$).
For such a $\theta$ we have
$$
g(1/2+\theta) > 0
$$
and is therefore greater than $j'\log_2 n/n$ for $n$ sufficiently large.

So fix any $\theta>0$ with $\theta\le 1/2$, $\theta\le 1/(2C)$,
and for which \eqref{eq_three_quarters} is positive.
For any fixed constant $K$ we have 
$$
H_2(K/n)=( K\log_2 n + O(1) )/n
$$
for large $n$, and hence 
for fixed $S_0$ we have
$$
g(S_0/n) = (1/C) H_2(S_0/n) - H_2(\theta S_0/n)
= \bigl( (1/C) S_0 - \theta S_0 \bigr) \bigl( \log_2 n + O(1) \bigr) /n 
$$
for $n$ large.  Hence for any $S_0$ with
$$
S_0 \bigl( (1/C) - \theta \bigr) > j'
$$
we have 
$$
g(S_0/n) \ge  j'\frac{\log_2 n}{n}
$$
for sufficiently large $n$.  Since $\theta<1/(2C)$, we have
$1/C - \theta$ is positive; and hence the above inequality holds
for sufficiently large $n$ provided that
$$
S_0 > j'/(1/C-\theta).
$$

It follows that for $\theta$ and $S_0$ as above, we have
\eqref{eq_g_versus_j_prime} when $x$ is either endpoint of
$[S_0/n,1/2+\theta]$, and hence it holds for the entire interval.
\end{proof}

\subsection{Some Notation and Our Counting Lemma}

In this subsection we will introduce some helpful notation and
give a lemma that summarizes the counting argument
we shall use; the lemma is based on a simple union bound.

If $U\subset V_B\times [n]$ for some $n$, we use
${\rm Sizes}(U)$ to denote the function $V_B\to\integers_{\ge 0}$ given by
${\rm Sizes}(U)(v)=\#U_v$.

For any $\mec s\from V_B\to\integers_{\ge 0}$ we use the following notation:
\begin{equation}\label{eq_s_min_max_avg}
s_{\min} = \min_{v\in V_B} s_v, \quad
s_{\max} = \max_{v\in V_B} s_v, \quad
\overline s = (\#V_B)^{-1} \sum_{v\in V_B} s_v;
\end{equation}
and similarly for any $\mec s'\from V_B\to\integers_{\ge 0}$
(i.e., for $s_{\min}',s_{\max}',\overline s'$).

Here is a simple consequence of the union bound and
Lemma~\ref{le_almost_equal_fibre}.
In this lemma we use $v\sim u$ to denote the fact that $v,u$ are
adjacent vertices in $V_B$.

\begin{lemma}\label{le_counting_argument}
Let $B$ be a graph
(with or without half-loops), and $\cC_n(B)$ any model over $B$ (algebraic or 
not).  Set $m=\#V_B$.
Say that for any $i\in\naturals$ there are $R,\nu,\epsilon>0$ such that
the following holds:
for any $\mec s,\mec s'$ from $V_B\to\{0,1,\ldots,n\}$ such that
\begin{equation}\label{eq_conditions_s_s_prime}
s_v \le s_u' \quad\mbox{whenever $v=u$ or $v\sim u$} ,
\end{equation}
\begin{equation}\label{eq_conditions_sprime}
\mec s' \cdot \mec 1 \le (1+\nu)\mec s\cdot 1,
\quad s'_{\max}-s_{\min} \le n/2,
\end{equation}
\begin{equation}\label{eq_conditions_s_alone}
R \le \mec s\cdot \mec 1 \le nm/2,
\quad
s_{\min} \ge (1-\epsilon) s_{\max} ,
\end{equation} 
we have
\begin{equation}\label{eq_desired_counting_bound}
\max_{U,U'}\Bigl( \Prob_{G\in\cC_n(B)}\bigl[
\Gamma_G(U)\subset U' \bigr]  \Bigr)
\ \prod_{v\in V_B} 
\Biggl( \binom{n}{s_v}\binom{n}{s_v'-s_v} \Biggr)
= O(n^{-i})
\end{equation} 
where the above max is over all $U,U'$ such that
\begin{equation}\label{eq_conditions_sizes}
\quad {\rm Sizes}(U)=\mec s,
\quad {\rm Sizes}(U')=\mec s',
\quad U\subset U'.
\end{equation} 
Then $\cC_n(B)$ is pseudo-magnifying.
Similarly provided that \eqref{eq_desired_counting_bound} is
replaced with the bound
\begin{equation}\label{eq_simpler_desired_counting_bound}
\binom{n}{s_{\max}'-s_{\rm min}}^{\#V_B}
\max_{U,U'}\Bigl( \Prob_{G\in\cC_n(B)}\bigl[
\Gamma_G(U)\subset U' \bigr]  \Bigr)
\ \prod_{v\in V_B} 
\binom{n}{s_v}
= O(n^{-i}) .
\end{equation} 
\end{lemma}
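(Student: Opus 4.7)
The plan is a standard union bound: I sum over all candidate ``bad pairs'' $(U,U')$ that witness failure of pseudo-magnification, bounded first by fixing the sizes $\mec s={\rm Sizes}(U)$ and $\mec s'={\rm Sizes}(U')$, and then by summing over the polynomial-in-$n$ many size vectors.

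Fix $i\in\naturals$. Set $m=\#V_B$. The hypothesis of the lemma provides, for any chosen $i'$, positive reals $R,\nu,\epsilon$ witnessing \eqref{eq_desired_counting_bound}. I will apply this for $i' = i+2m+1$, so that a $(n+1)^{2m}$ factor from summing over $(\mec s,\mec s')\in\{0,\ldots,n\}^{V_B}\times\{0,\ldots,n\}^{V_B}$ is absorbed. Now I invoke Lemma~\ref{le_almost_equal_fibre} with this $\epsilon$ to produce $\nu_1=\nu_1(\epsilon)>0$; finally I set
$$\gamma = \min\!\bigl(\nu,\ \nu_1,\ (1-2\epsilon)/(2m)\bigr).$$

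The main structural step is to show that if $G\in\cC_n(B)$ is not an $(R,\gamma)$-pseudo-magnifier, then there exist $U\subset U'\subset V_B\times[n]$ such that the size vectors $\mec s={\rm Sizes}(U)$ and $\mec s'={\rm Sizes}(U')$ satisfy all of \eqref{eq_conditions_s_s_prime}--\eqref{eq_conditions_s_alone}, and $\Gamma_G(U)\subset U'$. Indeed, by failure of pseudo-magnification some $U$ with $R\le\#U\le nm/2$ has $\#(\Gamma_G(U)\setminus U)<\gamma\#U$. Taking $U'=U\cup\Gamma_G(U)$ gives \eqref{eq_conditions_s_s_prime} and $\Gamma_G(U)\subset U'$, while $\#U'\le(1+\gamma)\#U\le(1+\nu)\#U$ supplies the first inequality in \eqref{eq_conditions_sprime}. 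Since $\gamma\le\nu_1$, Lemma~\ref{le_almost_equal_fibre} forces $s_{\min}\ge(1-\epsilon)s_{\max}$, giving the second inequality in \eqref{eq_conditions_s_alone}; the first inequality in \eqref{eq_conditions_s_alone} is the hypothesis on $\#U$. The remaining condition $s'_{\max}-s_{\min}\le n/2$ from \eqref{eq_conditions_sprime} follows from a short calculation using $s_{\min}\le\overline{s}\le n/2$ and $s'_{\max}\le s_{\max}(1+m\gamma)\le s_{\min}(1+m\gamma)/(1-\epsilon)$, which yields $s'_{\max}-s_{\min}\le s_{\min}\,(m\gamma+\epsilon)/(1-\epsilon)\le n/2$ by the choice of $\gamma$.

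With the structural step in hand I apply the union bound:
$$\Prob_{G\in\cC_n(B)}[\text{$G$ not $(R,\gamma)$-pseudo-magnifier}]
\le \sum_{\mec s,\mec s'}\sum_{U,U'}\Prob_{G\in\cC_n(B)}[\Gamma_G(U)\subset U'],$$
the inner sum being over $(U,U')$ subject to \eqref{eq_conditions_sizes}. The number of $(U,U')$ with fixed $(\mec s,\mec s')$ is exactly $\prod_{v\in V_B}\binom{n}{s_v}\binom{n-s_v}{s'_v-s_v}$, which is at most $\prod_{v\in V_B}\binom{n}{s_v}\binom{n}{s'_v-s_v}$. Thus each fixed-$(\mec s,\mec s')$ contribution is bounded by the left-hand side of \eqref{eq_desired_counting_bound}, hence by $O(n^{-i'})$. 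The outer sum has at most $(n+1)^{2m}$ terms, yielding a total of $O(n^{-i'+2m})=O(n^{-i-1})=O(n^{-i})$, which proves pseudo-magnification.

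For the variant using \eqref{eq_simpler_desired_counting_bound}, I simply note that under the constraints \eqref{eq_conditions_sprime}--\eqref{eq_conditions_s_alone} one has $s'_v-s_v\le s'_{\max}-s_{\min}\le n/2$ for each $v$, so by unimodality of $k\mapsto\binom{n}{k}$ on $[0,n/2]$ we get $\binom{n}{s'_v-s_v}\le\binom{n}{s'_{\max}-s_{\min}}$; multiplying over $v\in V_B$ shows \eqref{eq_simpler_desired_counting_bound} dominates \eqref{eq_desired_counting_bound}, so the same argument applies. The only place that needs real care is the structural step verifying $s'_{\max}-s_{\min}\le n/2$; everything else is bookkeeping with the union bound.
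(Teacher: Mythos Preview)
Your argument is essentially the same union-bound approach as the paper's proof: reduce to almost-equal fibres via Lemma~\ref{le_almost_equal_fibre}, take $U'=U\cup\Gamma_G(U)$, count $(U,U')$ pairs with given size vectors, and absorb the $(n+1)^{2m}$ choices of $(\mec s,\mec s')$ by boosting the exponent. The structural verification and the reduction from \eqref{eq_simpler_desired_counting_bound} to \eqref{eq_desired_counting_bound} via $s'_v-s_v\le s'_{\max}-s_{\min}\le n/2$ also match the paper.

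There is one small gap. Your choice $\gamma=\min\bigl(\nu,\nu_1,(1-2\epsilon)/(2m)\bigr)$ is only positive when $\epsilon<1/2$, and your derivation of $s'_{\max}-s_{\min}\le n/2$ from $(m\gamma+\epsilon)/(1-\epsilon)\le 1$ likewise uses $\epsilon\le 1/2$. The hypothesis hands you some $\epsilon>0$; you do not get to choose it. The paper sidesteps this by introducing a \emph{second}, independently chosen pair $(\nu',\epsilon')$ satisfying \eqref{eq_gamma_esp_prime_mystery} and including $\nu_1(\epsilon')$ in the definition of $\gamma$; the $s'_{\max}-s_{\min}\le n/2$ estimate is then carried out with $(\nu',\epsilon')$ rather than $(\nu,\epsilon)$. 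In your streamlined version you can repair this in one line: observe that if the hypothesis holds for $(R,\nu,\epsilon)$ then it holds \emph{a fortiori} for $(R,\nu,\epsilon_0)$ with any $0<\epsilon_0<\epsilon$, since shrinking $\epsilon$ only shrinks the set of $(\mec s,\mec s')$ to be verified; so you may assume $\epsilon<1/2$ without loss of generality. With that remark added, your proof is complete and slightly cleaner than the paper's two-parameter version.
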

\begin{proof}
Given an integer $i'\in\naturals$, let us find $R,\gamma>0$ such that
\begin{equation}\label{eq_prob_not_pseudomag}
\Prob_{G\in\cC_n(B)}
\bigl[ \mbox{$G$ is not a $(R,\gamma)$-pseudomagnifier}\bigr]
= O(n^{-i'}).
\end{equation} 

First, 
let $R,\nu,\epsilon>0$ be such that
\eqref{eq_desired_counting_bound} holds for $i=i'+2m$
whenever $\mec s,\mec s'$
satisfy 
\eqref{eq_conditions_s_s_prime},
\eqref{eq_conditions_sprime},
\eqref{eq_conditions_s_alone}.
Second, let $\nu',\epsilon'>0$ satisfy
\begin{equation}\label{eq_gamma_esp_prime_mystery}
\nu' m+ 1/(1-\epsilon') - (1-\epsilon') \le 1.
\end{equation} 
Let us show that \eqref{eq_prob_not_pseudomag} holds
with
\begin{equation}\label{eq_gamma_big_min}
\gamma = \min\bigl(\nu,\nu',\nu_1(\epsilon),\nu_1(\epsilon') \bigr).
\end{equation} 

The union bound implies that the probability that $G\in\cC_n(B)$
is not an $(R,\gamma)$-pseudomagnifier is at most
\begin{equation}\label{eq_union_bound}
\sum_{R\le \#U\le nm/2} p_1(n,U,\gamma) ,
\end{equation} 
where
$$
p_1(n,U,\gamma) \eqdef
\Prob_{G\in\cC_n(B)}
\bigr[ \#\bigl( \Gamma_G(U)\setminus U \bigr) \le \gamma (\# U )] .
$$
In view of \eqref{eq_gamma_big_min},
$$
p_1(n,U,\gamma) \le p_1\bigl(n,U,\nu_1(\epsilon) \bigr),
$$
and
Lemma~\ref{le_almost_equal_fibre} implies that
$$
p_1\bigl(n,U,\nu_1(\epsilon)\bigr) = 0
$$
whenever ${\rm Sizes}(U)=\mec s$ and
$$
s_{\min} < (1-\epsilon) s_{\max}.
$$
Hence in the union bound \eqref{eq_union_bound} we may restrict
the sum to those $\mec s$ with
$$
s_{\min} \ge  (1-\epsilon) s_{\max}.
$$
Since the number of possible $\mec s$ is (crudely) bounded by
$(n+1)^m$, to establish
\eqref{eq_union_bound}, it suffices to show that
for all $\mec s$ satisfying \eqref{eq_conditions_s_alone} we have
\begin{equation}\label{eq_union_bound_fixed_s}
\sum_{U,\ {\rm Sizes}(U)=\mec s} p_1(n,U,\gamma) \le O(n^{-i-m}) .
\end{equation} 

Next note that for any $U$ and $G\in\cC_n(B)$ for which
$$
 \#\bigl( \Gamma_G(U)\setminus U \bigr) \le \gamma (\# U ) ,
$$
the set $U'=\Gamma_G(U)\cup U$ satisfies
$$
\Gamma_G(U) \subset U' ;
$$
moreover setting $\mec s'={\rm Sizes}(U')$ then $\mec s'$ must satisfy
$$
\mec s\le \mec s', \quad 
\mec s' \cdot \mec 1 \le (1+\gamma)\mec s\cdot \mec 1
\le (1+\nu) \mec s\cdot\mec 1 ,
$$
and if $u\sim v$ then $\Gamma_G(U)\subset U'$ implies that
$s_v \le s_u'$ (or else the $v$-fibre over $U$ could not ``fit into''
the $u$-fibre over $U'$ under $G$-adjacency).

Once we fix $\mec s,\mec s'$, the number of $U\subset U'$ with those
respective sizes is exactly
$$
\prod_{v\in V_B} \binom{n}{s_v}\binom{n-s_v}{s_v'-s_v}, 
$$
which is bounded from above by 
$$
\prod_{v\in V_B} \binom{n}{s_v}\binom{n}{s_v'-s_v} .
$$
Since there are (crudely, again) at most $(n+1)^m$ possible values for
$\mec s'$, we therefore have that 
to prove \eqref{eq_union_bound_fixed_s}, it suffices to show that
for all $\mec s,\mec s'$ as above we have
\begin{equation}\label{eq_union_two_m}
\max_{U,U'}\Bigl( \Prob_{G\in\cC_n(B)}\bigl[
\Gamma_G(U)\subset U' \bigr]  \Bigr) 
\ \prod_{v\in V_B} \binom{n}{s_v}\binom{n}{s_v'-s_v} 
= O(n^{-i'-2m}) = O(n^{-i}).
\end{equation} 
where the $\max$ is over all $U,U'$ such that
\eqref{eq_conditions_sizes} holds.

In view of \eqref{eq_gamma_big_min}, we see that in
\eqref{eq_union_two_m} it suffices to sum over
$\mec s',\mec s$ that additionally satisfy
$$
\mec s'\cdot \mec 1 \le (1+\nu')\mec s\cdot \mec 1,
\quad s_{\min} \ge (1-\epsilon') s_{\max}.
$$
However, for such $\mec s',\mec s$ we claim that
$$
s'_{\max}-s_{\min} \le n/2 ;
$$
indeed, 
$$
s'_{\max} - s_{\min} 
\le
(s'_{\max}-s_{\max}) + (s_{\max}-s_{\min})
\le
\gamma' m\overline s + \bigl( \overline s/(1-\epsilon') - 
\overline s(1-\epsilon') \bigr) 
$$
and using $\overline s\le n/2$ we conclude that
$$
s'_{\max} - s_{\min}
\le
n/2 \bigl(\gamma' m+ 1/(1-\epsilon') - (1-\epsilon') \bigr)
\le n/2.
$$
Hence we may also limit \eqref{eq_union_two_m} to those $\mec s,\mec s'$
for which
$$
s'_{\max} - s_{\min}\le n/2,
$$
and hence in \eqref{eq_desired_counting_bound} we may restrict
our consideration to $\mec s,\mec s'$ satisfying
\eqref{eq_conditions_s_s_prime}--\eqref{eq_conditions_s_alone}.

For the statement regarding 
\eqref{eq_simpler_desired_counting_bound}, notice that since our
restrictions on $\mec s,\mec s'$ include 
$s_{\max}'-s_{\min} \le n/2$, for any $v$ we have
$$
\binom{n}{s_v'-s_v} \le \binom{n}{s_{\max}'-s_{\min}},
$$
and hence for all relevant $\mec s,\mec s'$ the bound
\eqref{eq_simpler_desired_counting_bound} implies
\eqref{eq_desired_counting_bound}.
\end{proof}

\subsection{Proof of Lemma~\ref{le_pseudo_mag_no_half_loops}}
\label{su_when_no_half_loops}

\begin{proof}[Proof of Lemma~\ref{le_pseudo_mag_no_half_loops}]
According to Lemma~\ref{le_counting_argument}, it suffices to
show that for each $i\in\naturals$ there exist $R,\nu,\epsilon>0$
such that 
\eqref{eq_desired_counting_bound} holds for all
$\mec s,\mec s'$ satisfying
\eqref{eq_conditions_s_s_prime}--\eqref{eq_conditions_s_alone}.
So fix an $i\in\naturals$, and let us seek such $R,\gamma,\epsilon$.

According to Corollary~\ref{co_single_edge}, 
\begin{equation}\label{eq_probability_Gamma_U_in_U_prime_no_half}
\Prob_{G\in\cC_n(B)}\Bigl[ \Gamma_G(U)\subset U' ] \le
\prod_{v\in V_B} \prod_{u\sim v}
\sqrt{  \frac{ n \binom{s'_{u}}{s_{v}}}{\binom{n}{s_{v}}} } 
\end{equation} 
where $u\sim v$ is shorthand for multiplication over all $e$ with
$te=v$ of $u=he$ (hence for multiple edges the factor
of $u$ in the product occurs multiple times).  
Since
$$
\prod_{u\sim v}
\sqrt{  \frac{ n \binom{s'_{u}}{s_{v}}}{\binom{n}{s_{v}}} }
\le
\left(  \frac{ n \binom{s'_{\max}}{s_v}}{\binom{n}{s_{v}} }
\right)^{\deg_B(v)/2} 
\le
\left(  \frac{ n \binom{s'_{\max}}{s'_{\max}-s_v}}{\binom{n}{s_{v}} }
\right)^{\deg_B(v)/2} 
\le
\left(  \frac{ n \binom{n}{s'_{\max}-s_v}}{\binom{n}{s_{v}} }
\right)^{\deg_B(v)/2} ,
$$
we have
\begin{equation}\label{eq_prob_estimate_good_enough}
\Prob_{G\in\cC_n(B)}\Bigl[ \Gamma_G(U)\subset U' ] \le
\prod_{v\in V_B} 
\left( \frac{ n \binom{n}{s'_{\max}-s_v}}{\binom{n}{s_v}} 
\right)^{\deg_B(v)/2}.
\end{equation} 
Hence the left-hand-side of
\eqref{eq_simpler_desired_counting_bound}, namely
$$
\binom{n}{s_{\max}'-s_{\rm min}}^{\#V_B}
\max_{U,U'}\Bigl( \Prob_{G\in\cC_n(B)}\bigl[
\Gamma_G(U)\subset U' \bigr]  \Bigr)
\ \prod_{v\in V_B} 
\binom{n}{s_v}
$$
is bounded above by
\begin{equation}\label{eq_expression_to_bound}
n^{C_1} \binom{n}{s_{\max}'-s_{\rm min}}^{C_2}
\prod_{v\in V_B} \binom{n}{s_v}^{\bigl( 2-\deg_B(v) \bigr)/2}
\end{equation} 
for constants $C_1,C_2>0$.
Since $B$ is pruned, $2-\deg_B(v)\le 0$ for all $v\in V_B$, and since
$$
\chi(B) = \sum_{v\in B}  \bigl( 2-\deg_B(v) \bigr)/2
$$
is negative, 
$2-\deg_B(v)\le -1$ for at least one $v$.  Hence
to establish \eqref{eq_simpler_desired_counting_bound}, it suffices
to show that for any $i$ there are $R,\nu,\epsilon>0$ such that
for any $v\in V_B$ we have
\begin{equation}\label{eq_simpler_expression_to_bound}
n^{C_1} \binom{n}{s_{\max}'-s_{\rm min}}^{C_2}
\binom{n}{s_v}^{-1/2}
=O(n^{-i})
\end{equation} 
provided that $\mec s',\mec s$ satisfy
\eqref{eq_conditions_s_s_prime}--\eqref{eq_conditions_s_alone}.

(Since $B$ has no half-loops, $\chi(B)$ is an integer, so we
could replace the $-1/2$ exponent in \eqref{eq_simpler_expression_to_bound}
by $-1$; in the next section, when $B$ may have half-loops, we cannot
do so, and the $-1/2$ exponent will be sufficient for us.)

So let us apply Lemma~\ref{le_binom_coeff_estimate} with $C=2C_2$
and any $j\ge (i+C_1)/C_2$; fix a $\theta>0$ sufficiently small so
that there exist $S_0,n_0$ for which
\begin{equation}\label{eq_binom_coeff_C_i}
\binom{n}{r} \le n^{-(i+C_1)/C_2}  \binom{n}{s}^{1/(2 C_2)}
\end{equation} 
provided that
\begin{equation}\label{eq_binom_coeff_conditions}
n\ge n_0, \quad
S_0\le s\le n(1/2+\theta), \quad 
r\le \theta s;
\end{equation} 
for such $n,r,s$ we therefore have
$$
\binom{n}{s}^{-1/2} \binom{n}{r}^{C_2} n^{C_1} \le n^{-i}.
$$
It follows that \eqref{eq_simpler_expression_to_bound} holds 
provided that for all $v$ we have
$$
S_0 \le s_v \le n(1/2+\theta) , \quad
s_{\max}' - s_{\min} \le \theta s_v
$$
or, in other words,
\begin{equation}\label{eq_needed_for_s_sprime}
S_0 \le s_{\min},\quad
s_{\max} \le n(1/2+\theta) , \quad
s_{\max}' - s_{\min} \le \theta s_{\min}.
\end{equation} 

Now we specify $R,\nu,\epsilon$: take $\nu=\theta/(2m)$, choose
any $\epsilon>0$ sufficiently small so that
$$
(\theta/2 + \epsilon)/(1-\epsilon) \le \theta
$$
and
$$
\frac{1/2}{1-\epsilon}\le 1/2+\theta,
$$
and finally any $R\in \naturals$ with
$$
R \ge S_0 m / (1-\epsilon).
$$
If $\mec s,\mec s'$ satisfy
\eqref{eq_conditions_s_s_prime}--\eqref{eq_conditions_s_alone}
with these values of $R,\nu,\epsilon$, let us verify the three 
conditions in
\eqref{eq_needed_for_s_sprime} hold:
\begin{enumerate}
\item $S_0\le s_{\min}$:
$$
R \le \mec s\cdot\mec 1\le m s_{\max}\le m s_{\min}/(1-\epsilon),
$$
so $s_{\min}\ge R(1-\epsilon)/m\ge S_0$.
\item $s_{\max}\le n(1/2+\theta)$:
$$
s_{\max}\le s_{\min}/(1-\epsilon) \le \overline s/(1-\epsilon)
\le (n/2)/(1-\epsilon)\le n(1/2+\theta).
$$
\item $s_{\max}'-s_{\min} \le \theta s_{\min}$:
for all $v\in V_B$ we have
$$
s_v' - s_v \le 
\mec s' \cdot \mec 1
-
\mec s \cdot \mec 1 \le \nu\, \mec s\cdot \mec 1 = \nu m \overline s
(\theta/2)
\le (\theta/2) s_{\min} / (1-\epsilon)
$$
and hence (taking $v$ maximizing $s_v'$)
$$
s_{\max}' \le s_{\max} + (\theta/2) s_{\min}/(1-\epsilon)
\le s_{\min}/(1-\epsilon) +  (\theta/2)s_{\min}/(1-\epsilon) ,
$$
so
$$
s_{\max}' - s_{\min} \le 
s_{\min} (\theta/2 + \epsilon)/(1-\epsilon) \le \theta s_{\min}
$$
\end{enumerate}

Hence for these values of $R,\nu,\epsilon$ we have
\eqref{eq_needed_for_s_sprime}, and therefore
hence \eqref{eq_simpler_expression_to_bound},
and therefore \eqref{eq_simpler_desired_counting_bound} in
Lemma~\ref{le_counting_argument}.
This lemma then implies that $\cC_n(B)$ is
pseudo-magnifying.
\end{proof}

\section{Pseudo-Magnification in Graphs With Half-Loops}
\label{se_pseudomag_with_half}

Let us now describe the ingredients needed to prove pseudo-magnification
when our base graphs may have half-loops.
The main point is that we have to include probability estimates
involving random
involutions.

\subsection{An Involution Probability Bound}

Here is the alternate of Lemma~\ref{le_permutation_prob} that we will
use; if $n,t\in\naturals$ with $t$ even and $t\le n$, we use the
``odd binomial coefficient notation:''
$$
\binom{n}{t}_{\rm odd} \eqdef 
\frac{(n-1)(n-3)\ldots(n-2t+1)}{(2t-1)(2t-3)\ldots 1}.
$$

\begin{lemma}\label{le_involution_prob}
Let $n\in\naturals$, and let $W\subset W'\subset [n]$ be subsets with
$\#W\le \# W'$.  
Let $s''$ be the largest non-negative even integer with
$s''\le 2(\#W)-(\#W')-1$.
If $\sigma\in\cS_n$ is a random perfect matching for
$n$ even, and a random near perfect matching for $n$ odd, then
$$
\Prob_{\sigma}[\sigma(W)\subset W']
\le 
\frac{\binom{\#W}{s''}}{\binom{n}{s''}_{\rm odd}} .
$$
\end{lemma}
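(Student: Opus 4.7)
The plan is to leverage the structural restriction that $\sigma(W)\subset W'$ imposes on the (near-)perfect matching $\sigma$, and then apply a Markov-style inequality to a suitably chosen random variable.

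First, I would decompose $\sigma$ with respect to $W$: let $k_1$ count the $\sigma$-pairs entirely contained in $W$, let $k_2$ count the $\sigma$-edges from $W$ into $W'\setminus W$, and let $\epsilon\in\{0,1\}$ indicate whether the unique fixed point of $\sigma$ (present only when $n$ is odd) lies in $W$. Then $2k_1+k_2+\epsilon=\#W$. The hypothesis $\sigma(W)\subset W'$ forbids any edge from $W$ to $[n]\setminus W'$, so $k_2\le\#W'-\#W$. Combining these constraints yields
$$
2k_1\ge 2(\#W)-(\#W')-\epsilon\ge 2(\#W)-(\#W')-1,
$$
so by the definition of $s''$ (the largest non-negative even integer with $s''\le 2(\#W)-(\#W')-1$), we obtain $k_1\ge s''/2$.

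Second, I would apply Markov's inequality. The natural choice of random variable is $X=\#\{A\subset W:\#A=s''\text{ and }A\text{ is a union of }s''/2\text{ pairs of }\sigma\}$. The bound $k_1\ge s''/2$ implies $X\ge \binom{k_1}{s''/2}\ge 1$ whenever $\sigma(W)\subset W'$, hence $\Prob_{\sigma}[\sigma(W)\subset W']\le\mathbb{E}[X]$. By linearity and symmetry, $\mathbb{E}[X]=\binom{\#W}{s''}\Prob[A_0\text{ is matching-internal}]$ for any fixed $s''$-subset $A_0\subset[n]$, which by the standard count of matchings containing a prescribed set of pairs equals $\binom{\#W}{s''}(s''-1)!!(n-s''-1)!!/(n-1)!!$.

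Third, I would upgrade the above to the sharper form stated in the lemma, namely $\binom{\#W}{s''}(2s''-1)!!(n-2s''-1)!!/(n-1)!!=\binom{\#W}{s''}/\binom{n}{s''}_{\rm odd}$. This refinement can be achieved either by (a) enlarging the random variable to also track an auxiliary set of $s''/2$ additional matching pairs disjoint from $A$—thereby replacing $(s''-1)!!(n-s''-1)!!$ by $(2s''-1)!!(n-2s''-1)!!$ in the count—while using the slack in the inequality $k_1\ge s''/2$ to ensure that this enlarged variable is still $\ge 1$ on the event, or equivalently by (b) partitioning matchings with $\sigma(W)\subset W'$ according to $k_1$ and bounding the resulting sum $\sum_{k_1\ge s''/2}\binom{\#W}{2k_1}(2k_1-1)!!\tfrac{(\#W'-\#W)!}{(\#W'-2\#W+2k_1)!}(n-2\#W+2k_1-1)!!$ term-by-term. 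The case of the near-perfect matching ($n$ odd) follows identically, with the term $\epsilon$ absorbing the fixed point.

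The main obstacle will be the third step: promoting the elementary Markov bound (which naturally produces $1/\binom{n}{s''/2}_{\rm odd}$) to the sharper form $1/\binom{n}{s''}_{\rm odd}$. This requires identifying the correct augmented combinatorial structure whose expectation factors into the product of consecutive odd integers appearing in $\binom{n}{s''}_{\rm odd}$, or equivalently executing the direct term-wise bound on the parametrized sum carefully enough to extract the extra $(2s''-1)!!(n-2s''-1)!!/[(s''-1)!!(n-s''-1)!!]$ factor.
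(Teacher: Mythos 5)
Your first two steps are exactly the paper's proof, just phrased slightly differently: the paper runs a union bound over the at most $\binom{\#W}{s''}$ subsets $W''\subset W$ of size $s''$ that $\sigma$ could match internally (your Markov inequality applied to the count $X$ of such subsets), and computes the probability that a fixed $s''$-set is internally matched as
$$
\frac{s''-1}{n-1}\cdot\frac{s''-3}{n-3}\cdots\frac{1}{n-s''+1}
=\frac{(s''-1)!!\,(n-s''-1)!!}{(n-1)!!},
$$
which is precisely your $p(n,s'')$. (For $n$ odd the exact per-step factors are $\tfrac{s''-1}{n},\tfrac{s''-3}{n-2},\ldots$, which are smaller, so the same expression is still an upper bound.) That is the entire argument; the paper then simply identifies this quantity with $1/\binom{n}{s''}_{\rm odd}$.

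Your third step---the part you flag as the main obstacle---is chasing a phantom created by a typo in the paper. The displayed definition of the odd binomial coefficient should read
$$
\binom{n}{t}_{\rm odd}\eqdef\frac{(n-1)(n-3)\cdots(n-t+1)}{(t-1)(t-3)\cdots 1}
=\frac{(n-1)!!}{(t-1)!!\,(n-t-1)!!},
$$
with $t/2$ factors in numerator and denominator, not the printed version with $2t$ in place of $t$. You can confirm this against the last display in the paper's own proof of this lemma, and against the proof of Lemma~\ref{le_odd_binom}, whose factor-by-factor comparisons (e.g.\ $(t-1)!\le\bigl((t-1)(t-3)\cdots 1\bigr)^2\le t!$) only parse under the corrected definition. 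With that correction, $1/\binom{n}{s''}_{\rm odd}$ is exactly your $p(n,s'')$, so your step 2 already finishes the proof. Moreover, the sharper bound you were trying to reach under the literal reading, $\binom{\#W}{s''}(2s''-1)!!(n-2s''-1)!!/(n-1)!!$, is simply false, so neither of your proposed routes (a) or (b) could succeed: take $n$ even and large and $W=W'$ with $\#W=6$, so $s''=4$; the event forces all of $W$ to be matched internally, giving true probability $5!!\,(n-7)!!/(n-1)!!=15/\bigl((n-1)(n-3)(n-5)\bigr)$, whereas the literal bound is $\binom{6}{4}\,7!!\,(n-9)!!/(n-1)!!$, and the ratio of bound to truth is $105/(n-7)<1$ once $n>112$.
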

\begin{proof}
Consider a $\sigma$ for which $\sigma(W)\subset W'$.
Note that 
$\sigma$ matches every element in $W$ with some element of $W'$, except
for possibly one element of $W$ (when $n$ is odd);
since at most $(\#W')-(\#W)$ elements of $W$ can be matched with 
elements in $W'\setminus W$, and at most one element of $W$ can
be matched with itself, it follows that the subset, $W''=W''(\sigma)\subset W$,
of elements that $\sigma$ matches in $W$ is of size
at least $s''$.
Since there are
$\binom{\#W}{s''}$ possible values of $W''=W''(\sigma)$, the union
bound implies that
$$
\Prob_{\sigma}[\sigma(W)\subset W']
\le 
\binom{\#W}{s''} p(n,s''),
$$
where $p(n,s'')$ is the probability that a random involution
$\sigma\in\cS_n$ matches a fixed subset, $W''$, of size $s''$ in pairs;
this probability equals the probability that a fixed element of $W''$
is matched with another element of $W''$, times the probability
that a fixed remaining element is paired with another remaining element,
etc.
Hence
$$
p(n,s'') = \frac{s''-1}{n-1}\frac{s''-3}{n-3}\ldots
\frac{3}{n-s''+3}\frac{1}{n-s''+1}
=1 \biggm/ \binom{n}{s''}_{\rm odd}
$$
\end{proof}

\subsection{Odd Binomial Coefficient Estimates}

It is be simpler for us to express odd binomial coefficients in terms of
almost equal expressions involving binomial coefficients.

\begin{lemma}\label{le_odd_binom}
Let $n,t\in\naturals$ with $t$ even.  Then
$$
\frac{n-t}{n}\binom{n}{t} \le
\left( {\binom{n}{t}}_{\rm odd}  \right)^2
\le
t\binom{n}{t} .
$$
\end{lemma}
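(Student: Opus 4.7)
The plan is to express everything in terms of two ``alternating'' halves of the falling factorial. For $t$ even write
\begin{equation*}
A \;=\; (n-1)(n-3)\cdots(n-t+1),\qquad B \;=\; n(n-2)(n-4)\cdots(n-t+2),
\end{equation*}
each a product of $t/2$ terms. Then $AB = n(n-1)(n-2)\cdots(n-t+1) = t!\binom{n}{t}$ and, consistent with the probability identity $p(n,s'')=1/\binom{n}{s''}_{\rm odd}$ used in the preceding subsection, one has $\binom{n}{t}_{\rm odd} = A/(t-1)!!$. Thus both inequalities reduce to comparisons between $A^2$ and $AB$ (for the upper bound) or between $A^2$ and $B^2(n-t)/n$ (for the lower bound), with the factor $((t-1)!!)^2$ absorbed by an elementary double-factorial check.

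For the upper bound I would use the termwise inequality $(n-2i-1)^2 \le (n-2i-1)(n-2i)$ for $i=0,\ldots,t/2-1$, which gives $A^2 \le AB$, so
\begin{equation*}
\Bigl(\binom{n}{t}_{\rm odd}\Bigr)^{\!2} \;\le\; \frac{AB}{((t-1)!!)^2} \;=\; \frac{t!}{((t-1)!!)^2}\,\binom{n}{t}.
\end{equation*}
It then suffices to verify $t!/((t-1)!!)^2 \le t$, equivalently $(t-2)!! \le (t-1)!!$, which follows from the pairing $(t-1)\ge(t-2),\ (t-3)\ge(t-4),\ \ldots,\ 3\ge 2$ with the factor $1$ on the left unpaired.

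For the lower bound I would invoke AM--GM in the form $(n-2i-1)^2 \ge (n-2i)(n-2i-2)$. Taking the product over $i$ gives
\begin{equation*}
A^2 \;\ge\; \Bigl(\prod_{i=0}^{t/2-1}(n-2i)\Bigr)\Bigl(\prod_{i=0}^{t/2-1}(n-2i-2)\Bigr) \;=\; B\cdot \frac{B(n-t)}{n} \;=\; \frac{B^2(n-t)}{n},
\end{equation*}
since the second product is just $B$ with its leading factor $n$ replaced by $n-t$. Dividing by $((t-1)!!)^2$ and comparing with $((n-t)/n)\binom{n}{t} = (n-t)AB/(n\cdot t!)$ reduces the desired inequality to $B\cdot t\cdot (t-2)!! \ge A\cdot(t-1)!!$, which holds because $B\ge A$ termwise and $t(t-2)!! = 2\cdot 4\cdots t \ge 1\cdot 3\cdots (t-1) = (t-1)!!$ by the pairing $2\ge 1,\ 4\ge 3,\ \ldots,\ t\ge t-1$. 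There is no real obstacle here; both bounds come down to AM--GM together with elementary bookkeeping, and the only point requiring any care is keeping track of the index shift by two that produces the $(n-t)/n$ factor on the lower-bound side.
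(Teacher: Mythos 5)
Your proof is correct and follows essentially the same route as the paper's: both bounds come from a factor-by-factor comparison of $\bigl((n-1)(n-3)\cdots(n-t+1)\bigr)^2$ with consecutive falling factorials, combined with the double-factorial identities $(t-1)!=(t-1)!!\,(t-2)!!$ and $t!=(t-1)!!\,t!!$ (and your reading of the definition of $\binom{n}{t}_{\rm odd}$, fixing the paper's typographical $2t$, is the intended one). The only immaterial difference is in the lower bound, where you use the symmetric AM--GM pairing $(n-2i-1)^2\ge(n-2i)(n-2i-2)$ instead of the paper's one-sided pairing $(n-2i-1)^2\ge(n-2i-1)(n-2i-2)$, which costs you the extra but equally elementary check $B\,t!!\ge A\,(t-1)!!$.
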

\begin{proof}
Comparing factor by factor, we have
$$
(t-1)! \le 
\bigl( (t-1)(t-3)\ldots 1  \bigr)^2 \le t!
$$
and 
$$
(n-1)(n-2)\ldots (n-t)
\le
\bigl( (n-1)(n-3)\ldots(n-2t+1) \bigr)^2
\le
n(n-1)\ldots (n-t+1) \ .
$$
Dividing the second two inequalities by the first two yields
$$
\frac{(n-1)(n-2)\ldots (n-t)}{t!} 
\le \left( {\binom{n}{t}}_{\rm odd}  \right)^2
\le
\frac{n(n-1)\ldots (n-t+1)}{(t-1)!}
$$
which is equivalent to the upper and lower bounds in the lemma.
\end{proof}

\subsection{Additional Binomial Coefficient Estimates}

We will also use some easy binomial coefficient estimates.
First, for any $0\le r'\le r\le n$ we have
$$
\binom{n}{r} = \binom{n}{r'}\binom{n-r'}{r-r'}
\le \binom{n}{r'}\binom{n}{r-r'} ,
$$
and hence
\begin{equation}\label{eq_easy_binom_est}
\binom{n}{r'}^{-1/2} \le \binom{n}{r}^{-1/2} \binom{n}{r-r'}^{1/2}.
\end{equation} 
We will need the trivial estimate that for $r\ge 0$ and any $n$
(including $n\le r+1$)
\begin{equation}\label{eq_trivial_binom_est}
\binom{n}{r+1} \le \binom{n}{r} n,\quad
\binom{n}{r+2} \le \binom{n}{r} n^2.
\end{equation}

\subsection{Proof of Lemma~\ref{le_pseudo_magnification}}

\begin{proof}[Proof of Lemma~\ref{le_pseudo_magnification}]
According to Lemma~\ref{le_counting_argument}, it suffices to
show that for each $i\in\naturals$ there exist $R,\nu,\epsilon>0$
such that
\eqref{eq_desired_counting_bound} holds for all
$\mec s,\mec s'$ satisfying
\eqref{eq_conditions_s_s_prime}--\eqref{eq_conditions_s_alone}.
So fix an $i\in\naturals$, and let us prove that such $R,\gamma,\epsilon$
exist;
we shall reduce this proof to part of the proof given in
Subsection~\ref{su_when_no_half_loops} (the case where $B$ has no
half-loops).

For each $v\in V_B$, let ${\rm half}(v)$ be the number of half-loops
in $V$ about $v$.
Let $n\in\naturals$, and let $U\subset U'\subset V_B\times [n]$;
let $\mec s,\mec s'$ denote the fibre sizes of $U,U'$ respectively,
and assume that they satisfy
\eqref{eq_conditions_s_s_prime}--\eqref{eq_conditions_s_alone}
for some $R,\nu,\epsilon>0$ that we will later specify.
Let us add the assumptions that
\begin{equation}\label{eq_theta_like_assumption}
s'_{\max}-s_{\min} \le s_{\min}/3.
\end{equation}

From these assumptions on $\mec s,\mec s'$, we have
that for all $v\in V_B$ 
\begin{equation}\label{eq_convenient_half_assumption}
2s_v-s'_v  \ge 2s_{\min}-s'_{\max}\ge (2-4/3)s_{\min}\ge 0 .
\end{equation} 
For each $v\in V_B$, set $s_v''$ to be the largest even integer less than
$2s_v-s'_v$; hence
\begin{equation}\label{eq_s_v_double_prime_inequ}
2s_v-s'_v-2 \le s''_v \le 2s_v-s'_v -1.
\end{equation} 
Next, let us verify that for sufficiently large $n$ we have
\begin{equation}\label{eq_slight_further_assumption}
s'_{\max} - s_{\min} \le (n/2)-2, \quad
s_{\max} \le 2n/3;
\end{equation} 
the first inequality follows from
$$
s'_{\max}-s_{\min} \le s_{\min}/3 \le \overline s/3 \le n/6
$$
which is at most $n/2-2$ for $n\ge 6$; the second inequality follows from
$$
s_{\max}\le s'_{\max} \le (4/3)s_{\min} \le (4/3)(n/2) = 2n/3.
$$

According to Lemmas~\ref{le_involution_prob} and
\ref{le_permutation_prob},
\begin{equation}\label{eq_prob_bounded_by_Q_prod}
\Prob_{G\in\cC_n(B)}\bigl[
\Gamma_G(U)\subset U' \bigr] \le 
\ \prod_{v\in V_B}
Q(v)
\end{equation} 
where
$$
Q(v) = 
\left( 
\frac{\binom{s_v}{s_v''}}{\binom{n}{s_v''}_{\rm odd}} 
\right)^{{\rm half}(v)}
\prod_{u\sim v}
\sqrt{  \frac{ n \binom{s'_{u}}{s_{v}}}{\binom{n}{s_{v}}} }
$$
where $u\sim v$ is the product over all edges $e$ that are 
not half-loops, and whose tail is $u$ and whose head is $v$.
Let us show that for each $v\in V_B$ we have
\begin{equation}\label{eq_new_Q_estimate}
Q(v) \le n^{K_1} \binom{n}{s'_{\max}-s_{\min}}^{K_2}
\binom{n}{s_v}^{\bigl( \deg_B(v)-2 \bigr)}
\end{equation} 
where $K_1=K_1(v),K_2=K_2(v)$ are constants depending on $v$;
once we do this we will obtain the same estimate as in
\eqref{eq_expression_to_bound}---with different 
constants $C_1,C_2$---and then finish the proof as it is finished
below \eqref{eq_expression_to_bound} (for the case there, where $B$
has no half-loops).

Since the number of such $e$ is $\deg_B(v)-{\rm half}(v)$,
\eqref{eq_prob_estimate_good_enough} and the equation
above it imply that
$$
Q(v) \le
\left( 
\frac{\binom{s_v}{s_v''}}{\binom{n}{s_v''}_{\rm odd}} 
\right)^{{\rm half}(v)} 
\left( \frac{ n \binom{n}{s'_{\max}-s_v}}{\binom{n}{s_v}}
\right)^{\bigl( \deg_B(v)-{\rm half}(v)\bigr)/2}
$$
To estimate the new term raised to the power ${\rm half}(v)$
(whenever ${\rm half}(v)>0$), we note that
\begin{equation}\label{eq_v_minus_v_prime}
s_v-s_v'' \le s'_v+2-s_v \le s'_{\max}-s_{\min}+2\le n/2
\end{equation} 
(using \eqref{eq_slight_further_assumption}), and therefore
$$
\binom{s_v}{s_v''} = \binom{s_v}{s_v-s_v''} \le
\binom{n}{s_v-s_v''} \le \binom{n}{s_v-s_v''}
\le \binom{n}{s'_{\max}-s_{\min}+1} 
\le \binom{n}{s'_{\max}-s_{\min}}n
$$
(using \eqref{eq_trivial_binom_est}).
Also, in view of Lemma~\ref{le_odd_binom} 
$$
\binom{n}{s_v''}_{\rm odd}^{1/2} \ge
\frac{n-s_v''}{n} \binom{n}{s_v''}.
$$
Hence
\begin{equation}\label{eq_new_prob_term}
\frac{\binom{s_v}{s_v''}}{\binom{n}{s_v''}_{\rm odd}} 
\le \binom{n}{s'_{\max}-s_{\min}} n
\left( \frac{n-s_v''}{n} \binom{n}{s_v''}
\right)^{-1/2} .
\end{equation} 
We also have
$$
s_v''\le 2s_v-s_v'\le s_v
$$
and so
\begin{equation}\label{eq_frac_n_s_v_primeprime}
\frac{n-s_v''}{n}\ge \frac{n-s_v}{n} .
\end{equation} 
Also, setting $r'=s_v''$ and $r=s_v$ in \eqref{eq_easy_binom_est} we have
$$
\binom{n}{s_v''}^{-1/2} \le \binom{n}{s_v}^{-1/2} \binom{n}{s_v-s_v''}^{1/2} ;
$$
using \eqref{eq_v_minus_v_prime} this implies
$$
\binom{n}{s_v''}^{-1/2} \le 
\binom{n}{s_v}^{-1/2} \binom{n}{s'_{\max}-s_{\min}+ 2}^{1/2}
\le
\binom{n}{s_v}^{-1/2} \binom{n}{s'_{\max}-s_{\min}}^{1/2} n.
$$
Applying this inequality and \eqref{eq_frac_n_s_v_primeprime}
to \eqref{eq_new_prob_term} we get
$$
\frac{\binom{s_v}{s_v''}}{\binom{n}{s_v''}_{\rm odd}} 
\le
n^2 \frac{n-s_v}{n} \binom{n}{s'_{\max}-s_{\min}}^{3/2}
\binom{n}{s_v}^{-1/2}.
$$
This establishes \eqref{eq_new_Q_estimate}.  It follows from
\eqref{eq_prob_bounded_by_Q_prod} that
$$
\Prob_{G\in\cC_n(B)}\bigl[
\Gamma_G(U)\subset U' \bigr] \le
n^{K_1'} \binom{n}{s'_{\max}-s_{\min}}^{K_2'}
\ \prod_{v\in V_B}
\binom{n}{s_v}^{\bigl(2-\deg_B(v)\bigr)/2},
$$
where $K_1',K_2'$ are the sum over the $K_1(v),K_2(v)$.
Hence for some $v\in V_B$ we have
$$
\Prob_{G\in\cC_n(B)}\bigl[
\Gamma_G(U)\subset U' \bigr] \le
n^{K_1'} \binom{n}{s'_{\max}-s_{\min}}^{K_2'}
\binom{n}{s_v}^{-1/2}
$$
(see the discussion between \eqref{eq_expression_to_bound}
and \eqref{eq_simpler_expression_to_bound}).
It follows that the left-hand-side of 
\eqref{eq_simpler_desired_counting_bound} is bounded by
\eqref{eq_expression_to_bound}
for some $C_1,C_2$ (involving $K_1',K_2'$).

Now we mimic the rest of the proof of Lemma~\ref{le_pseudo_mag_no_half_loops}
following
\eqref{eq_simpler_expression_to_bound}: to
establish \eqref{eq_simpler_desired_counting_bound} it suffices to
prove that for any $i$ there are $R,\nu,\epsilon>0$ that guarantee
\eqref{eq_simpler_expression_to_bound}.
We use Lemma~\ref{le_counting_argument} to find
$\theta>0$, $S_0,n_0$ such that
\eqref{eq_binom_coeff_C_i} holds provided that
\eqref{eq_binom_coeff_conditions} holds;
since Lemma~\ref{le_counting_argument} implies that there are
$S_0,n_0$ for arbitrarily small $\theta>0$, we may insist that
$\theta\le 1/3$.
Then we take $R,\nu,\epsilon>0$ as given just below
\eqref{eq_needed_for_s_sprime}, and the same proof shows that 
for any 
$\mec s,\mec s'$ that satisfy
\eqref{eq_conditions_s_s_prime}--\eqref{eq_conditions_s_alone},
then \eqref{eq_needed_for_s_sprime} holds;
since we took $\theta\le 1/3$, the last inequality in
\eqref{eq_needed_for_s_sprime} implies that
\eqref{eq_theta_like_assumption} holds.
It follows that for any $i$ there are $R,\nu,\epsilon>0$ such that
\eqref{eq_expression_to_bound} holds, and therefore
\eqref{eq_simpler_desired_counting_bound} holds.
Therefore Lemma~\ref{le_counting_argument} implies that
$\cC_n(B)$ is pseudo-magnifying.
\end{proof}

\section{Proof of Theorem~\ref{th_second_main_theorem}}
\label{se_proof_Ram}

In this section we prove Lemma~\ref{le_no_coeff_largest_base},
and then easily deduce Theorem~\ref{th_second_main_theorem}.

Our proof really shows that if $B$ is $d$-regular and
$\cC_n(B)$ is pseudo-magnifying, then $\tau_{\rm alg}$ can only be
finite if for some eigenvalue $\ell$ of the model, with $|\ell|>(d-1)^{1/2}$
but $\ell\ne d-1$,
we have that
\eqref{eq_new_eigenvalues_near_ell} holds for fixed $\tau$ and
arbitrarily small $\nu>(d-1)^{1/2}$ and arbitrarily large $r$.

\begin{proof}[Proof of Lemma~\ref{le_no_coeff_largest_base}]
For any $i$, the are $R,\gamma$ such that the $G\in\cC_n(B)$
probability that $G$ is not an $(R,\gamma)$-pseudo-magnifier is
at most $O(n^{-i})$.
However, for any $G\in\Coord_n(B)$ and $R$, if $U\subset V_G$
has $\#U\le R$, then either:
\begin{enumerate}
\item $\Gamma_G(U)\setminus U=\emptyset$, in which case $U$ is
disconnected from the other vertices in $V_G$, or
\item $\Gamma_G(U)\setminus U$ in nonempty, in which case
$$
\#\bigl( \Gamma_G(U)\setminus U \bigr) \ge 1 \ge (1/R)(\#U).
$$
\end{enumerate}
Hence setting
$$
\gamma'=\min(\gamma,1/R),
$$
if $G$ is not $\gamma'$-magnifier, then $G$ has a
connected component, $U'$, of size at most $R$, which makes
$U'$ a $\ge (d-1)$-tangle whose order is
$$
\ord(U') = \ord(B) \frac{\#V_{U'}}{\#V_B} \le \ord(B) R/ (\#V_B).
$$
Hence taking $r\in\naturals$ with $r\ge 1+\ord(B)R/(\#V_B)$,
the probability that $G$ is $(\ge d-1,<r)$-tangle free
and is not an $\gamma'$-magnifier is at most
$O(n^{-i})$.  

According to Alon's theorem, if $G\in\Coord_n(B)$ is a $\gamma'$-magnifier,
then all eigenvalues of $A_G$ except the largest one are bounded
away from $d$; it follows that $\specnew_B(H_G)$
is bounded away from $d-1$, and hence
$$
\specnew_B(H_G)\cap B_{n^{-\theta}}(d-1) 
$$
is empty for $n$ sufficiently large.  Since
the number of new eigenvalues of $H_G$ is $O(n)$,
it follows that for the above value of $r$ we have
$$
\EE_{G\in\cC_n(B)}\Bigl[
\II_{{\rm TangleFree}(\ge d-1,<r)}(G)
\Bigl(\#
\bigl(\specnew_B(H_G)\cap B_{n^{-\theta}}(\ell) \bigr)\Bigr) \Bigr]
= O(n^{-i+1}) .
$$
Since $i$ is arbitrary, we conclude Lemma~\ref{le_no_coeff_largest_base}.
\end{proof}

\begin{proof}[Proof of Theorem~\ref{th_second_main_theorem}]
If $\tau_{\rm alg}$ is finite, then for some $\tau\in\naturals$
and $\nu>(d-1)^{1/2}$ sufficiently small and $r\in\naturals$ sufficiently
large we have
\begin{equation}\label{eq_expected_near_d}
\EE_{G\in\cC_n(B)}\Bigl[
\II_{{\rm TangleFree}(\ge\nu,<r)}(G)
\Bigl(\#
\bigl(\specnew_B(H_G)\cap B_{n^{-\theta}}(\ell) \bigr)\Bigr) \Bigr]
= C_\ell n^{-\tau} + o(n^{-\tau}) .
\end{equation} 
for some $C_\ell>0$ ($C_\ell$ may depend on $\nu$ and $r$),
where $\ell$ an eigenvalue of the model with $|\ell|>(d-1)^{1/2}$.

First consider the case where $B$ is $d$-regular Ramanujan and
not bipartite.  Then $\ell=d-1$ is the only eigenvalue of the model with
$|\ell|>(d-1)^{1/2}$.  In view of Lemma~\ref{le_no_coeff_largest_base},
for any $i\in\naturals$,
the left-hand-side of \eqref{eq_expected_near_d} is $O(n^{-i})$
for $\nu=d-1$ and some $r\in\naturals$, which contradicts
\eqref{eq_expected_near_d} for these values of $\nu,r$, and
hence for any smaller $\nu$ and any larger $r$.
Hence $\tau_{\rm alg}=+\infty$.

Next consider the case where $B$ is connected, $d$-regular, Ramanujan,
and bipartite.  Then for any $G\in\Coord_n(B)$, $G$ is also bipartite,
and hence has the same multiplicity of the eigenvalue $d-1$ in $H_G$
as it does that of $-(d-1)$.  Since $H_G$ has one old eigenvalue
of $d-1$ and one of $-(d-1)$ (since $B$ is bipartite and connected),
it follows that
the left-hand-side of \eqref{eq_expected_near_d} is the
same for $\ell=d-1$ and $\ell=-(d-1)$.  
Since \eqref{eq_expected_near_d} cannot hold for $\ell=d-1$ with
$\nu$ arbitrarily close to $(d-1)^{1/2}$ and $r$ arbitrarily large---by
the argument in the previous paragraph---it also cannot hold for
$\ell=-(d-1)$.
Hence $\tau_{\rm alg}=+\infty$ also in this case.
\end{proof}

\section{Bounds on $\tau_{\rm tang}$}
\label{se_tau_tangle}

If $B$ is $d$-regular and Ramanujan, we now know that
Theorem~\ref{th_rel_Alon_regular} holds with
$$
\tau_1=\tau_2=\tau_{\rm tang},
$$
and therefore for fixed $\epsilon>0$ sufficiently small
we have matching upper and lower bounds proportional to 
$n^{-\tau_{\rm tang}}$ for
$$
\Prob_{G\in\cC_n(B)}\bigl[
{\rm NonAlon}_d(G;\epsilon)>0
\bigr] .
$$
It therefore becomes interesting to have bounds on $\tau_{\rm tang}$.
Let us just give those bounds that follow
from \cite{friedman_alon}.

\subsection{Lower Bounds on $\tau_{\rm tang}$ for any Model}

Chapter~6 of \cite{friedman_alon} computes $\tau_{\rm fund}$ of various
models, which is the smallest order of a $\ge (d-1)^{1/2}$-tangle.
By definition, $\tau_{\rm tang}$ is the smallest order of a 
$>(d-1)^{1/2}$-tangle, so the same techniques apply.  

\begin{theorem}\label{th_tau_tang_whole}
Let $B$ be a graph and let $m=m(B)$ be the
smallest integer with
$$
2m-1 > \mu_1^{1/2}(B).
$$
Then for any algebraic model, $\cC_n(B)$,
$\tau_{\rm tang}\ge m-1$, and equality holds if the bouquet
of $m$ whole-loops occurs in $\cC_n(B)$.
In particular, if $B$ is $d$-regular then 
($\mu_1(B)=d-1$ and) $m(B)$ depends only on $d$ and is given
by
\begin{equation}\label{eq_m_d_lower_bound_d}
m=m(d) = \Bigl\lfloor \bigl( (d-1)^{1/2}+1 \bigr)/2 \Bigr\rfloor + 1,
\end{equation} 
and
\begin{equation}\label{eq_tau_tang_lower_bound_d}
\tau_{\rm tang} \ge \Bigl\lfloor\bigl( (d-1)^{1/2}+1 \bigr)/2 \Bigr\rfloor;
\end{equation} 
furthermore, equality holds in the permutation model if
$B$ has a vertex incident upon at least $m$ whole-loops.
\end{theorem}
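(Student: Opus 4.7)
My plan is to derive the whole theorem from a single structural inequality, $\mu_1(S) \le 2\ord(S) + 1$, valid for every connected pruned graph $S$. This is the only substantial input, and I would prove it in two short steps: a Perron--Frobenius bound on the non-negative matrix $H_S$ gives $\mu_1(H_S) \le d_{\max}(S) - 1$ (each directed edge has at most $d_{\max}(S) - 1$ non-backtracking successors, namely the directed edges out of its head vertex other than its own $\iota_S$-reversal), and a degree count in a pruned graph having a vertex of degree $d_{\max}$ yields $d_{\max}(S) \le 2\ord(S) + 2$ since the remaining $\#V_S - 1$ vertices each contribute at least $2$ to the sum $\sum_u \deg(u) = 2\,\#E_S - \#(\text{half-loops}) \le 2\,\#E_S$.

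For the lower bound $\tau_{\rm tang} \ge m-1$, I would take any $S$ occurring in $\cC_n(B)$ with $\mu_1(S) > \mu_1^{1/2}(B)$ and, replacing $S$ by a connected pruned subgraph of maximum spectral radius (neither operation raises $\ord$), apply the structural inequality. If $\ord(S) \le m - 2$, then $\mu_1(S) \le 2m - 3$, which contradicts $\mu_1(S) > \mu_1^{1/2}(B)$ because the minimality of $m$ forces $2(m-1) - 1 = 2m - 3 \le \mu_1^{1/2}(B)$. Hence $\tau_{\rm tang} \ge m - 1$.

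For the equality clause, let $\beta_m$ denote the bouquet of $m$ whole-loops. Then $\ord(\beta_m) = m - 1$, and $\beta_m$ is a one-vertex graph of degree $2m$, giving $\mu_1(\beta_m) = 2m - 1 > \mu_1^{1/2}(B)$; thus $\beta_m$ is a $(>\mu_1^{1/2}(B))$-tangle and, if it occurs in $\cC_n(B)$, it directly witnesses $\tau_{\rm tang} \le m - 1$. The $d$-regular specialization then reduces to the standard value $\mu_1(B) = d - 1$ together with the rearrangement that $m > ((d-1)^{1/2}+1)/2$ has smallest integer solution $\lfloor ((d-1)^{1/2} + 1)/2 \rfloor + 1$.

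For the permutation-model statement, I would exhibit $\beta_m$ as a $B$-subgraph by an elementary positive-probability construction: if $v \in V_B$ carries whole-loops $e_1, \ldots, e_m$, the independent uniform permutations $\sigma(e_j) \in \cS_n$ jointly fix any prescribed $i \in [n]$ with probability $n^{-m} > 0$, and on this event the lifted self-loops at $(v, i)$ form a $B$-subgraph of $G$ isomorphic to $\beta_m$; hence $\beta_m$ occurs in $\cC_n(B)$ for all sufficiently large $n$. I do not anticipate a serious obstacle anywhere: the structural inequality $\mu_1(S) \le 2\ord(S) + 1$ is the only nontrivial ingredient and is essentially the computation of Section~6.3 of \cite{friedman_alon}, while the remaining steps are definitional bookkeeping and a one-line probability calculation.
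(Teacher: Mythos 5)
Your proposal is correct, and the overall skeleton (a structural inequality forcing $\ord(S)\ge m-1$ for any tangle, plus the bouquet of $m$ whole-loops as the extremal witness, plus a positive-probability occurrence argument for the permutation model) matches the paper's. The one genuinely different ingredient is how you establish the key inequality $\mu_1(S)\le 2\ord(S)+1$. The paper gets it by graph surgery: it invokes Lemma~6.7 of \cite{friedman_alon} to contract edges between distinct vertices (which preserves order and does not decrease $\mu_1$) until $S$ becomes a one-vertex graph with, say, $m_0$ whole-loops and $m'$ half-loops, for which $\mu_1=2m_0+m'-1=\ord+m_0\le 2\ord+1$ is computed explicitly. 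You instead prove the inequality in place, via the row-sum bound $\mu_1(H_S)\le d_{\max}(S)-1$ together with the handshake count $d_{\max}(S)+2(\#V_S-1)\le\sum_u\deg(u)\le 2\,\#E_S$ in a pruned graph; both steps check out (including for half-loops, where $\iota_S e=e$ is still the unique excluded successor). Your route is more self-contained and elementary --- it needs no external lemma and no induction on contractions --- whereas the paper's contraction lemma earns its keep in the companion result (Theorem~\ref{th_tau_tang_no_whole}), where whole-loops are forbidden and one needs finer control over which reduced graphs can actually occur; it also identifies the extremal graphs directly. The remaining steps of your plan (passing to a connected pruned subgraph without raising the order, the arithmetic giving \eqref{eq_m_d_lower_bound_d}, and the $n^{-m}>0$ occurrence computation for $\beta_m$ in the permutation model) are all sound, the last being slightly more explicit than the paper, which simply asserts occurrence.
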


\begin{proof}
Lemma~6.7 of \cite{friedman_alon}
shows that if $u,v$ are distinct vertices in a graph, $\psi$, joined by an
edge, $e$, then the graph, $\psi'$ obtained by identifying $u$ and $v$ and
discarding $e$ has the same order as $\psi$ and
satisfies $\mu_1(\psi')\ge \mu_1(\psi)$.
By repeatedly performing this operation on a connected graph, $\psi$,
we get a graph $\psi'$ with one
vertex, of the same order as $\psi$ and with $\mu_1(\psi')\ge \mu_1(\psi)$.
It follows that if $\psi$ is a $(\ge\nu,<r)$-tangle, then so is
$\psi'$.  
If $\psi'$ is $d'$-regular, then since $\psi'$ has one vertex
we have $\mu_1(\psi')=d'-1$ (by the Ihara Determinantal formula\footnote{
  One can also see $\mu_1(\psi')=d'-1$ by noting that
  any non-backtracking walk can be augmented by one step
  that continues the walk to be non-backtracking
  in $d'-1$ ways, and can be made to be
  SNBC in $d'-2$ ways, 
  which shows that $d'(d'-1)^{k-2}(d'-2)\le \SNBC(\psi',k)\le d'(d'-1)^{k-1}$.
}).
If $\psi'$ has $m$ whole-loops and $m'$ half-loops, then
$\ord(\psi')=m+m'-1$ and $d'=2m+m'$.
It follows that if $m=m(B)$ is the smallest integer for which
$$
2 m - 1 > \mu_1^{1/2}(B),
$$
then $\tau_{\rm tang}\ge 2m-1$, and that equality holds if
the graph with $m$ whole-loops occurs in $\cC_n(B)$.
Then \eqref{eq_m_d_lower_bound_d} follows, and therefore
\eqref{eq_tau_tang_lower_bound_d} as well.
\end{proof}


\subsection{Lower Bounds on $\tau_{\rm tang}$ for Models Where
Whole-Loops Do Not Occur}

Similarly we can use the methods of Chapter~6 of \cite{friedman_alon}
to determine $\tau_{\rm tang}$ in cases where no graph with 
whole-loops occurs in $\cC_n(B)$.

\begin{theorem}\label{th_tau_tang_no_whole}
Let $\cC_n(B)$ be an algebraic model over a graph
$B$ such that no graph with one or more whole-loops occurs
in $\cC_n(B)$.
Then if $m'=m'(B)$ is the smallest integer with
$$
m' - 1 > \mu_1^{1/2}(B),
$$
then $\tau_{\rm tang}\ge m'-2$, and equality holds 
if some vertex of $B$ is incident upon $m'+1$ self-loops
(which may be any combination of whole-loops and half-loops).
In particular, for the
full cycle-involution model, of even or of odd degree, $\cC_n(B)$,
we have $m'=m'(B)$ depends only on $d$ and is given by
$$
m'(d) = \bigl\lfloor (d-1)^{1/2} \bigr\rfloor + 2,
$$
and
$$
\tau_{\rm tang} \ge \bigl\lfloor (d-1)^{1/2} \bigr\rfloor.
$$
\end{theorem}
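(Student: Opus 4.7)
The plan is to adapt the contraction strategy employed in the proof of Theorem~\ref{th_tau_tang_whole}, exploiting the hypothesis that no graph in $\cC_n(B)$ contains a whole-loop. Suppose $\psi$ is a graph occurring in $\cC_n(B)$ with $\mu_1(\psi)>\mu_1^{1/2}(B)$; after restricting to a connected component and pruning leaves, we may assume $\psi$ is pruned, connected, and free of whole-loops, and our task is to show $\ord(\psi)\ge m'-2$. Contracting a spanning tree of $\psi$ one edge at a time via Lemma~6.7 of \cite{friedman_alon} produces a single-vertex graph $\psi'$ with $a$ whole-loops and $b$ half-loops, satisfying $\ord(\psi')=\ord(\psi)=a+b-1$ and $\mu_1(\psi')=2a+b-1\ge\mu_1(\psi)>\mu_1^{1/2}(B)$.

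The naive consequence $2a+b\ge m'$ gives only $\ord(\psi)\ge\lceil m'/2\rceil-1$, which is weaker than what we need when $a\ge 2$. To close this gap, the main step of the plan is to establish the refined inequality
\[
\mu_1(\psi)\ \le\ \ord(\psi)+1
\]
for every pruned connected graph $\psi$ with no whole-loops. Granting this, $\ord(\psi)+1\ge\mu_1(\psi)>\mu_1^{1/2}(B)\ge m'-2$, and since $\ord(\psi)$ is an integer we conclude $\ord(\psi)\ge m'-2$, giving the lower bound $\tau_{\rm tang}\ge m'-2$. We plan to prove the refined inequality by induction on $\#V(\psi)$; the base case $\#V=1$ is immediate, since then $\psi$ is a bouquet of $b$ half-loops with $\mu_1=b-1=\ord(\psi)$. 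For the inductive step, if $\psi$ has an edge $e$ that is the unique edge between its two distinct endpoints, then contracting $e$ via Lemma~6.7 preserves the no-whole-loop property while reducing $\#V$, so the inductive hypothesis applies directly.

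For the equality statement, assuming some $v\in V_B$ carries $m'+1$ self-loops, we plan to construct the extremal tangle explicitly: two vertices in the fibre $\{v\}\times[n]$ joined by $m'$ parallel edges, each edge being a lift of one of the self-loops at $v$. The extra $(m'{+}1)$-th self-loop supplies the combinatorial slack needed to handle corner cases such as the forced fixed point of a near-perfect matching when $n$ is odd, ensuring that for large $n$ one can always pick permutation assignments so that $m'$ of the $m'+1$ lifts align on a common pair of fibre points. The resulting $B$-subgraph has order $m'-2$ and $\mu_1=m'-1>\mu_1^{1/2}(B)$, which yields $\tau_{\rm tang}\le m'-2$ and hence equality. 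The $d$-regular specialization follows immediately from $\mu_1(B)=d-1$ and the elementary computation $m'(d)=\lfloor(d-1)^{1/2}\rfloor+2$.

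The main obstacle will be the remaining case of the refined inequality, in which every pair of adjacent distinct vertices of $\psi$ is joined by at least two edges; here every Lemma~6.7 contraction creates at least one whole-loop, so the inductive hypothesis is not directly available. Overcoming this is the technical heart of the argument; two natural routes are either to introduce a refined inductive invariant that tracks whole-loops with a compensating correction term (so that contractions creating whole-loops are still usable) or, in the spirit of several arguments in Chapter~6 of \cite{friedman_alon}, to carry out a direct spectral analysis of $H_\psi$ on the multi-edge structure that bypasses contraction in this case.
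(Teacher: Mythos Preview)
Your overall framework is correct: the key inequality is $\mu_1(\psi)\le\ord(\psi)+1$ for pruned connected $\psi$ without whole-loops, and your Lemma~6.7 contraction handles the case where some adjacent pair is joined by a single edge. The gap is exactly where you flag it, and neither of your suggested routes is what actually works. In particular the ``refined invariant tracking whole-loops'' idea is unpromising: once a Lemma~6.7 contraction produces whole-loops the target inequality can fail by an arbitrary amount (a bouquet of $a$ whole-loops has $\mu_1=2a-1$ but order only $a-1$), so there is no clean correction term.

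The paper fills the gap with two concrete ingredients. First, there is a second reduction you are not using, Lemma~6.9 of \cite{friedman_alon}: if $u,v$ are at distance exactly two via a common neighbour $w$, one identifies $u$ with $v$ and discards one edge to $w$; this preserves the order, does not decrease $\mu_1$, and---crucially---creates no new self-loops. Apply Lemma~6.9 until no distance-two pair remains, then apply Lemma~6.7 on any remaining single edges; the terminal graph $\psi''$ is still whole-loop-free, has the same order, no smaller $\mu_1$, and now \emph{every} pair of distinct vertices (not merely adjacent ones) is joined by at least two edges. Second, for such $\psi''$ on $n\ge 3$ vertices one does a direct edge count rather than a spectral argument: the $\binom{n-1}{2}$ pairs not containing a fixed vertex $v$ absorb at least $2\binom{n-1}{2}$ edges, and since there are no whole-loops each remaining edge contributes at most $1$ to $\deg(v)$; hence $\deg_{\max}\le \#E_{\psi''}-2\binom{n-1}{2}$ and
\[
\mu_1(\psi'')\le \deg_{\max}-1 \le \#E_{\psi''}-(n-1)(n-2)-1 = \ord(\psi'')+1-(n-2)^2 < \ord(\psi'')+1.
\]
Combined with the one- and two-vertex cases you already have, this yields $\mu_1\le\ord+1$ in all cases, with equality attained exactly by the two-vertex graph with $m'$ parallel edges---which is the witness you correctly construct for the equality clause.
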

\begin{proof}
Lemma~6.9 of \cite{friedman_alon} shows that if $\psi$
is graph with two vertices, $u,v$, of distance exactly
two, so that they are both adjacent to some vertex $w$,
then the graph, $\psi'$, obtained by identifying $u,v$ and discarding
one edge from $w$ to $u$ (or to $v$), has
$\mu_1(\psi')\ge \mu_1(\psi)$ (and, of course, $\ord(\psi')=\ord(\psi)$).
Note that this operation doesn't create any new self-loops.
Repeated application of this process yields a graph, $\psi'$,
with the same number of whole-loops and half-loops as in
$\psi$, such that $\psi',\psi$ have the same order and
$\mu_1(\psi')\ge \mu_1(\psi)$.
If $\psi'$ contains a pair of vertices $u,v$ that are joined by only one edge,
then the operation of Lemma~6.7 identifying $u$ and $v$ and
discarding the edge between them (see the proof of 
Theorem~\ref{th_tau_tang_whole} above)
produces a graph with one fewer vertices, the same order,
the same number of whole-loops and half-loops, and
no smaller a $\mu_1$.  Repeated application of this 
process yields a graph $\psi''$ with $\ord(\psi'')=\ord(\psi)$,
the same number of whole-loops and of half-loops as in $\psi$,
and $\mu_1(\psi'')\ge \mu_1(\psi)$.

Now say that $\psi$ is a $(\ge \nu,<r)$-tangle that
occurs in $\cC_n(B)$.  Then $\psi$ has
no whole-loops, and so the process above yields a $\psi''$ that 
is another such tangle and has no whole-loops.
Next we note:
\begin{enumerate}
\item if $\psi''$ has one vertex, then $\psi''$ is a bouquet of 
$d'$ half-loops, and 
$$
\mu_1(\psi'') = d'-1 = \ord(\psi'') ;
$$
\item if $\psi''$ has two vertices and $m$ edges, then
$$
\mu_1(\psi'') = m-1  = \ord(\psi'') + 1
$$
and $\mu_1(\psi'')$ is at most one less than maximum degree of a vertex,
which is at most $m-1$; furthermore this is attained for the
graph with two vertices joined by $m$ edges (i.e., without half-loops);
\item
if $\psi''$ has $n\ge 3$ vertices, then $\mu_1(\psi'')$ is at most
one less than the maximum degree of a vertex, which is at most
$$
\# E_{\psi''} - \binom{n-1}{2} 2 - 1
$$
since each pair of vertices of $\psi''$ is joined by at least two edges
and there are no whole-loops so each edge incident upon a vertex
can contribute at most $1$ to its degree;
hence
$$
\mu_1(\psi'')\le \# E_{\psi''} - (n-1)(n-2)- 1 
$$
and since $\# E_{\psi''} = \ord(\psi'')-\#V_{\psi''}=\ord(\psi'')-n$, we
have
$$
\mu_1(\psi'') \le \ord(\psi'') + n - (n-1)(n-2)- 1
=\ord(\psi'')+1-(n-2)^2 .
$$
It follows that $\mu(\psi'') < \ord(\psi'')+1$ if $n\ge 3$.
\end{enumerate}
It follows that if $m'=m'(B)$ is the smallest integer with
$$
m' - 1 > \mu_1(B),
$$
then
$$
\tau_{\rm tang} = m' - 2,
$$
with equality if the graph with two vertices and $m'$ edges occurs
in $\cC_n(B)$.
\end{proof}

\subsection{The One Vertex Case}

The theorems proven so far (all drawn from \cite{friedman_alon}) 
are sufficient
to determine $\tau_{\rm tang}$ in our basic models for graphs
for a bouquet of $d/2$ whole-loops (for $d\ge 4$ and even)
and a bouquet of $d$ half-loops (for $d\ge 3$),
except for $d$ small (since $m(d),m'(d)$ are of order $d^{1/2}$).
In fact, we easily check that these theorems determine $tau_{\rm tang}$
in all cases except
the full cycle model over a bouquet of $2$ or $3$ whole-loops
(since whole-loops cannot occur in the full cycle model, and
a graph with two vertices joined by, respectively, $3$ or $4$ edges
does not occur in the model).

In \cite{friedman_alon} these cases are dealt with by producing
graphs that occur in these models that match the lower bounds.
For example, the proof of Theorem~6.10 shows that for $3$ whole-loops,
the graph $\psi$ 
with three vertices, $\{v_1,v_2,v_3\}$, where $v_1,v_2$ are
joined by three edges and $v_2,v_3$ by two edges
has order $2$ and $\mu_1\ge\sqrt{6}>\sqrt{5}$; 
since $\psi$ occurs in this model,
we still conclude the bound in Theorem~\ref{th_tau_tang_whole}.
However, the example in the proof of Theorem~6.10 for the full cycle
model over a bouquet
of 2 whole-loops is not sufficient here
(since this example has $\mu_1(\psi)=\sqrt{3}$, while $\tau_{\rm tang}$---as
opposed to $\tau_{\rm fund}$ in \cite{friedman_alon}---looks for $\psi$
with $\mu(\psi_1)> \sqrt{3}$, the inequality required to be
strict\footnote{
  Certainly $\tau_{\rm tang}\le 2$ for the bouquet of $2$ whole-loops
  under the cyclic model: consider
  the graph $\psi$ with four vertices $\{v_1,v_2,v_3,v_4\}$
  where $v_i$ and $v_{i+1}$ are joined by two edges for $i=1,2,3$;
  we claim that $\mu_1(\psi)>\sqrt{3}$ (roughly since each time we 
  are in a middle vertex, i.e., $v_2,v_3$,
  we have three possible choices, and we will encounter side vertices
  less than half the time in a typical non-backtracking walk).
  Since $\ord(\psi)=2$ and $\psi$ occurs in this model.
  We believe that there is no $>\sqrt{3}$ tangle of order $1$ that
  occurs in this model, but there are a few cases to check.
}.

Note that Theorems~\ref{th_tau_tang_whole} and
\ref{th_tau_tang_no_whole} aren't strictly sufficient to
determine $\tau_{\rm tang}$ for an arbitrary bouquet of whole-loops and
half-loops.
However, the method of the proof of Theorem~\ref{th_tau_tang_no_whole}
shows gives a method of bounding the number of vertices, $n$, in a
$(>\mu_1^{1/2}(B))$-tangle, so for a fixed $B$ this would become a
finite procedure.

\subsection{The Case of $B$ with Sufficiently Large Girth}

In this section we note that for fixed $d$,
$\tau_{\rm tang}(B)$ for a $d$-regular graph, $B$, is bounded below
by a function of the girth of $B$ that tends to infinity as the
girth tends to infinity.
It follows that for fixed $d$, one can find $d$-regular graphs
where $\tau_{\rm tang}(B)$ is arbitrarily large.

\begin{theorem}
For a fixed $r\in\naturals$ and a real $\nu>1$, there is a 
$g\in\naturals$ such that any
$(\ge\nu,<r)$-tangle has girth at most $g$.
\end{theorem}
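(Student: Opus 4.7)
The plan is to deduce the statement directly from the finiteness, up to isomorphism, of minimal $(\ge\nu,<r)$-tangles under subgraph inclusion. This finiteness is the fact recalled in the footnote to Subsection~\ref{su_tangles} (see also Article~III and Lemma~9.2 of \cite{friedman_alon}); the strict-inequality analogue fails, as that same footnote makes clear, so the argument I describe truly relies on the weak inequality $\mu_1(\psi)\ge\nu$ used in the definition of a tangle.

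First I would record two elementary observations. The first is that girth is monotone non-increasing under subgraph inclusion: if $\psi'\subset\psi$, every cycle of $\psi'$ is a cycle of $\psi$, so $\text{girth}(\psi)\le\text{girth}(\psi')$. The second is that every $(\ge\nu,<r)$-tangle $\psi$ contains, as a subgraph, some \emph{minimal} $(\ge\nu,<r)$-tangle: the collection of $(\ge\nu,<r)$-tangles contained in $\psi$ is finite (since $\psi$ is finite) and non-empty (it contains $\psi$), so it has a subgraph-minimal member $\psi'$; any proper sub-tangle of $\psi'$ would again lie in this collection and contradict the minimality of $\psi'$, so $\psi'$ is minimal in the absolute sense.

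To conclude, I would let $g$ be the maximum of $\text{girth}(\psi')$ as $\psi'$ ranges over the finitely many isomorphism classes of minimal $(\ge\nu,<r)$-tangles. This maximum is well-defined and finite because no minimal tangle can be a tree: the Hashimoto matrix of any finite tree is nilpotent, so $\mu_1=0<\nu$ for any tree, and trees are never tangles under our hypothesis $\nu>1$. Combining the two observations above, for any $(\ge\nu,<r)$-tangle $\psi$ and any minimal sub-tangle $\psi'\subset\psi$ selected as in step two, we obtain $\text{girth}(\psi)\le\text{girth}(\psi')\le g$, which is the desired bound.

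The only non-routine ingredient in this plan is the finiteness of minimal $(\ge\nu,<r)$-tangles; once that is in hand, what remains is a finite-poset argument together with the monotonicity of girth under subgraphs, so no further technical work is needed. Consequently the main (indeed only) obstacle is invoking---or, if one prefers a self-contained presentation, reproving---the finiteness result of the footnote to Subsection~\ref{su_tangles}.
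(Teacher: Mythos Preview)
Your proposal is correct and follows essentially the same approach as the paper: both invoke the finiteness (up to isomorphism) of minimal $(\ge\nu,<r)$-tangles from Article~III or Lemma~9.2 of \cite{friedman_alon}, observe that each such minimal tangle contains a cycle (since $\nu>1$ rules out trees), and bound the girth of an arbitrary tangle by the maximum cycle length appearing among these finitely many minimal ones via subgraph monotonicity. Your write-up is simply more explicit about the monotonicity of girth and the existence of a minimal sub-tangle, points the paper leaves implicit.
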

\begin{proof}
According to Article~III or Lemma~9.2 of \cite{friedman_alon},
there are a finite number of $(\ge\nu,<r)$-tangles
$\psi_1,\ldots,\psi_s$ such that any
$(\ge\nu,<r)$-tangle contains a subgraph isomorphic to
$\psi_i$ for some $i$.
Since $\nu\ge 1$, each $\psi_i$ contains a cycle of some length $L_i$.
Hence any $(\ge\nu,<r)$-tangle has girth at most $\max_i L_i$.
\end{proof}
Of course, the above proof does not give an explicit estimate
of $g$.

\begin{corollary}
For a fixed $r,d\in\naturals$, there is a $g$ such that if
$B$ is $d$-regular and of girth greater than $g$, then
$\tau_{\rm tang}\ge r$.
\end{corollary}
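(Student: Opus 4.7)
The plan is to apply the preceding theorem with $\nu=(d-1)^{1/2}$, which exceeds $1$ provided $d\ge 3$ (the degenerate cases $d\le 2$ can be handled separately, since $B$ is then a disjoint union of simple components). The theorem then supplies an integer $g=g(r,(d-1)^{1/2})$ such that every $(\ge(d-1)^{1/2},<r)$-tangle has girth at most $g$, and we claim that this very $g$ works for the corollary.

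We argue by contraposition: assume $B$ is $d$-regular of girth strictly greater than $g$, and suppose toward a contradiction that $\tau_{\rm tang}<r$. By Definition~\ref{de_tau_tang} there exists a graph $S$ occurring in $\cC_n(B)$ with $\ord(S)<r$ and $\mu_1(S)>\mu_1^{1/2}(B)=(d-1)^{1/2}$. Since $\mu_1$ of a disconnected graph is the maximum over its connected components while $\ord$ is additive, we may pass to a suitable component and assume $S$ is connected; then $S$ is a $(\ge(d-1)^{1/2},<r)$-tangle, so by the theorem $S$ has girth at most $g$ and hence contains a cycle of some length $L\le g$.

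Next, we exploit the occurrence of $S$ in $\cC_n(B)$: for all sufficiently large $n$, $S$ embeds as a $B$-subgraph of some $G\in\cC_n(B)\subset\Coord_n(B)$, and since $G\to B$ is a covering map, its restriction to $S$ is étale. Étale maps send non-backtracking walks to non-backtracking walks of the same length (the étale condition on the outgoing edges at each vertex is exactly what is required), so the cycle in $S$ of length $L$, traversed once, pushes down to a non-backtracking closed walk of length $L$ in $B$. A routine shortening argument---whenever two interior vertices of this walk coincide, replace the walk by the subwalk between them, which is strictly shorter and still non-backtracking---produces a simple cycle of $B$ of length at most $L\le g$, contradicting $\text{girth}(B)>g$. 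Hence $\tau_{\rm tang}\ge r$.

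The only places that need any real care are the preservation of non-backtracking walks under étale maps and the shortening argument, both in the presence of half-loops and whole-loops that $B$ may carry; these are elementary once the definitions are unpacked, and I do not anticipate a serious obstacle. The entire content of the corollary is thus a direct packaging of the preceding theorem together with the fact that $\text{girth}(S)\ge\text{girth}(B)$ for any graph $S$ admitting an étale map to $B$.
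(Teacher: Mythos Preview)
Your proof is correct and follows essentially the same route as the paper: apply the preceding theorem with $\nu=(d-1)^{1/2}$ to obtain $g$, and use that an \'etale map $S\to B$ forces $\mathrm{girth}(S)\ge\mathrm{girth}(B)$, so no $(\ge(d-1)^{1/2},<r)$-tangle can occur over a $B$ of girth exceeding $g$. The paper phrases the girth comparison slightly more directly---an SNBC closed walk of length $k$ in $\psi$ pushes forward (under the \'etale map) to an SNBC closed walk of length $k$ in $B$, so your shortening step is not really needed---but the content is the same.
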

\begin{proof}
If $\psi$ occurs in any model over $B$, then $\psi$ admits an
\'etale map to $B$ and hence if $\psi$ has an SNBC closed walk of
length $k$, then so does $B$; it follows that the girth of $\psi$
is at least that of $B$.
Hence if $g$ is as in the above theorem
with $\nu=(d-1)^{1/2}$ and $r$ fixed, then the order of any
graph, $\psi$, occurring in any model of $B$ where $B$ is $d$-regular
(and therefore $\mu_1(B)=d-1$) and of girth greater than $g$
has $\ord(\psi)\ge r$.
\end{proof}

\providecommand{\bysame}{\leavevmode\hbox to3em{\hrulefill}\thinspace}
\providecommand{\MR}{\relax\ifhmode\unskip\space\fi MR }
\providecommand{\MRhref}[2]{%
  \href{http://www.ams.org/mathscinet-getitem?mr=#1}{#2}
}
\providecommand{\href}[2]{#2}

\end{document}